\newcommand{\pagebreakk}{}
\newtheorem{theorem}{Theorem}
\newtheorem{lemma}[theorem]{Lemma}
\newtheorem{defn}[theorem]{Definition}
\newtheorem{cor}[theorem]{Corollary}
\newtheorem{obs}[theorem]{Observation}
\newcommand{\maxedge}{\text{\rm maxedge}}
\newcommand{\jlabel}[1]{\label{#1}}
\newcommand{\ballcover}{\text{\rm BallCover}}
\DeclareMathOperator*{\near}{near}
\newcommand{\finalfigfrompage}[2]{%
\begin{center}%
\includegraphics[width=4in,page=#1,trim=0 #2 0 0,clip]{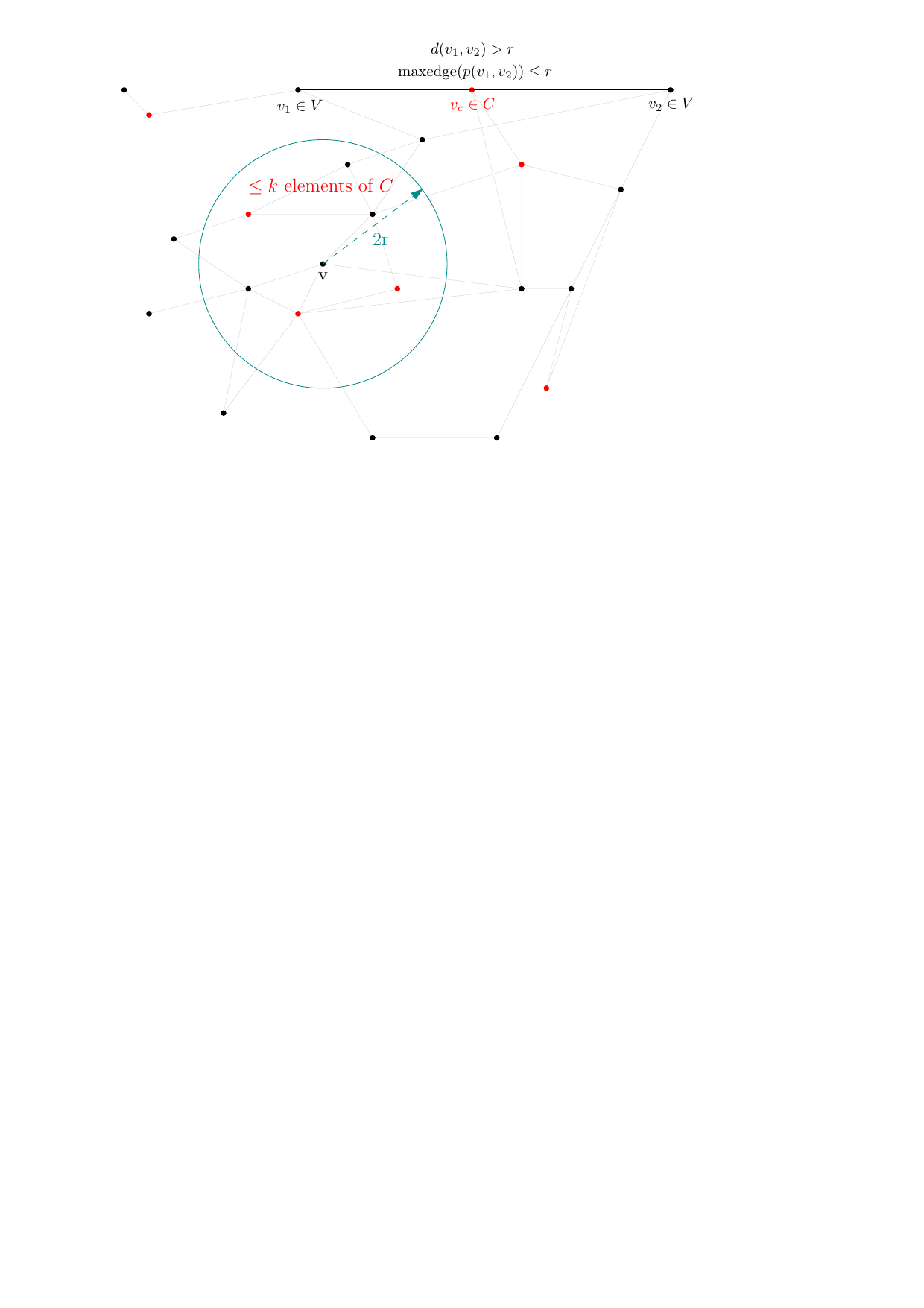}%
\end{center}}
\title{Distances and shortest paths on graphs\\ of bounded highway dimension: \\ simple, fast, dynamic}
\author{Sébastien Collette\thanks{Synapsis Group; work on \href{syty.io}{syty.io} supported and subsidized by \href{Innoviris.brussels}{Innoviris.brussels}. {\tt sco@synapsis-group.com}} \and John Iacono\thanks{Université Libre de Bruxelles \& Synapsis Group; work on \href{syty.io}{syty.io} supported and subsidized by \href{Innoviris.brussels}{Innoviris.brussels}. {\tt john.iacono@ulb.be}}}
\date{}
\begin{document}
\maketitle

\begin{abstract}
Dijkstra's algorithm is the standard method for computing shortest paths on arbitrary graphs. 
However, it is slow for large graphs, taking at least linear time.
It has been long known that for real world road networks, creating a hierarchy of well-chosen shortcuts allows fast distance and path computation, with exact distance queries seemingly being answered in logarithmic time. However, these methods were but heuristics until the work of Abraham et al.~[JACM 2016], where they defined a graph parameter called highway dimension which is constant for real-world road networks, and showed that in graphs of constant highway dimension, a shortcut hierarchy exists that guarantees shortest distance computation takes $O(\log (U+V))$ time and $O(V \log (U+V))$ space, where $U$ is the ratio of the smallest to largest edge, and $V$ is the number of vertices. The problem is that they were unable to efficiently compute the hierarchy of shortcuts. Here we present a simple and efficient algorithm to compute the needed hierarchy of shortcuts in time and space $O(V \log (U+V))$, as well as supporting updates in time $O( \log (U+V))$.
\end{abstract}

\pagebreakk

\section{Introduction}

\hfill \emph{Aut viam inveniam aut faciam}

\hfill \emph{---Hannibal}
\\ \

The shortest path and shortest distance problems are fundamental problems in computer science: 
in the shortest path problem, given an origin and a destination, find the shortest path between them on the graph. For the shortest distance problem one is interested in the length of the shortest path and not the path itself.

The classical solution for general graphs $G=(V,E)$ is Dijkstra's algorithm from the 1950's \cite{DBLP:journals/nm/Dijkstra59} which, when combined with a heap supporting constant time decrease-key operations, such as a Fibonacci heap \cite{DBLP:journals/jacm/FredmanT87}, can compute shortest paths and distances from any single vertex to all other vertices in the graph in time $O(|V| \log |V| + |E|)$.
This algorithm, by computing all paths and distances from a single vertex, appears to be much too big a hammer when all one is interested in is the shortest distance between a single pair of vertices, but yet Dijkstra's is the best known for general graphs unless one is willing to use superquadratic space and preprocessing by, for example, computing all-pairs shortest paths (e.g.~\cite{DBLP:conf/soda/PettieR02}).
Note that here we are only interested in exact distances, not approximate ones, which are the subject of the distance oracles literature which began with \cite{DBLP:journals/jacm/ThorupZ05}.

The slowness of Dijkstra's algorithm is real: to compute a shortest distance on the road map of the world's $136^{th}$ largest country, Belgium, takes half a second on standard hardware. Computing a single shortest distance query for each Belgian on the map of Belgium would take months of computation\footnote{This is a strong statement to make the reader feel how slow Dijkstra is, we can of course do better with parallel processing, clustering, etc. But even on a 30-core processor, computing millions of Dijkstra instances will take days.}, but as the number of Belgians is roughly the same as the graph size, the number of queries is nowhere near enough to make an all-pairs approach worthwhile, especially if the graph changes.

However, road graphs are not general graphs, and over time a number of heuristics have emerged to compute shortest distances and paths radically faster than using Dijkstra on real-world graphs
\cite{
DBLP:conf/dimacs/Lauther06,
DBLP:conf/atmos/BauerDW07,
DBLP:conf/soda/GoldbergH05,
DBLP:conf/alenex/HolzerSW06,
DBLP:conf/wea/DellingW13,
DBLP:journals/jea/SandersS12,
DBLP:conf/alenex/Gutman04,
DBLP:conf/dimacs/BastFM06,
DBLP:journals/transci/GeisbergerSSV12,
DBLP:conf/esa/AbrahamDGW12,
DBLP:conf/wea/AbrahamDGW11} which largely involve in one form or another of adding pre-computed shortcut edges to the graph.
Consider the following method to create a hierarchy of graphs:

\begin{itemize}
\item Define a sequence of sets of vertices $V=C[0] \supset C[1] \supset C[2] \cdots $.
\item The number of vertices at level at least $i$, $C[\geq i]$ should be exponentially decreasing in $i$: $|C[i]| = \Theta( \frac{|V|}{c^i}) $.
\item Construct shortcut graphs $G[i]$, which are graphs that can be used to compute shortest paths among those vertices in $C[i]$; $G[0]=|V|$. See Figure~\ref{fig:11}.
\end{itemize}

\begin{figure}
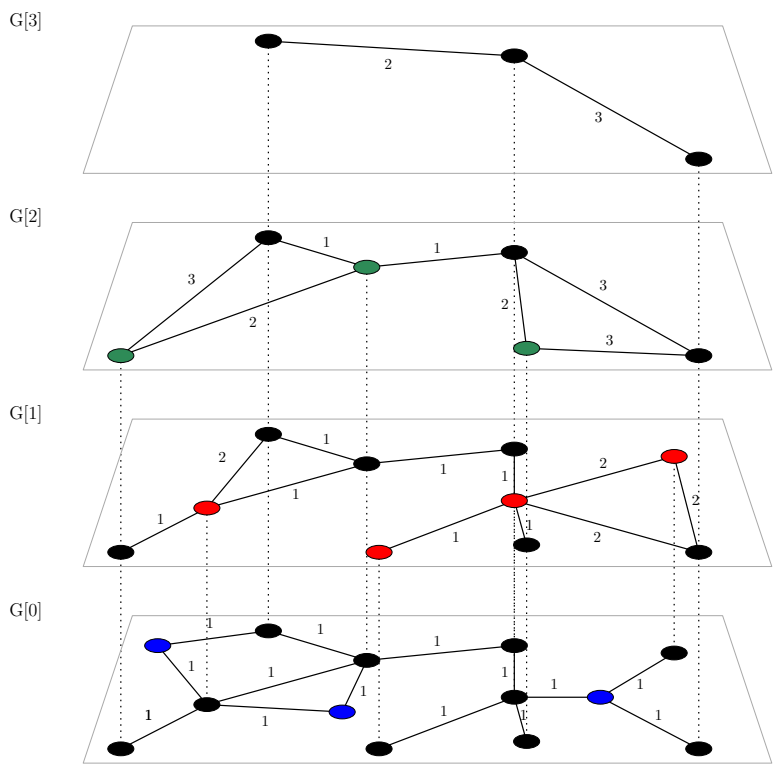

    \centering
    \finalfigfrompage{11}{0}
    \caption{A shortcut graph hierarchy, where the colored vertices are removed to create the next level.}
    \label{fig:11}
\end{figure}

From such a basic structure, which one can think of as a kind of skip list on a graph, one can search for shortest paths that are bitonic, that is from the origin and destination they start at level 0, and only consider paths that continue on the same level or continue to a higher level until the two sides meet in the middle. If such a hierarchy is constructed with nice properties, one can imagine that logarithmic shortest distance queries are envisionable. Such nice properties would involve only needing to follow a constant number of vertices at any level before being assured that one will hit a higher-level vertex, and keeping the degree of the vertices at each level constant. The first condition is easier than the second to obtain, as a random sample will achieve the first condition in expectation but utterly fail in maintaining constant degree even when this is possible though a more considered choice strategy.

How does one compute the shortcut graph $G[i]$ from $G[i-1]$? Start with $G[i]$ and incrementally remove every vertex $v$ in $C[i] \setminus C[i-1]$; when doing so look at all pairs of neighbors $v_1,v_2$ of $v$; if the shortest path from $v_1$ to $v_2$ does not go through $v$ do nothing, otherwise add an edge from $v_1$ to $v_2$ weighted with the sum of the edges $v_1,v$ and $v,v_2$.

\begin{figure}
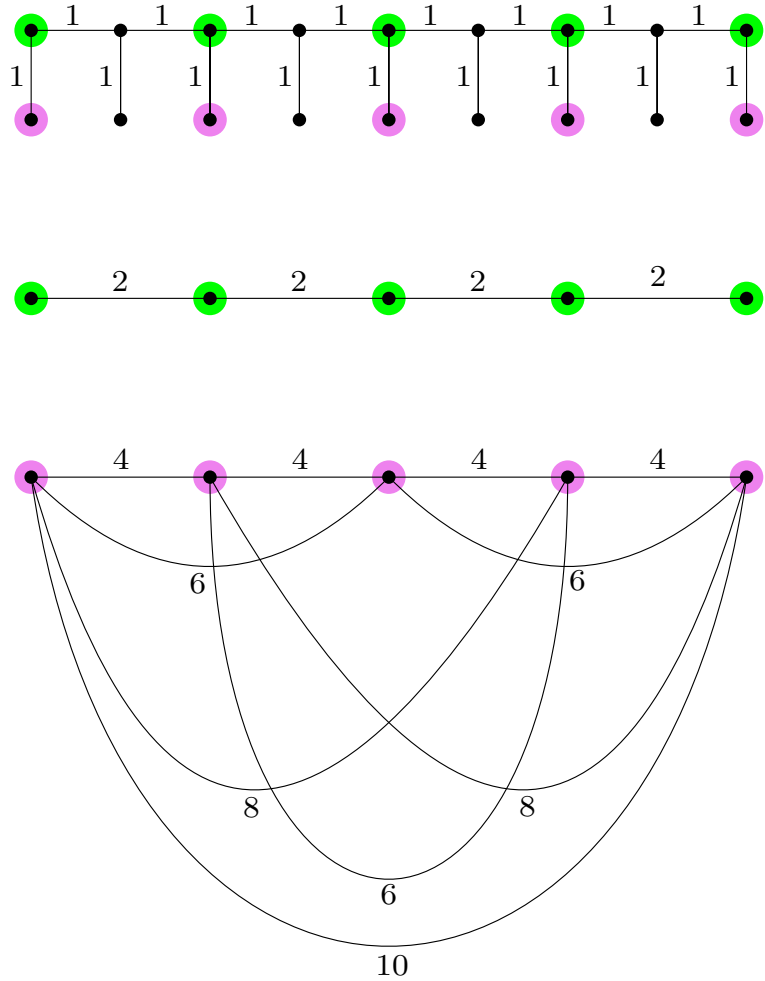

    \centering
    \finalfigfrompage{9}{0}
    \caption{There are bad choices when deciding which vertices should remain in the next shortcut graph. Given the top graph, the shortcut graph of the green vertices is the middle graph, and the magenta vertices is the bottom figure. The middle graph is a path and the bottom a clique.}
    \label{fig:9}
\end{figure}

Now, the crucial point is that the choice of which vertices should be chosen to be in the next level can not be made arbitrarily or randomly. There are bad choices. As illustrated in Figure~\ref{fig:9}, one choice leads to a quadratic-sized, linear degree, and thus useless, shortcut graph and the other leads to a nice-looking linear-sized constant-degree graph.

This figure illustrates the main heuristic used in creating these hierarchies: vertices on more important roads should appear higher up in the hierarchy, and unimportant roads of limited use such as dead ends should remain in low levels only. How can one determine which roads are important and which are not? One can get the information from the map metadata or another source, or attempt to use real-world traffic data to figure it out. However, while this works well, it remains but a heuristic. Also, deciding on the importance of roads in a fixed manner inhibits flexibility and the ability to react dynamically, a gravel road may become vitally important if it can be used to circumnavigate a traffic accident.

In \cite{HD}, heretoforth referred to as the \emph{HD paper}, a major break-though was made. They defined a graph parameter which they called highway dimension, argued that real-world road networks have small highway dimension, and showed that for graphs of small highway dimension there exists a hierarchy of $C[i]$s such that distance queries can be computed in logarithmic time, and shortest path queries take an additional linear term in the path description complexity. 
The formal definition of highway dimension is presented in Definition~\ref{d:hd}, but informally a graph has constant highway dimension, if for every $v$ and $r$, the set of shortest paths of length $r$ that pass close (within $2r$) to $v$ can be covered with a constant number of vertices. Highway dimension has no relation to planarity other than both have a linear number of edges, and constant highway dimension is stronger than constant doubling dimension. Highway dimension neither needs an embedding of the graph nor restricts edge weights to be some distance such as Euclidean.

However, the only problem in the HD paper was they showed only the existence of $C[i]$s that work well, given constant highway dimension. No efficient algorithm to compute them was proposed. We quote from the HD paper, with brackets being our attempt to link their notation to the discussion so far:

\begin{quote}
Although Theorem 4.2 guarantees the existence of good hitting sets [the $C[i]$'s], no polynomial-time algorithm is known for computing the minimum hitting set $H$ used in the proof. As a result, no polynomial-time algorithm is known for computing an optimum multiscale SPHS [hierarchy of $C[i]$s], which are important building blocks of the preprocessing phase of the algorithms we analyze in Sections 5 and 6 [shortcut based algorithms]. Note that in these sections, we analyze the query times assuming optimum (exponential-time) preprocessing.

Section 8 discusses polynomial-time approximation algorithms for SPHS computation, which enable polynomial-time preprocessing with a slight degradation of the query bounds. Although still impractical for large road networks, these polynomial-time preprocessing routines provide some justification for practical variants of preprocessing. These variants run in polynomial time and produce solutions that have no theoretical quality guarantee, but work well on real-world problems.
\end{quote}

Here we show how to compute $C[i]$s that are proven to work efficiently for graphs of constant highway dimension. Our idea, which we motivate further in Section \ref{s:appraoch}, is that a vertex is on an important road if it is in the middle of a long shortest path. We show that this heuristic is theoretically sound and algorithmically efficient. As a bonus, we show that we can not only compute the $C[i]$s efficiently, we show that we can maintain them under edge insertions, deletions and re-weightings. We summarize our result with the following theorem:

\begin{theorem}
Let $G$ be a connected graph with $|V|$ vertices, constant highway dimension, and ratio of smallest-to-largest edge $U$. Our data structure supports
shortest distance queries in time $O(\log (|V|+U))$ and shortest path queries whose result is a path with $p$ edges in time $O(p+\log (|V|+U))$. Our data structure can be computed in $O(|V| \log U)$ time and takes $O(|V| \log U)$ space. The data structure supports dynamic changes to the graph, edge insertions/deletions/reweighting in time $O(\log (|V|+U))$.
\end{theorem}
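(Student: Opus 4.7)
The overall plan is to realize the vertex hierarchy $C[0]\supset C[1]\supset \cdots$ by a single, purely combinatorial rule: place a vertex $v$ in $C[i]$ exactly when it sits sufficiently close to the midpoint of some shortest path whose length lies in the interval $[2^i,2^{i+1})$. Once this rule is fixed I would build the shortcut graphs $G[i]$ inductively via the contraction procedure described in the introduction. The number of levels is $O(\log U)$, since shortest-path lengths range over a factor of at most $|V|\cdot U$, so it only remains to (i) prove the resulting hierarchy enjoys the bounded-look-ahead and bounded-degree properties the HD paper needs in order to yield $O(\log(|V|+U))$-time bitonic queries, (ii) show the hierarchy and its shortcut graphs can actually be built in $O(|V|\log U)$ total time, and (iii) show the structure can be maintained under edge updates in $O(\log(|V|+U))$ time per operation.

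Step (i) is where highway dimension enters. Every level-$i$ vertex is the midpoint of a shortest path of length $\Theta(2^i)$, so the collection of level-$i$ vertices inside a ball of radius $2^i$ around any anchor point is, after a short truncation argument, precisely the type of set that Definition~\ref{d:hd} controls: by constant highway dimension, only $O(1)$ such midpoints can cluster there. This yields simultaneously a constant bound on the degree of $G[i]$ and a constant look-ahead bound when a query climbs from level $i-1$ to level $i$. The geometric shrinking $|C[i]|=\Theta(|V|/c^i)$ then drops out by a charging argument: each level-$i$ vertex is charged to the $\Theta(2^i)$ edges of its witness path, no edge is charged twice at a single level, and the cluster bound prevents too much overlap. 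Summing over $O(\log U)$ levels gives the $O(|V|\log U)$ space bound, and a standard bidirectional search on the hierarchy answers queries by examining $O(1)$ vertices per level, for $O(\log(|V|+U))$ total.

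For step (ii) the crucial observation is that we never need all-pairs shortest paths to decide level membership: a vertex is a midpoint of a path of length in $[2^i,2^{i+1})$ iff two of its neighbours in $G[i-1]$ lie on compatible rays of length $\Theta(2^{i-1})$, which can be tested by a local truncated Dijkstra in $G[i-1]$. Because $|C[i-1]|=O(|V|/c^{i-1})$ and each local search touches only a constant number of vertices by (i), the work at level $i$ is $O(|C[i-1]|)$; the geometric sum across levels is $O(|V|)$, and multiplying by $O(\log U)$ depth of shortcut recomputation matches the claimed preprocessing bound. For step (iii) I would observe that modifying one edge can only change shortest paths whose length it affects by a bounded factor, and by the arguments of (i) only $O(\log(|V|+U))$ hierarchy entries across all levels are affected; these are re-examined and their shortcut edges recomputed locally using the same truncated Dijkstra primitive.

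The main obstacle I anticipate is step (i): making the informal "midpoint of a long shortest path" rule line up precisely with the hitting-set property demanded by the HD paper, so that the degree and look-ahead bounds come out as genuine constants rather than $\operatorname{polylog}$ factors. In particular one must handle ties when a vertex is the midpoint of several paths, and pairs of vertices with multiple shortest paths; I would address this by fixing a canonical midpoint per shortest path via a lexicographic tie-break on vertex identifiers, and by using the half-open length intervals $[2^i,2^{i+1})$ so that each shortest path contributes at exactly one level. With these conventions in place, the hitting-set argument of the HD paper should transfer essentially verbatim, and the three steps above combine to give the theorem.
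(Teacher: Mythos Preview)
Your proposal has a genuine gap at exactly the point you flag as the main obstacle, and it is not a technicality that tie-breaking can fix.

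The rule ``put $v$ in $C[i]$ whenever it is close to the midpoint of some shortest path of length in $[2^i,2^{i+1})$'' does not produce a locally sparse set. On a unit-weight path graph, \emph{every} interior vertex is the exact midpoint of some shortest path of length $\Theta(2^i)$, so your $C[i]$ is all of $V$ at every small level; yet a path has highway dimension $1$. More fundamentally, your appeal to Definition~\ref{d:hd} is a quantifier error: highway dimension asserts that for each ball there \emph{exists} a small hitting set for the long paths inside it. It says nothing about the size of any particular set you construct, and in particular does not bound the number of midpoints landing in a ball. Your edge-charging argument (``no edge is charged twice at a single level'') has the same problem: two distinct length-$\Theta(2^i)$ shortest paths can overlap on $\Theta(2^i)$ edges and still have distinct midpoints, so the charge is not disjoint.

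The paper closes this gap with two ingredients you are missing. First, the selection rule is greedy: one scans the length-$\Theta(8^i)$ paths in $G[i-1]$ and adds a near-midpoint \emph{only if the path is not already hit} by a previously added vertex (Definition~\ref{d:construct}). Second, and this is the paper's main new lemma, local sparseness of the resulting $C[i]$ is proved not by invoking the highway-dimension hitting set directly but via an intermediate object, a \emph{path cover} (Definition~\ref{d:pc}, Lemma~\ref{l:hdpc}): a locally sparse family of subpaths such that every long path contains one of them through its middle. Each vertex the greedy rule adds is then charged to a distinct path of this cover, and the cover's local sparseness gives the $O(1)$ bound. You should also note that for non-unit edge weights the paper augments $C'[i]$ with the endpoints of all edges longer than $8^{i-1}$; without this, a single long edge defeats the covering property, and your proposal does not address this case.
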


We note that the space usage is the same as in the HD paper and is linear for unit-edge-weight graphs. In practice, for open street map data for Belgium, $U\approx 0.6|V|$. With some additional analysis, the space usage of our structure should be able to be shown to be $$O\left(\sum_{v\in V} \log \frac{\text{weight of largest incident edge to $v$}}{\text{weight of smallest incident edge to $v$}}\right);$$ for the Open Street Map data for Belgium this would be $1.8|V|$ inside the $O()$ when using the binary logarithm.

\paragraph{Other work on highway dimension}

Here we briefly review the literature that uses the notion of highway dimension. This literature has a similar flavor to treewidth in that a number of problems have been shown to be more efficiently solvable on graphs of constant highway dimension than on more general graph classes.

This includes:
capacitated vehicle routing \cite{DBLP:journals/talg/JayaprakashS23},
generalized $k$-center
\cite{DBLP:conf/wg/FeldmannV22},
the traveling salesperson problem
\cite{DBLP:journals/algorithmica/DisserFKK21},
clustering \cite{DBLP:journals/jcss/FeldmannS21},
$k$-center problems \cite{DBLP:journals/algorithmica/Feldmann19}, and
embedding of low highway dimension graphs into graph of bounded treewidth
\cite{DBLP:journals/siamcomp/FeldmannFKP18}.
In \cite{DBLP:conf/icalp/0002KV19}, they show that graphs of constant highway dimension have exponentially smaller 3-hopsets than previously known distance oracles (a $k$-hopset is a set of edges added to a graph the does not change shortest path lengths but shortest paths can be computed only using $k$ edges).
In \cite{DBLP:conf/iwpec/0001DF0Z22} it is shown that finding the smallest set of vertices that intersect all paths of a certain length in a graph of constant highway dimension is $W[1]$ hard.
In \cite{DBLP:conf/soda/KosowskiV17}, they introduce a new graph parameter, \emph{skeleton dimension}, and show that hub labeling schemes in graph with low skeleton dimension are more efficient than for those known for graphs with small highway dimension.

\pagebreakk\section{Our approach}
\jlabel{s:appraoch}

Here we describe at a high level our approach. We make one assumption in this section that greatly simplifies our arguments: all edges in $G$ have unit length, perturbed in such a way that shortest paths are unique. Much of the complexity in the proofs in the main part of the paper has to do with making the definitions and proofs work with variable length edges, but our main intellectual contribution can be presented limiting the discussion to perturbed unit-length edges. We give forward pointers to the related claims in the main presentation, keeping in mind that the exact statements often do not match due to the complexities of handling variable-length edges.

We create a hierarchy of graphs called \emph{shortcut graphs $G[i]$ (Definition~\ref{d:shortcut})}  and vertex sets called \emph{vertex covers} $C[i]$ (Definition~\ref{d:vc}) , where $C[i]$ are the vertices of $G[i]$. The base case has $G[0]$ being $G$ and $C[0]$ being $V$. We construct $C[i]$ from $G[i-1]$ incrementally by initializing $C[i]$ to be empty, and then looking at all paths in $G[i]$ of length between $\frac{3}{4}8^i$ and $8^i$, and if these paths currently do not contain a vertex in $C[i]$ add the vertex in $G[i]$ closest to the midpoint of the path to $C[i]$. The shortcut graph $G[i]$ has edges between vertices of $C[i]$ who do not contain another vertex of $C[i]$ on their shortest path; these edges are weighted with the shortest-path distance. We call this the pick-a-vertex-close-to-the-middle-if-there-is-no-vertex-on-the-path-yet method, the full version is in Definition~\ref{d:construct}.

This hierarchy has a number of nice properties, assuming constant highway dimension. 
\begin{itemize}
    \item Shortest path lengths between vertices of $C[i]$ in $G[i]$ are the same as in $G$ (Observation~\ref{o:shortcutfacts}).
    \item Vertex degree in all $G[i]$ is constant (Lemma~\ref{l:degree}).
    \item Any shortest path of length at least $8^i$ in $G$ contains at least one element of $C[i]$ (Lemma~\ref{l:ciscover}).
    \item Edge length in $G[i]$ is at most $8^i$ (by the previous point, and the meaning of a shortcut graph).
    \item The size of the part of $G[i]$ graph within distance $8^i$ of any element of $C[i]$ is constant (Lemma~\ref{l:shorcutsize}).
    \item The total number of levels is $O(\log |V|)$ (Lemma~\ref{l:height}).
    \item The total size of all $G[i]$ and $C[i]$ is $O(|V|)$ (Lemma~\ref{l:size}).
\end{itemize}

These properties ensure that the creation of the $C[i]$'s and $G[i]$ can be done locally by only examining constant-size pieces of $C[i-1]$ and $G[i-1]$ around the parts being created, which gives linear total construction time (Section~\ref{s:computing}). It also gives logarithmic update time, since only a constant sized piece of each $G[i]$ and $C[i]$ needs to be recomputed around any change (Section~\ref{s:dynamic}).

Given this structure, a variant of bidirectional Dijkstra can be used to compute the distance between two vertices using the union of the $G[i]$s proceeding one level at a time, and it can been shown that the shortest distance can be found using a path which is bitonic in which level of $G[i]$ each edge comes from, and only needs to follow a constant number of edges in each $G[i]$. This ensures that the shortest-path distance can be computed in time $O(\log d)$, where $d$ is the distance. 
Through appropriate augmentation, the shortest path can by obtained in linear time in its description by uncompressing the edges in the shortcut graph. 
Everything described except how the $C[i]$s are computed was already shown in the HD paper.

The pick-a-vertex-close-to-the-middle-if-there-is-no-vertex-on-the-path-yet method is the main new idea, and proving that it works requires one key observation. It was known that if a graph has constant highway dimension, then for any $r$ there is a set of points $C$ such that every path of length at least $r$ has an element of $C$ and within distance $r$ of any vertex there are at most a constant number of elements of $C$ (Lemma~\ref{l:hdchd}). We extend this idea to show that there is a set of paths $P$ of length at least $\frac{1}{4}r$ such that each path $p$ of length at least $r$ contains as a subpath a path of $P$ that includes the middle of $p$, and within distance $r$ of any vertex there are at most a constant number of paths of $P$ (Lemma~\ref{l:hdpc}). Intuitively, we show that all long paths not only pass through a point in a locally sparse point set $C$, they pass completely through a path in a locally sparse set of long paths; this makes sense as if one posits long paths all pass through some point on a highway, they are not on the highway just at one point, but typically for a substantial distance. We do not compute this path cover $P$, but use its existence, through a charging argument, to show that the pick-a-vertex-close-to-the-middle-if-there-is-no-vertex-on-the-path-yet method generates a valid cover which has the required local sparseness (Lemma~\ref{l:ciscover}).

We emphasize that the pick-a-vertex-close-to-the-middle-if-there-is-no-vertex-on-the-path-yet method is simple and efficient, and the resultant hierarchy of vertex covers which gives efficient shortest path computation is exactly what was shown to exist in the HD paper but was unable to be computed.

\pagebreakk
\section{Notation and definitions}

In this section we review the definitions surrounding highway dimension, and introduce our concept of a path cover which is the vital new ingredient which is key to making our algorithm work.

\subsection{Notation.}

The input graph $G$ is a weighted connected undirected graph. We require that all shortest paths are unique, and the shortest path between two vertices which are connected by the edge be that edge. (The latter, which we call the requirement that all edges be useful, is required for Lemma~\ref{lem:bigedge} and is in the HD paper as well). For convenience, we require that all edges have at least unit weight. We will also assume that the graph $G$ has constant highway dimension, as defined in Definition~\ref{d:journalhd}; many other lemmas make claims about other quantities being constant which are dependent on this assumption\footnote{The astute reader will notice that all of our claims of $O(1)$ if highway dimension is constant hide a polynomial dependence on the highway dimension; thus if highway dimension were to be logarithmic, these constants would become polylogarithmic in highway dimension. We have chosen to use $O(1)$ in order to reduce the clutter needed to work out the exact dependence though each lemma and thus improve the readability of the presentation. We also have made no attempts to minimize this polynomial dependence, always opting for the cleanest argument.}. 

We begin with basic notation:
\begin{itemize}

    \item $U_G$, the maximum weight edge in $G$
    \item $p_G(v_1,v_2)$. The shortest path from $v_1$ to $v_2$ in $G$, which we have assumed is uniquely determined; Thus if $v_3,v_4$ are on $p_G(v_1,v_2)$, $p_G(v_3,v_4)$ is a subpath of $p_G(v_1,v_2)$.
    \item $d_G(v_v,v_2)$. The distance between vertices is defined as the length of the shortest path, that is, the sum of the weights of each edge.
    \item $\maxedge(p)$. The maximum weight edge on path $p$.
    \item $B_G(v,r) \coloneqq \{v'|d_G(v,v')\leq r\} $. The ball of radius $r$ from $v$, expressed as a set of vertices.
    \item $V(\cdot)$ the vertices of whatever is in the parenthesis: a graph, a path, a set of edges.
    \item $|e|$ is the weight of edge $e$ ; we use interchangeably the terms length and weight.
\end{itemize}

We will occasionally need the continuous view of a weighted graph, where each edge can be viewed as having an infinity of vertices at each distance along the edge: We will always use a hat to denote a vertex that conceptually may be on an edge. We use $\hat{V}_G$ to denote all possible vertices including those on edges. A path $\hat{p}$ in $G$ is the set of all elements of $\hat{V}_G$ in the path, and the ball $\hat{B}_G(\hat{v},r)$ is the union of the paths of length 
$r$ from $\hat{v}$ to elements of $\hat{V}_G$ in $G$.

We omit the $G$ subscripts if there is no risk of ambiguity that $G$ is the input graph.

\subsection{Several flavors of highway dimension.}

Here we present three different variants of the definition of highway dimension. The first, continuous highway dimension, 
was central in the conference version of the HD paper, but is dependent on the continuous view of graphs. The second definition, discrete highway dimension is new here, and allows simple arguments in many proofs to come. The third, highway dimension without an adjective, comes from the journal version of highway dimension, and we present it last as it is the least intuitive.

We show in Section \ref{s:hdf} that graphs of constant highway dimension have constant continuous  and discrete highway dimension (Corollary~\ref{cor:hd}). Thus, assuming constant highway dimension we can use whichever of these definitions is the easiest to work with, which is often the discrete highway dimension.

We begin with the definition central to the conference version of the paper on highway dimension \cite{HDC} which is elegant and easy to understand: 

\begin{defn}\jlabel{d:hd}{\bf Continuous highway dimension, Definition 11.1 of \cite{HD}}
The continuous highway dimension of a weighted graph $G=(E,V)$ is the minimum ${h}$ such that for every $\hat{v}\in \hat{V}$ and every positive $r$, there exists a set $\hat{C}\subseteq\hat{V}$ of at most ${h}$ vertices such that for every $\hat{v}_1,\hat{v}_2 \in \hat{V}$, if $\hat{p}(\hat{v}_1,\hat{v}_2) \subseteq \hat{B}(\hat{v},4r)$ and $d(\hat{v}_1,\hat{v}_2) > r$ then $\hat{p}(\hat{v}_1,\hat{v}_2) \cap \hat{C} \not = \emptyset$.
\end{defn}

Next is our strictly weaker version that we call \emph{discrete highway dimension} that we will find of use, it simply repeats the continuous definition but restricts the vertices under consideration to be vertices of $V$. In Lemma~\ref{l:dhdhd} we show that a graph of continuous highway dimension $h$ has discrete highway dimension at most $h$. However, the converse does not hold as, for example, a single vertex with $k$ incident edges of geometrically increasing weights has constant discrete highway dimension but $\Theta(k)$ continuous highway dimension.

\begin{defn}\jlabel{d:dhd} {\bf Discrete highway dimension.}
The discrete highway dimension of a weighted graph $G=(E,V)$ is the minimum $h$ such that for every $v \in V$ and every positive $r$, there exists a set $C\subseteq V$ of at most $h$ vertices such that for every $v_1,v_2 \in V$, if $V(P(v_1,v_2)) \subseteq B(v,4r)$ and $d(v_1,v_2)> r$ then $V(p(v_1,v_2))\cap C \not = \emptyset$.
\end{defn}

Finally we include the definition of highway dimension used in the journal version, \cite{HD}, which we wave been referring to as the HD paper. We include this last as in our opinion it is the least intuitive and requires first introducing and understanding some additional notation. It was shown in the HD paper with a lemma we restate here as Lemma~\ref{l:hdchd} that continuous highway dimension and highway dimension are constant-factor equivalent; in this way this definition is a stronger discrete version of highway dimension than our discrete highway dimension.

Before giving the definition of highway dimension, we introduce the needed notation:

\begin{itemize}
	\item Given a path $p$ from $v_1$ to $v_2$, an $r$-witness of $p$ is a shortest path $p'$ of length at least $r$ from $v_1$ or a neighbor of $v_1$ to $v_2$ or a neighbor of $v_2$ that contains $p$ as a subpath.
	\item A shortest path $p$ is $r$-significant if it has a $r$-witness.
	\item $\mathcal{P}(r)$ is the set of all $r$-significant paths.
	\item Given a path $p$ and a vertex $v$ the distance $d(v,p)$ is the shortest distance from $v$ to any vertex of $p$.
	\item A path $p$ is $(r,d)$ close to $v$ if it is $r$-significant, and has a $r$-witness path $p'$ such that $d(v,p')\leq d$.
	\item Let $S(v,r)$ be the set of all paths that are $(r,2r)$ close to $v$.
\end{itemize}

\begin{defn}\jlabel{d:journalhd}{\bf Highway dimension, Definition~3.4 of~\cite{HD}.}
The highway dimension of a weighted graph $G=(E,V)$ is the smallest $h$ such that for all $r>0$ and $v\in V$, there is a set $C \subseteq V$ of at most $h$ vertices such that for all $p \in S(v,r)$, $V(p) \cap C \not= \emptyset$.
\end{defn}

\subsection{Definitions}


We begin with the notion of a vertex cover, the novel part of our definition compared to previous work is that paths with long edges are not required to be covered. We will show in Lemma~\ref{l:hdpvc} that graphs of constant highway dimension have vertex covers.

\begin{defn} \jlabel{d:vc} {\bf Vertex cover.}
A $(r,k)$ vertex cover of $G=(V,E)$ is a set of vertices $C$ such that 
\begin{enumerate}
\item 
all shortest paths $p(v_1,v_2)$, $v_1,v_2 \in V$, of length $d(v_1,v_2) > r $ and $\maxedge(p(v_1,v_2))
\leq r$ have  some $v_c\in C$ where $v_c \in V(p(v_1,v_2))$ and 
\item 
for all $v \in V$ $|B(v,2r) \cap C|\leq k$.
\end{enumerate}
\end{defn}

Figure~\ref{fig:1} illustrates the vertex cover definition.

\begin{figure}
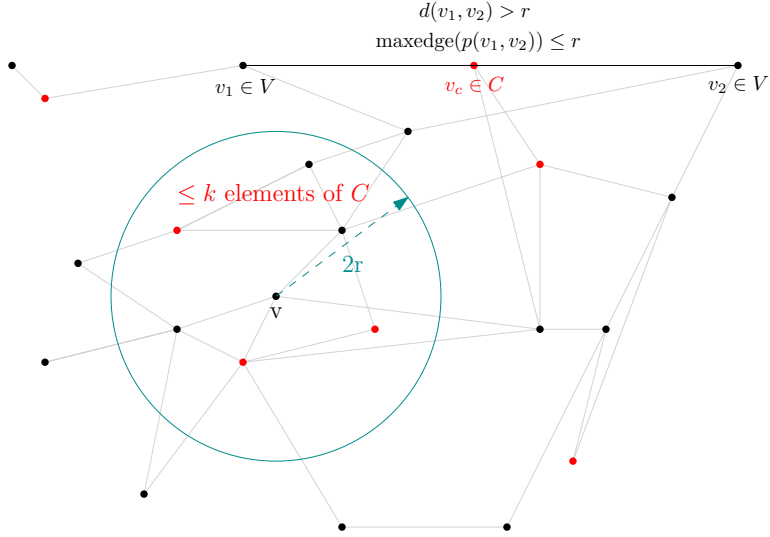

    \centering
    \finalfigfrompage{1}{0}
    \caption{Illustration of a vertex cover in red.}
    \label{fig:1}
\end{figure}


The notion of a path cover expressed here is novel. We will show in Lemma~\ref{l:hdpc} that graphs of constant highway dimension have path covers.

\begin{defn} {\bf Path cover.} \jlabel{d:pc}
A $(r,\eta)$ path cover of $G=(V,E)$,
is a set of shortest paths $P$ between vertices in $V$, each of length from $\frac{1}{4}r$ to $r$ such that 
for any $v_1,v_2\in V$ where the distance $
\frac{1}{2}r\leq d(v_1,v_2)\leq r$ and $\maxedge(p(v_1,v_2)) \leq \frac{1}{8}r$, $p(v_1,v_2)$ contains as a subpath some path $p(v_3,v_4) \in P$, where 
$d(v_1,v_3)$ and $d(v_4,v_2)$ are both at most $\frac{1}{8}r$.

Furthermore, any ball of radius $4 r$ from any $v \in V$ contains at least one vertex of at most $\eta$  paths in $P$.
\end{defn}

Figure~\ref{fig:2} illustrates the definition of a path cover.

\begin{figure}
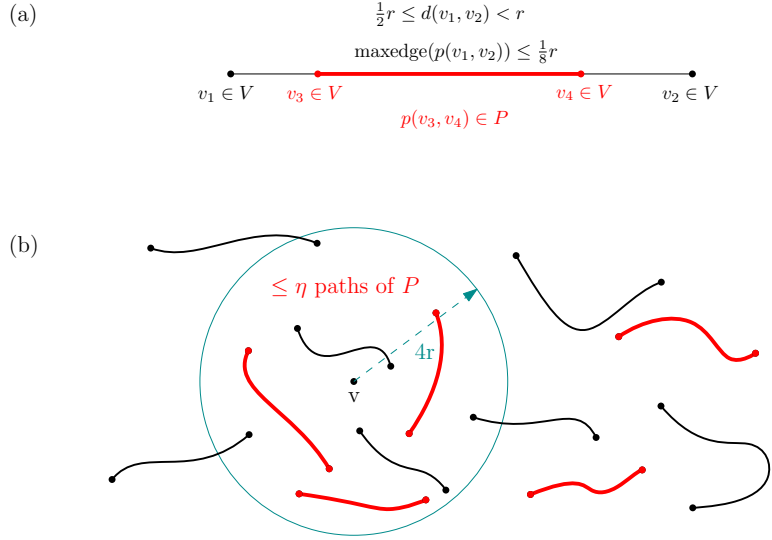

    \centering
    \finalfigfrompage{2}{0}
    \caption{Illustration of a the definition of a path cover. The red paths are the path cover.}
    \label{fig:2}
\end{figure}

Here we present our definition of a shortcut graph. As in the definition of a vertex cover, we give an upper bound on the length of a path for a shortcut edge to appear, this differs from previous work.

\begin{defn}\jlabel{d:shortcut}
 {\bf Shortcut graph, see Section~6.1 of \cite{HD}.}
Given $G=(V,E)$ and $C\subseteq V$, the $r$-shortcut graph of $C$ is $G(C,r)=(C,E')$ where there is an edge from $v_1$ to $v_2$ in $E'$ of weight $d$ iff $d=d_G(v_1,v_2)$, $d\leq r$, and $p_G(v_1,v_2)$ does not contain any elements of $C$ other than $v_1$ and $v_2$.
\end{defn}

The definition of sparse shortest path hitting set (SPHS) is from the HD paper \cite{HD}. In spirit the SPHS, the SPC from the conference version of the HD paper \cite{HDC}, and our vertex covers (Definition~\ref{d:vc}) are all the same, but differ in important details. Despite finding the SPHS more cumbersome than our vertex covers, due to the use of $\mathcal{P}(r)$ which has a relatively complex definition, we include the definition so that we can show how our vertex covers combined with long edges give a SPHS (Lemma~\ref{l:vclesphs}), and thus immediately apply those lemmas of \cite{HD} that apply to SPHSs to the output of our algorithm.

\begin{defn}\jlabel{d:sphs}
{\bf Sparse shortest path hitting set (SPHS)}
An $(r,h)$-SPHS is a set of vertices $C \subseteq V$ such that \begin{enumerate}
	\item for all $p$ in $\mathcal{P}(r)$, $C\cap V(p) \not = \emptyset$ and
	\item for all $v '\in V$, $|C \cap B(v,2r)|\leq h$.
\end{enumerate}
\end{defn}

\pagebreakk
\section{Lemmas about the definitions}

In this section we establish a number of facts about the definitions of the previous section, and in particular for graphs of constant highway dimension. 

\subsection{Facts about highway dimension}

\jlabel{s:hdf}

Here we show that graphs of highway dimension have constant degree, and thus $|E|=\Theta(|V|)$ (recall that we only consider connected graphs), and then establish the relationships between the various variants of highway dimension.

\begin{lemma}\jlabel{l:hddegree}{\bf Degree bound, Lemma 3.5 of \cite{HD}.}
A graph of continuous highway dimension ${h}$ has maximum degree ${h}$.
\end{lemma}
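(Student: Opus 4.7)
The plan is to exhibit, for any vertex $v$ of degree $d$, a family of $d$ pairwise vertex-disjoint shortest subpaths contained in a tiny ball around $v$, each of length strictly greater than a chosen radius $r$. Continuous highway dimension will then force the cover $\hat C$ promised by Definition~\ref{d:hd} to contain at least one vertex per subpath, yielding $d \leq h$.

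Concretely, I would fix $v$ with incident edges of weights $\ell_1, \dots, \ell_d$ and pick any $r > 0$ with $2r < \min_i \ell_i$ (this is the only place the finiteness of the degree and positivity of edge weights are used). Applying Definition~\ref{d:hd} with $\hat v := v$ yields $\hat C \subseteq \hat V$ of size at most $h$. For each incident edge, I would take a sub-edge lying in the ``middle'' of that edge---say, from the point at distance $r/2$ from $v$ to the point at distance $2r$ from $v$. Each such sub-edge has length $3r/2 > r$ and every point on it lies within distance $2r$ of $v$, so it is contained in $\hat B(v, 4r)$; the covering hypothesis therefore forces $\hat C$ to intersect it.

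To finish, observe that the $d$ sub-edges are pairwise disjoint as subsets of $\hat V$: they live on distinct incident edges, and all avoid $v$ itself since every point on them is at distance at least $r/2$ from $v$. Covering them therefore costs $d$ distinct vertices of $\hat C$, giving $d \leq |\hat C| \leq h$. The one step that needs care is justifying that each chosen sub-edge really is \emph{the} shortest path between its two endpoints in $\hat V$, which is precisely what the useful-edges assumption buys us: any alternative route would have to leave the edge via $v$ and re-enter via a different incident edge, contributing a length of at least $\min_j \ell_j > 2r$, which already exceeds the sub-edge's length $3r/2$. I do not foresee any other obstacles.
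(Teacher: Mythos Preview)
Your argument is correct and is essentially the paper's proof: pick $r$ small relative to the shortest incident edge, take one sub-segment per incident edge that avoids $v$, has length exceeding $r$, and lies in $\hat B(v,4r)$, and conclude by disjointness. The only cosmetic differences are the specific constants (the paper uses $4r=\min_i\ell_i$ and sub-segments from distance $r$ to $3r$) and that you spell out why each sub-segment is a shortest path via the useful-edges assumption, which the paper leaves implicit.
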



\begin{figure}
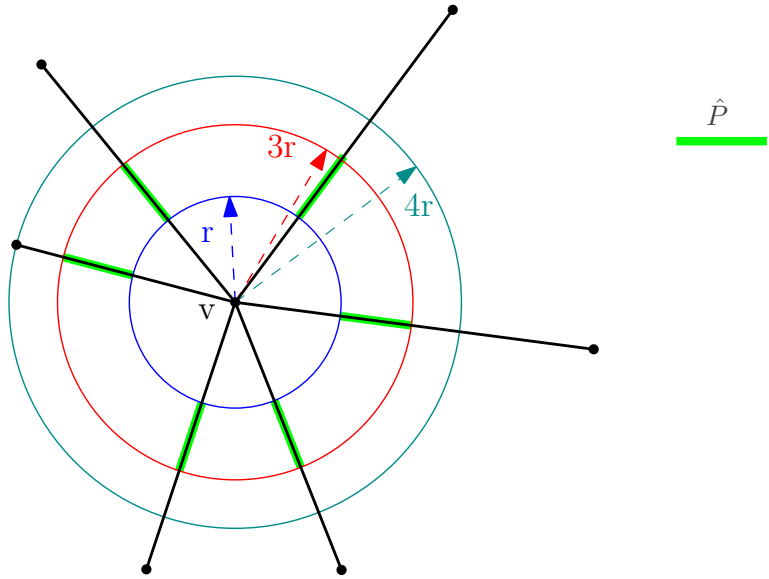

    \centering
    \finalfigfrompage{3}{0}
    \caption{Illustration of the proof of Lemma~\ref{l:hddegree}.}
    \label{fig:3}
\end{figure}

\begin{proof}
See Figure~\ref{fig:3}.
Consider a vertex $v$ with degree $d$ and shortest incident edge of size $4r$. Define a set of $d$ paths $\hat{P}$ on each incident edge of $v$ from distance $r$ to distance $3r$ from $v$. As these paths are disjoint, of length greater than $r$, and inside $B(v,4r)$, the highway dimension must be at least $h$.
\end{proof}

\begin{lemma} \jlabel{l:dhdhd}
{\bf Continuous highway dimension implies discrete highway dimension.}
A graph with continuous highway dimension $h$ has discrete highway dimension at most $h$.
\end{lemma}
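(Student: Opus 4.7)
The plan is to obtain a discrete hitting set by starting from a continuous one and rounding each continuous hit point to a vertex, in a way that preserves coverage. Fix $v\in V$ and $r>0$. Since $V\subseteq \hat V$, I can apply the continuous highway dimension guarantee at $\hat v = v$ and this same $r$ to obtain a set $\hat C\subseteq \hat V$ with $|\hat C|\le h$ that intersects every continuous shortest path contained in $\hat B(v,4r)$ of length greater than $r$. I will turn $\hat C$ into a set $C\subseteq V$ of no larger cardinality, and show that it serves as a discrete hitting set.

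The rounding rule is the obvious one. Process each $\hat c\in \hat C$: if $\hat c$ is already a vertex of $V$, add it to $C$; otherwise $\hat c$ lies strictly in the interior of a unique edge $(a,b)\in E$, and I add, say, $a$ to $C$. This gives $|C|\le |\hat C|\le h$.

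To verify the discrete covering condition, take any $v_1,v_2\in V$ with $V(p(v_1,v_2))\subseteq B(v,4r)$ and $d(v_1,v_2)>r$. The continuous path $\hat p(v_1,v_2)$ is just the union of the edges of $p$. Assuming $\hat p(v_1,v_2)\subseteq \hat B(v,4r)$, the continuous property of $\hat C$ yields a point $\hat c\in \hat p\cap \hat C$. If $\hat c\in V$, then $\hat c\in V(p)\cap C$ and we are done. Otherwise $\hat c$ lies strictly inside some edge $e=(a,b)$; because all edges are useful, a shortest path containing an interior point of $e$ must contain the entire edge $e$, so both $a$ and $b$ belong to $V(p)$. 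In particular, the endpoint $a$ that I placed in $C$ lies in $V(p)\cap C$.

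The main subtlety, and the only thing that is not completely mechanical, is the step that passes from vertex containment $V(p)\subseteq B(v,4r)$ of the discrete path to continuous containment $\hat p(v_1,v_2)\subseteq \hat B(v,4r)$ of the same path. Since consecutive vertices of $p$ lie within $4r$ of $v$, each edge of $p$ starts and ends inside $\hat B(v,4r)$, and I expect to close this gap using the fact that each edge is useful (i.e.\ equals the shortest path between its endpoints), so that a point interior to an edge of $p$ inherits its distance to $v$ from the two endpoints in a controlled way; in the regime where a discrete path's vertices lie inside $B(v,4r)$ but the path as a continuous object bulges outside, one can replace $\hat v=v$ by a slightly larger working radius while still recovering a hitting set of size $h$. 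I expect this to be where any friction in the write-up lives, while the rounding-to-a-vertex argument above carries the essential content of the lemma.
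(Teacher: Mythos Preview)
Your approach is essentially the paper's: fix $v$ and $r$, take the continuous hitting set $\hat C$, and round each $\hat c$ to an endpoint of the edge it lies on. The only cosmetic difference is the rounding rule---the paper defines $\near(v,\hat c)$ as the last vertex of $V$ on the shortest path from $v$ to $\hat c$ (so necessarily one of the two endpoints of the edge containing $\hat c$), whereas you pick an endpoint arbitrarily; since the discrete definition does not require $C\subseteq B(v,4r)$, either choice works.

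Regarding the subtlety you flag about passing from $V(p)\subseteq B(v,4r)$ to $\hat p\subseteq \hat B(v,4r)$: the paper does not address this either---it simply writes ``By continuous highway dimension, there is some $\hat c\in\hat C$ on $\hat p(v_1,v_2)$'' and moves on. So your write-up is already at least as careful as the paper's, and your proposed patch (enlarging the radius) is not something the paper invokes; you can drop that paragraph. Also, your appeal to ``all edges are useful'' is unnecessary: once $\hat c$ lies in the interior of an edge $e$ on the continuous path between two genuine vertices $v_1,v_2$, that edge is automatically an edge of $p(v_1,v_2)$, so both its endpoints lie in $V(p)$.
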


\begin{figure}
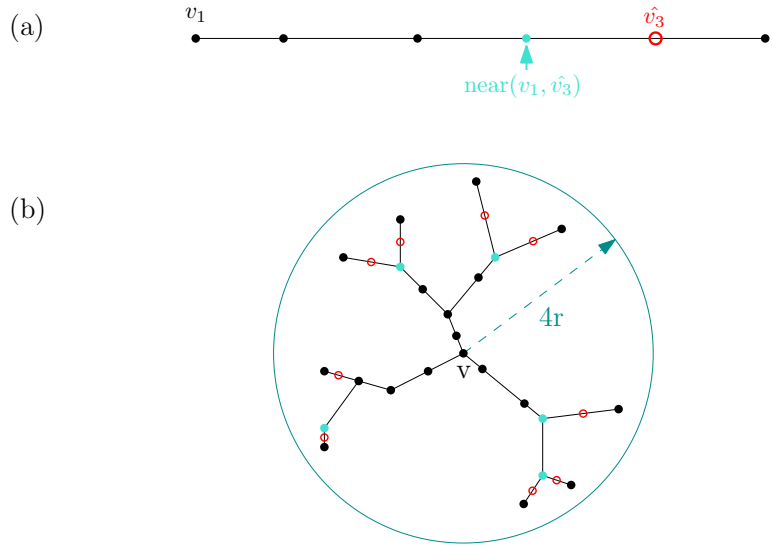

    \centering
    \finalfigfrompage{4}{0}
    \caption{Illustration of the proof of Lemma~\ref{l:dhdhd}. In (b), the cyan vertices are the set $C$ and the red circles indicate elements of $\hat{C}$, which may lie on edges.}
    \label{fig:4}
\end{figure}

\begin{proof}
Fix a $v \in V$ and $r$. See Figure~\ref{fig:4}. Given a vertex $\hat{v_3} \in \hat{B}(v,4r)$, define $\near(v,\hat{v_3})$ to be the last vertex in $V$ on the shortest path from $v$ to $\hat{v_3}$; if $\hat{v_3} \in V$, then $\near(v,\hat{v_3})=\hat{v_3}$ and in general $d(v,\near(v_3)) \leq {d}(v,\hat{v}_3)$.

Let $\hat{C}$ be the set of size at most $h$ that must exist by continuous highway dimension, and let $C \coloneqq\bigcup_{\hat{c} \in \hat{C}}\near(v,\hat{c})$. Note that since $\hat{C} \subseteq \hat{B}(v,4r)$, $C \subseteq B(v,4r)$. Also note $|C|\leq |\hat{C}|\leq h$.

Consider any $v_1,v_2 \in V$ such that $V(P(v_1,v_2)) \subseteq B(v,4r)$. By continuous highway dimension, there is some $\hat{c}\in\hat{C}$ on $\hat{p}(v_1,v_2)$. Since if $\hat{c} \not = \near(v,\hat{c})$ then $\near(v,\hat{c})$ is simply one of the two vertices of the edge $\hat{c}$ lies on, $\near(v,\hat{c})$ is also an element of $V(p(v_1,v_2))$ and $C$.
 \end{proof}

\begin{lemma}\jlabel{l:hdchd}
{\bf Highway dimension and continuous highway dimension are factor-2 equivalent (Theorem 11.3 of \cite{HD})}
If $G$ has highway dimension $h$ and continuous highway dimension $h'$, then $h \leq h' \leq 2h$.
\end{lemma}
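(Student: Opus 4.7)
The plan is to prove the two inequalities separately, and in both directions the core technical move is the same: convert between paths in the continuous model and genuine shortest paths between $V$-vertices by absorbing at most one edge at each end.

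For the direction $h\leq h'$, I would fix $v\in V$ and $r>0$ and apply continuous highway dimension at $v$ with radius $r$ to get $\hat{C}\subseteq\hat{V}$ of size at most $h'$. I would then discretize $\hat{C}$ into $C\subseteq V$ by replacing each $\hat{c}$ lying on an edge $(a,b)$ by one of $a,b$ (for concreteness, $\near(v,\hat{c})$ as in the proof of Lemma~\ref{l:dhdhd}), so that $|C|\leq h'$. The remaining task is to show that every $p\in S(v,r)$ contains a vertex of $C$. Given such $p$, its $r$-witness $p'$ has length at least $r$ and by the definition of $S(v,r)$ some vertex $w$ of $p'$ satisfies $d(v,w)\leq 2r$; I would truncate $p'$ to a subpath of length just over $r$ that still contains $p$ and lies inside $\hat{B}(v,4r)$ (exploiting that $p'$ extends $p$ by at most one edge on each side, so the truncation can be chosen around $p$). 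Continuous highway dimension then hits this subpath at some $\hat{c}\in\hat{C}$; if $\hat{c}$ sits on $p$ itself we are done, and if $\hat{c}$ sits on one of the at-most-two extension edges, then the $\near$-choice was an endpoint of $p$, so again a vertex of $C$ lies on $V(p)$.

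For the direction $h'\leq 2h$, I would fix $\hat{v}\in\hat{V}$ and $r>0$. Let $u_1,u_2\in V$ be the endpoints of the edge of $G$ containing $\hat{v}$ (with $u_1=u_2=\hat{v}$ if $\hat{v}\in V$), and apply (journal) highway dimension at each of $u_1$ and $u_2$ with parameter $r$, producing $C_1,C_2\subseteq V$ with $|C_1|,|C_2|\leq h$; set $\hat{C}:=C_1\cup C_2$, so $|\hat{C}|\leq 2h$. Given a continuous shortest path $\hat{p}(\hat{v}_1,\hat{v}_2)\subseteq \hat{B}(\hat{v},4r)$ of length greater than $r$, I would pass to its discrete core $p(v_1,v_2)$, where $v_1,v_2\in V$ are the first actual vertices encountered along $\hat{p}$ from each end, and then extend by one more edge on each side to obtain an $r$-witness $p^*$ of $p$; thus $p\in\mathcal{P}(r)$. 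Using that every point of $\hat{p}$ is within $4r$ of $\hat{v}$ and that either $u_1$ or $u_2$ is within $r$ of $\hat{v}$ along the edge, the triangle inequality yields $d(u_i,p^*)\leq 2r$ for at least one $i$, so $p\in S(u_i,r)$. Journal highway dimension then guarantees a vertex of $C_i$ on $V(p)\subseteq V(\hat{p})$, and this vertex certifies that $\hat{C}$ hits $\hat{p}$.

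The main obstacle in both halves is the edge-extension bookkeeping: in the first direction I must argue that the continuous hit either lands on the original path or else on a boundary edge whose $\near$-endpoint is one of $p$'s endpoints, and in the second direction the whole point of the factor of $2$ is that $\hat{v}$ may lie strictly inside an edge, forcing us to pay for both endpoints when bounding $d(\hat{v},u_i)$ in the triangle inequality. Once the witness/truncation interplay is set up carefully and one keeps straight which path the hit is claimed on, the rest is routine.
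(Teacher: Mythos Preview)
The paper does not actually prove this lemma: immediately after the statement it says ``We omit the proof and refer the interested reader to Theorem~11.3 of \cite{HD}.'' So there is no in-paper argument to compare your proposal against; the authors simply import the result from the HD paper.

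Your sketch follows the natural two-direction strategy one finds in the cited reference, and at the level of a plan it is fine. Two places where the bookkeeping you flag as ``the main obstacle'' genuinely needs work, not just care: in the direction $h\le h'$, you need the truncated continuous subpath to simultaneously have length $>r$, contain $p$, and sit inside $\hat{B}(v,4r)$; the only vertex of $p'$ you know to be within $2r$ of $v$ may not be on $p$ itself, so you must argue that a length-$r$ window around that vertex still covers enough of $p$ for the $\near$-discretization to land on $V(p)$. In the direction $h'\le 2h$, extending the discrete core $p(v_1,v_2)$ by one full edge on each side does not automatically yield a shortest path, which the definition of an $r$-witness requires; you must instead argue that some choice of endpoint in $\{v_1,w_1\}\times\{v_2,w_2\}$ gives a shortest path of length $\ge r$ containing $p$ (this is where uniqueness of shortest paths and the fact that $\hat p$ itself is shortest get used). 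These are exactly the technical points that make the proof in \cite{HD} longer than your paragraph suggests, but the overall architecture you describe is the standard one.
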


We omit the proof and refer the interested reader to Theorem 11.3 of \cite{HD}.

\begin{cor} \jlabel{cor:hd} {\bf Constant highway dimension implies constant discrete highway dimension and continuous highway dimension.}
	Note that as we have assumed graph $G$ has highway dimension $O(1)$, by Lemmas~\ref{l:dhdhd} and~\ref{l:hdchd} we may assume that $G$ has discrete highway dimension and continuous highway dimension $O(1)$.
\end{cor}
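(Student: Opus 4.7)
The plan is to simply chain the two preceding lemmas together, since the corollary is essentially a bookkeeping statement combining them with the assumption that $G$ has highway dimension $O(1)$. First I would invoke Lemma~\ref{l:hdchd}, which tells us that if the highway dimension is $h$, then the continuous highway dimension $h'$ satisfies $h' \leq 2h$. Thus $h = O(1)$ immediately gives $h' = O(1)$.

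Next, I would feed this bound into Lemma~\ref{l:dhdhd}, which states that the discrete highway dimension is at most the continuous highway dimension. Combined with the previous step, this yields that the discrete highway dimension is also $O(1)$. At that point both conclusions of the corollary are in hand, and no further calculation is required.

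There is no real obstacle here: the content of the corollary is entirely contained in the conjunction of Lemmas~\ref{l:dhdhd} and~\ref{l:hdchd}, and the only thing to verify is that the quantifiers and constants line up. The single subtle point worth flagging explicitly is that the hidden $O(1)$ constants in the three notions of highway dimension differ by at most a factor of $2$ (from Lemma~\ref{l:hdchd}); this matches the footnote earlier in the paper acknowledging a polynomial dependence on the highway dimension throughout, so no additional care is needed.
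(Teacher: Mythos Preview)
Your proposal is correct and matches the paper's own reasoning exactly: the corollary in the paper is stated without a separate proof because the chain Lemma~\ref{l:hdchd} (highway dimension $h$ implies continuous highway dimension $\leq 2h$) followed by Lemma~\ref{l:dhdhd} (continuous highway dimension bounds discrete highway dimension) is the entire argument, and you have reproduced it.
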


\pagebreakk
\subsection{Facts about a vertex cover}

Here we show that graphs of constant highway dimension have a vertex cover. This is similar in spirit to the proofs found in the HD papers, but the proof is completely different due the the subtleties of our definition of vertex cover.

\begin{lemma} \jlabel{l:hdpvc}
{\bf Discrete highway dimension implies vertex cover.}
Every graph of discrete highway dimension $h$ has a $(r,h)$  vertex cover for every $r$.
\end{lemma}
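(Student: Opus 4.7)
The plan is to build $C$ as a union of local discrete-highway-dimension covers indexed by a net of vertices. I take $N \subseteq V$ to be a maximal set of vertices pairwise at distance greater than $r$; by maximality every $v \in V$ has some $w \in N$ with $d(v,w) \leq r$. For each $w \in N$ I apply Definition~\ref{d:dhd} at $w$ with scale $r$ to get $C_w \subseteq V$ of size at most $h$ meeting every shortest path of length greater than $r$ whose vertex set lies in $B(w,4r)$, and I set $C \coloneqq \bigcup_{w \in N} C_w$.

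For condition~1 of Definition~\ref{d:vc}, I take any shortest path $p(v_1,v_2)$ with $d(v_1,v_2)>r$ and $\maxedge(p(v_1,v_2))\leq r$ and walk from $v_1$ along the path to the first vertex $u$ with $d(v_1,u)>r$. Since the last edge crossed has weight at most $r$, $d(v_1,u)\leq 2r$ and the subpath $p(v_1,u)$ lives in $B(v_1,2r)$; choosing any net point $w$ with $d(v_1,w)\leq r$ places $p(v_1,u)$ inside $B(w,3r)\subseteq B(w,4r)$, and then by construction $C_w \subseteq C$ already contains a vertex of $p(v_1,u)$ and hence of $p(v_1,v_2)$. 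The $\maxedge\leq r$ hypothesis is used precisely to bound $d(v_1,u)$ by $2r$, which is why the vertex cover definition can afford to restrict to such paths.

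For condition~2, I fix $v \in V$ and note that any $c \in C_w \cap B(v,2r)$ forces $d(w,v)\leq 4r+2r=6r$ by the triangle inequality, so only net points in $B(v,6r)$ can contribute to $B(v,2r)\cap C$. These net points are pairwise more than $r$ apart, and since constant highway dimension implies constant doubling dimension (combining Lemma~\ref{l:hddegree} with the standard doubling-from-HD argument of the HD paper), their count is $\mathrm{poly}(h)$; multiplying by the $h$ bound on each $C_w$ gives $|B(v,2r)\cap C| \leq k$ for some $k=\mathrm{poly}(h)$, matching the $h$ written in the lemma up to the polynomial-in-$h$ slack explicitly sanctioned by the paper's footnote about hidden polynomial dependence. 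The main obstacle will be exactly this sparsity count: condition~1 follows almost immediately from discrete highway dimension once the subpath extraction is done, but controlling how many local covers can leak into a small ball is what forces both the choice of net scale and an appeal to doubling-type machinery that is not part of the definition of discrete highway dimension itself.
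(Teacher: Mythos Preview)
Your construction is genuinely different from the paper's, and the sparsity half of it has a real gap.

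\textbf{The constant.} The lemma promises an $(r,h)$ vertex cover with sparsity parameter equal to the discrete highway dimension $h$, and the paper's proof delivers exactly that. Your net construction gives at best $(r,\mathrm{poly}(h))$. The footnote you invoke is about later statements phrased with an explicit $O(1)$; this lemma carries an explicit $h$, not an $O(1)$, so the footnote does not license weakening it.

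\textbf{The doubling appeal.} Your bound on the number of net points in $B(v,6r)$ rests on constant doubling dimension. In this paper that is Lemma~\ref{l:doubling}, whose proof \emph{uses} Lemma~\ref{l:hdpvc}, so invoking it here is circular. Going instead to the HD paper's Theorem~9.1 (or to Lemma~\ref{l:hddegree}, as you do) does not fix this: those results require continuous or journal highway dimension, whereas the present lemma assumes only discrete highway dimension. And discrete highway dimension alone does \emph{not} imply constant doubling dimension: the unit-weight star on $k$ leaves has discrete highway dimension $1$ (every shortest path contains the center), yet covering $B(\text{center},1)$ by balls of radius $\tfrac12$ requires $k+1$ balls. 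So from the hypothesis of the lemma you cannot bound the number of net points, and condition~(2) is left unproved.

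The paper sidesteps both issues with a minimality swap argument that uses nothing beyond the definition of discrete highway dimension. Take a minimum-cardinality $C$ satisfying condition~(1) of Definition~\ref{d:vc}; if some ball $B(v,2r)$ contains more than $h$ points of $C$, replace those points by the at-most-$h$ local hitting set $C''$ supplied by discrete highway dimension at $v$, and verify case by case that the strictly smaller set still satisfies~(1). The interesting case is a path that meets $B(v,2r)$ but also leaves $B(v,4r)$: here the $\maxedge\leq r$ hypothesis guarantees a subpath of length at least $r$ lying entirely inside $B(v,4r)$, which $C''$ then hits. This contradicts minimality and yields the exact bound $h$.
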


\begin{figure}
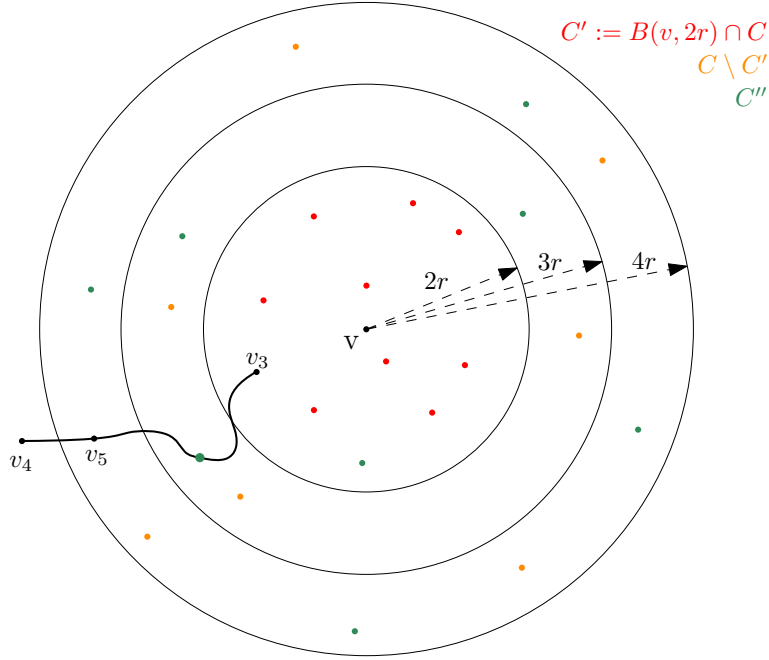

    \centering
    \finalfigfrompage{5}{0}
    \caption{Illustration of point (\ref{p3}) in the proof of Lemma~\ref{l:hdpvc}.}
    \label{fig:5}
\end{figure}

\begin{proof}
Let $C$ be a minimum cardinality subset of $V$ that satisfies (1) of the definition of vertex cover (Definition~\ref{d:vc}): 
all shortest paths $p(v_1,v_2)$, $v_1,v_2 \in V$, of length $d(v_1,v_2) > r $ and $\maxedge(p(v_1,v_2))
\leq r$ have  some $v_c\in C$ where $v_c \in V(p(v_1,v_2))$. Such a set always exists as the set $V$ of all vertices trivially satisfies the covering condition; here we consider the minimal one.

If $C$ satisfies (2) of the definition of $(r,h)$ vertex cover, which requires that for all $v \in V$ $|B(v,2r) \cap C|\leq h$, then the lemma is proven. We will show that if $C$ does not satisfy (2), then $C$ is not  a minimum cardinality subset of $V$ that satisfies (1) of the definition of vertex cover and thus will obtain a contradiction.

As (2) of Definition~\ref{d:vc}  is not satisfied, there must be a vertex $v\in V$ such that $|B(v,2r) \cap C|>h$, let $C' \coloneqq B(v,2r) \cap C$. 

Let $C''$ be a set of at most $h$ vertices in $B(v,4r)$ that intersect all length at least $r$ shortest paths between vertices in $V$ that lie entirely inside $B(v,4r)$; this must exist by the definition of discrete highway dimension (Definition~\ref{d:dhd}). 

Observe that since $|C'| >  h$ and $|C''|\leq h$, and thus $|C \setminus C' \cup C''| < |C|$, if we can show $C \setminus C' \cup C''$ satisfies (1) of the definition of vertex cover (Definition~\ref{d:vc}), that would contradict the minimality of $C$ and prove the lemma.

We will now argue that set $C \setminus C' \cup C''$ will thus have a nonempty intersection with all shortest paths $p(v_1,v_2)$ where $d(v_1,v_2) \geq r$ and $\maxedge(p(v_1,v_2)) \leq r$.
This has several cases:

\begin{enumerate}
    
\item 
First, if $V(p(v_1,v_2)) \subseteq B(v,4r)$, since $d(v_1,v_2)\geq r$, then $C''$ will intersect $V(p(v_1,v_2))$, and thus so will $C \setminus C' \cup C''$. 

\item 
Second if $V(p(v_1,v_2)) \cap  B(v,2r) = \emptyset$, then we know there is some element $c\in C$ that intersects $V(p(v_1,v_2))$ and that $c$ is not an element of $C'$ since $C' \subseteq B(v,2r)$; thus $c$ remains in $C \setminus C' \cup C''$.

\item \label{p3}
See Figure~\ref{fig:5}. Thus, thirdly, $p(v_1,v_2)$ must contain some vertex inside $B(v,2r)$, call it $v_3$, and some vertex outside $B(v,4r)$, call it $v_4$. If there are multiple choices for $v_3$ and $v_4$, choose them to minimize $d(v_3,v_4)$; this ensures that $V(p(v_3,v_4)) \setminus \{v_3,v_4\}$ contains neither vertices in $B(v,2r)$ nor vertices not in $B(v,4r)$. 

Let $v_5$ be the vertex adjacent to $v_4$ on $p(v_4,v_3)$. Vertex $v_5$ is in the annulus $B(v,4r) \setminus B(v,3r)$; this is because $v_4$ is outside $B(v,4r)$, and $v_4$ and $v_5$ are connected by an edge with $\maxedge(v_4,v_5)\leq r$.

Note that as $v_3$ is in $B(v,2r)$ thus $d(v_3,v_5) \geq r$. Also, by construction $V(p(v_3,v_5)) \subseteq B(v,4r)$. Thus by the same logic as in the first case, $C''$ will intersect $V(p(v_3,v_5))$ which is a subset of $V(p(v_1,v_2))$, and thus so will $C \setminus C' \cup C''$.
\end{enumerate}
\end{proof}



\subsection{Facts related to doubling dimension}

One of the greatest challenges is how to deal with edges of varying lengths, and this lemma shows that there are not many long edges that intersect any ball.

\begin{lemma} {\bf Big edges in a ball.}
\jlabel{lem:bigedge}
In a graph of constant highway dimension, for any $r>0$,
there are at most $O(1)$ edges of length at least $r$  with at least one vertex in $B(v,2r)$. 
\end{lemma}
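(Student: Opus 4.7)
The plan is to combine continuous highway dimension (which is $O(1)$ by Corollary~\ref{cor:hd}) with the degree bound (Lemma~\ref{l:hddegree}). For each candidate edge $e$ of length at least $r$ with some endpoint in $B(v,2r)$, I will exhibit a shortest sub-path of $e$ that, by continuous highway dimension, is forced to contain a point of a small ``hub'' set $\hat{C}$; a pigeonhole argument using the degree bound then converts $|\hat{C}| = O(1)$ into a bound of $O(1)$ on the number of such edges.

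Concretely, I would apply Definition~\ref{d:hd} at center $v$ with parameter $r' = 3r/4$, obtaining a set $\hat{C} \subseteq \hat{V}$ of size at most the continuous highway dimension such that every shortest path of length strictly greater than $3r/4$ lying in $\hat{B}(v, 3r)$ meets $\hat{C}$. For each qualifying edge $e$ with endpoint $u \in B(v, 2r)$, let $\hat{p}_e$ be the sub-path of $e$ starting at $u$ of length exactly $r$; this is a shortest path in $G$ because $e$ itself is, by the usefulness assumption on edges. Every point of $\hat{p}_e$ lies within $d(v,u) + r \leq 2r + r = 3r$ of $v$, so $\hat{p}_e \subseteq \hat{B}(v, 3r)$, and $\hat{p}_e$ has length $r > 3r/4$, so $\hat{p}_e$ must contain some point $\hat{c}_e \in \hat{C}$.

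To conclude, I would observe that each $\hat{c} \in \hat{C}$ lies on at most $h$ edges of $G$: either $\hat{c}$ is a real vertex and hence is incident to at most $h$ edges by Lemma~\ref{l:hddegree}, or $\hat{c}$ lies strictly in the interior of a single edge. Mapping each candidate edge $e \mapsto (\hat{c}_e, e)$ injects the set of candidate edges into the set of hub-edge incidences, which has size at most $|\hat{C}| \cdot h = O(1)$.

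The only mildly delicate point is calibrating $r'$ so that a sub-path of length $r$ both fits in $\hat{B}(v, 4r')$ and is strictly longer than $r'$; the borderline case $|e| = r$ rules out the seemingly natural choice $r' = r$, since then the entire edge has length exactly $r$ and no sub-path of length strictly greater than $r$ exists. Picking $r' = 3r/4$ (or any value between $r/2$ and $r$ with slightly adjusted constants) resolves this and gives $\hat{B}(v, 4r') = \hat{B}(v, 3r)$, which is exactly where the length-$r$ sub-path lives.
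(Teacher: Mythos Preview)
Your argument is correct. The paper's proof follows the same two-step outline---a small hub set must meet every qualifying edge, and then the degree bound of Lemma~\ref{l:hddegree} limits how many edges each hub can witness---but it invokes a different variant of highway dimension at the first step. The paper observes that each useful edge of length at least $r$ with an endpoint in $B(v,2r)$ is itself a shortest path lying in $S(v,r)$ (Definition~\ref{d:journalhd}), so the journal highway-dimension hub set $C\subseteq V$ of size $h$ hits every such edge in one of its two endpoints; the degree bound then gives $|D|\le h^2$ directly, with no sub-path construction or calibration of a radius $r'$.

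Your route via continuous highway dimension is slightly longer because you must trim each edge to a length-$r$ sub-path to force it inside $\hat B(v,4r')$, and you must choose $r'$ carefully to make the strict-inequality in Definition~\ref{d:hd} go through. The payoff is that your argument never touches the $S(v,r)$ machinery and would work verbatim in any setting where only continuous highway dimension is available. The paper's version is terser but leans on the journal definition, which handles arbitrarily long edges automatically since membership in $S(v,r)$ only requires the witness path to pass within $2r$ of $v$, not to be contained in a ball.
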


\begin{proof}
Fix $v$ and $r$ and
let $D$ denote the set of edges of length at least $r$ with at least one vertex in $B(v,2r)$.
As we require edges to be useful, each edge of length at least $r$ in $D$ is also a shortest path of length at least $r$ and is in $S_r$.  Thus, by the definition of discrete highway dimension, there is a set $C$ of size at most $h$ such that each edge in $D$ has as an endpoint at least one vertex in $C$. As the maximum degree is $h$ by Lemma~\ref{l:hddegree}, each of the at most $h$ elements of $C$ can only be adjacent to $h$ edges in $D$; thus $|D| \leq h^2=O(1)$.
\end{proof}



\newcommand{\edgeconst}{c_{\rm edge}}
\newcommand{\edgeconstval}{2h\constdd(h)^5}

We now show the proof that any graph of constant highway dimension has constant doubling dimension. However, the converse is easily seen to not be true in the case of a uniform grid; however large uniform grids are not seen in real word road network, with an emphasis on the word uniform.

\begin{lemma} {\bf Doubling dimension, Theorem~9.1 of~\cite{HD}.}\jlabel{l:doubling}
Given a graph of constant highway dimension, then any ball $B(v,2r)$ can be covered with $O(1)$ balls of size $r$. 
Precisely, for every $v$ there exists a set $V' \subseteq V$, $|V'| =O(1)$ such that $B(v,2r) \subseteq \cup_{v'\in V'} B(v',r)$.
\end{lemma}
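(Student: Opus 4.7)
I would construct $V'$ as a maximal set in $B(v,2r) \cap V$ of vertices whose pairwise distances strictly exceed $r$. Maximality immediately forces every vertex of $B(v,2r) \cap V$ to lie within distance $r$ of some element of $V'$, so the covering conclusion $B(v,2r) \subseteq \bigcup_{v' \in V'} B(v', r)$ is automatic. The whole work, then, is to bound $|V'|$.

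As a first step, I would observe that for any two $u_1, u_2 \in V'$ the unique shortest path $p(u_1,u_2)$ has length in $(r, 4r]$, and every vertex $w$ on it satisfies $d(v, w) \leq 2r + \min(d(u_1, w), d(u_2, w)) \leq 2r + d(u_1, u_2)/2 \leq 4r$; hence $p(u_1, u_2) \subseteq B(v, 4r)$. By Corollary~\ref{cor:hd}, $G$ has constant discrete highway dimension, so I would apply Definition~\ref{d:dhd} at $v$ with parameter $r$ to obtain a set $C \subseteq V$ of size at most $h$ that intersects every such $p(u_1, u_2)$.

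The main obstacle will be converting this "every pair of $V'$ is hit by some element of $C$" statement into a bound on $|V'|$ itself. My plan is a labeling argument: for each $u \in V'$, pick an arbitrary partner $u^* \in V' \setminus \{u\}$, a witness $c_u \in C$ lying on $p(u, u^*)$, and the first edge $e_u$ of $p(c_u, u)$. Since each $c \in C$ has degree at most $h$ by Lemma~\ref{l:hddegree}, there are at most $h^2$ distinct labels $(c_u, e_u)$. The crux of the proof will be to show that only $O(1)$ elements of $V'$ can share a label: two such elements must both lie on the same branch of the shortest-path tree rooted at their common $c$, and the $>r$ pairwise separation combined with the uniqueness of shortest paths will constrain how many can coexist inside $B(v, 4r)$. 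Long edges of length exceeding $r$ would break this local structural analysis, but by Lemma~\ref{lem:bigedge} only $O(1)$ such edges touch $B(v, 2r)$, so I would add their endpoints to $V'$ explicitly and argue separately for them, exactly as was done with the big-edge endpoints in the proof of Lemma~\ref{l:hdpvc}.
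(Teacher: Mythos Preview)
Your packing setup is fine: a maximal $r$-separated set in $B(v,2r)$ automatically covers $B(v,2r)$ by $r$-balls, and your observation that every $p(u_1,u_2)$ with $u_1,u_2\in V'$ lies inside $B(v,4r)$ and has length $>r$ is correct, so the discrete highway dimension does give you a hitting set $C$ of size $h$ for all those paths.

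The gap is exactly where you flagged it as ``the crux.'' Sharing a label $(c,e)$ only tells you that all such $u$ sit in the same subtree of the shortest-path tree rooted at $c$, below the endpoint $w$ of $e$. That subtree can have branching factor up to $h$ at every vertex (Lemma~\ref{l:hddegree}), and since edges may have length as small as $1$, its depth in edge count can be $\Theta(r)$. So the number of pairwise $r$-separated vertices it can hold is not $O(1)$ independent of $r$; bounding it is essentially the doubling statement you are trying to prove. Uniqueness of shortest paths buys you that the tree is a tree, nothing more. Your remark about long edges is also misplaced: you have defined $V'$ as a packing whose size you want to bound, so adding big-edge endpoints to $V'$ cannot help bound $|V'|$---that trick only makes sense when $V'$ is being built directly as a cover.

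The paper avoids this circularity by not using a packing at all. It builds $V'$ explicitly as $\{v\}$ together with the vertex cover $C$ from Lemma~\ref{l:hdpvc} restricted to $B(v,2r)$, plus the endpoints of long edges from Lemma~\ref{lem:bigedge}. The size bound $|V'|=O(1)$ is then immediate from part~(2) of Definition~\ref{d:vc} and Lemma~\ref{lem:bigedge}, and coverage follows because for any $v'\in B(v,2r)\setminus B(v,r)$ the initial length-$r$ segment of $p(v',v)$ must contain either a cover vertex or a long-edge endpoint. In short, the paper proves coverage and gets size for free, whereas you get coverage for free and then cannot close the size argument.
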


\begin{proof}
Let $h$ be the discrete highway dimension of $G$, this is $O(1)$ by Corollary~\ref{cor:hd}.
Let $C$ be vertices in a $(r,h)$ vertex cover of $G$ that intersect $B(v,2r)$, such a vertex cover exists by Lemma~\ref{l:hdpvc}. 
Let $D$ be the vertices of edges entirely in $B(v,2r)$ that have length at least $r$, there are at most $O(1)$ such edges Lemma~\ref{lem:bigedge}.
We argue that the balls of size $r$ centered at the elements of $V' \coloneqq D \cup C \cup \{v\}$ cover $B(v,2r)$. As $B(v,r)$ is covered from $v$, we need only worry about $B(v,2r) \setminus B(v,r)$. Consider some $v'$ in $B(v,2r) \setminus B(v,r)$. The path $p(v',v)$ is in the ball $B(v,2r)$ and has length at least $r$. In the first $r$ of $p(v',v)$ there is either an edge of length $>r$, in which case there is a vertex of $D$, or not, in which case since the maximum edge is less than $r$ there must be an element of $c\in C$ of $C$ in the first $r$ of $p(v',v)$. In either case $v' \in B(v,r)$.
\end{proof}

We can recurse on this lemma so that it can be applied to ball size ratios other than two:

\begin{cor} \jlabel{cor:double}
{\bf Generalized doubling. }
In a graph with constant highway dimension, for any $v\in V$ and $r\geq r'>0$ such that $\frac{r}{r'}=O(1)$, there is a set $V' \subseteq V$ of size $O(1)$ such that $B(v,r) \subseteq \bigcup_{v' \in V'}B(a,v')$. 
\end{cor}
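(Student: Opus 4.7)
The plan is to iterate Lemma~\ref{l:doubling} a constant number of times. Since $r/r' = O(1)$, there is a constant $k$ with $r \leq 2^k r'$, and so it suffices to cover $B(v, 2^k r')$ by $O(1)$ balls of radius $r'$ centered at vertices of $V$.

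First, I would establish the following claim by induction on $i \geq 0$: for every $u \in V$, the ball $B(u, 2^i r')$ is covered by at most $c^i$ balls of radius $r'$ centered at vertices of $V$, where $c = O(1)$ is the constant implicit in Lemma~\ref{l:doubling}. The base case $i=0$ is immediate, taking the single ball $B(u, r')$. For the inductive step, I would apply Lemma~\ref{l:doubling} with the role of $r$ played by $2^{i-1} r'$: this yields a set $V_0 \subseteq V$ of size at most $c$ with $B(u, 2^i r') \subseteq \bigcup_{u' \in V_0} B(u', 2^{i-1} r')$. Then I would invoke the inductive hypothesis on each $B(u', 2^{i-1} r')$ for $u' \in V_0$ and union the resulting covers, giving at most $c \cdot c^{i-1} = c^i$ balls in total.

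Finally, I would instantiate the claim at $u = v$ and $i = k$. Since $B(v, r) \subseteq B(v, 2^k r')$, this produces a set $V' \subseteq V$ of size at most $c^k$ with $B(v, r) \subseteq \bigcup_{v' \in V'} B(v', r')$. Because $r/r' = O(1)$ forces $k = O(1)$, we have $|V'| = c^k = O(1)$, as required.

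The only real obstacle is bookkeeping: one must check that $k$ is indeed a constant whenever $r/r'$ is, so that $c^k$ stays $O(1)$, and that Lemma~\ref{l:doubling} is being applied at an intermediate radius $2^{i-1} r'$ rather than at the two endpoints $r$ and $r'$ directly (a single application would not suffice unless $r \leq 2r'$). Since the recursion has constant depth the blow-up is multiplicative by a constant and the argument goes through without further complication.
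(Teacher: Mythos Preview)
Your proposal is correct and is exactly the approach the paper intends: it states the corollary immediately after Lemma~\ref{l:doubling} with the remark ``We can recurse on this lemma so that it can be applied to ball size ratios other than two,'' and gives no further proof. Your inductive iteration of Lemma~\ref{l:doubling} to depth $k = \lceil \log_2(r/r') \rceil = O(1)$, yielding $c^k = O(1)$ balls, fills in precisely the omitted details.
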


%

%

\subsection{Facts about shortcut graphs}

The shortcut graph has been constructed so that in can be used instead of the original graph between edges of $C$ that are at most distance $r$ apart.

\begin{obs} \jlabel{o:shortcutfacts} 
Let $C$ be a subset of $V$. For any $v_1,v_2 \in C$, and positive $r$, if $\maxedge(p(v_1,v_2)) \leq r$:
\begin{itemize}
    \item $d_G(v_1,v_2)=d_{G(C,r)}(v_1,v_2)$
    \item If vertex $v_3 \in C$ is on $p_{G(C,r)}(v_1,v_2)$, then $v_3$ is also on $p_{G}(v_1,v_2)$
\end{itemize}
\end{obs}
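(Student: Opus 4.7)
The plan is to reduce both bullets to the defining properties of $G(C,r)$ together with the uniqueness of shortest paths in $G$. For the distance equality, I would prove the two inequalities in turn.

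The direction $d_{G(C,r)}(v_1,v_2) \geq d_G(v_1,v_2)$ is immediate from Definition~\ref{d:shortcut}: every edge of $G(C,r)$ between endpoints $u,w$ has weight exactly $d_G(u,w)$, so any $v_1$-to-$v_2$ path in $G(C,r)$ concatenates, edge by edge, into a walk in $G$ of the same total length and therefore cannot beat $d_G(v_1,v_2)$. For the reverse inequality I would walk along $p_G(v_1,v_2)$ and split it at its $C$-vertices. Let $v_1 = u_0, u_1, \ldots, u_k = v_2$ be the elements of $C$ encountered in order. By uniqueness of shortest paths each $p_G(u_i, u_{i+1})$ is itself the shortest $u_i$-to-$u_{i+1}$ path, its interior contains no element of $C$, and its $\maxedge$ is at most $\maxedge(p_G(v_1,v_2)) \leq r$. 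In the regime in which the observation is applied, i.e.\ when the hop lengths $d_G(u_i,u_{i+1})$ are themselves bounded by $r$, the pair $(u_i,u_{i+1})$ meets every clause of Definition~\ref{d:shortcut} and is thus a shortcut edge of weight $d_G(u_i,u_{i+1})$; summing along $i$ produces a $G(C,r)$-path of total length $d_G(v_1,v_2)$.

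For the second bullet I would expand each shortcut edge of $p_{G(C,r)}(v_1,v_2)$ back into its underlying $G$-shortest path. The result is a $v_1$-to-$v_2$ walk in $G$ of total length $d_{G(C,r)}(v_1,v_2) = d_G(v_1,v_2)$ by the first bullet, and uniqueness of shortest paths in $G$ forces this walk to be exactly $p_G(v_1,v_2)$; thus any $v_3 \in C$ appearing as an endpoint along the shortcut sequence also appears on $p_G(v_1,v_2)$.

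The main obstacle is the hop-length check hidden in the compression step: the hypothesis $\maxedge(p(v_1,v_2)) \leq r$ constrains individual edges but not the distance between consecutive $C$-vertices, so without the additional bound $d_G(v_1,v_2) \leq r$ (the scale at which $G(C,r)$ is meant to substitute for $G$) some pair $(u_i,u_{i+1})$ might fail the $d \leq r$ clause of Definition~\ref{d:shortcut}. I would either add this bound explicitly or invoke the contextual scale under which the observation is used. Uniqueness of shortest paths is the other silent workhorse, underwriting the clean $C$-vertex decomposition in the first bullet and the identification of the expanded walk with $p_G(v_1,v_2)$ in the second.
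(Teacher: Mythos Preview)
The paper records this as an Observation and supplies no proof, so there is nothing to compare against directly. Your sketch is the standard argument and is sound where it applies: expand shortcut edges to get $d_{G(C,r)}\geq d_G$, compress $p_G(v_1,v_2)$ at its $C$-vertices for the reverse inequality, and invoke uniqueness of shortest paths for the second bullet.

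You have also put your finger on a genuine imprecision in the statement. Read literally for an arbitrary $C\subseteq V$, the first bullet is false: the hypothesis $\maxedge(p_G(v_1,v_2))\leq r$ does not bound the distance between consecutive $C$-vertices on $p_G(v_1,v_2)$, so a hop $u_i,u_{i+1}$ may exceed $r$ and fail the $d\leq r$ clause of Definition~\ref{d:shortcut}. A unit-weight path on five vertices with $C=\{v_1,v_2\}$ and $r=2$ already gives $d_{G(C,r)}(v_1,v_2)=\infty$. The paper only ever invokes the observation with $C=C[i-1]$ and $r=8^{i-1}$, where the additional structure of $C[i-1]$ (an $(8^{i-1},O(1))$ vertex cover containing the endpoints of all long edges) is tacitly relied upon to control hop lengths; your suggestion to read the claim under that contextual hypothesis, or to add an explicit distance bound, is exactly the needed repair.
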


Now, we show that vertices of shortcut graphs of vertex covers have small degree:

\begin{lemma}{\bf Degree of shortcut graph, similar to Lemma 4.2 of~\cite{HDC}}\jlabel{l:degree}.
Given a $(r,k)$ vertex cover $C$ of $G$, and a positive $r'$ such that $\frac{r'}{r}=O(1)$, the shortcut graph $G(C,r')$ has maximum degree $O(1)$.    
\end{lemma}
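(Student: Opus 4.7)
My plan is to bound the degree of any vertex $v \in C$ in $G(C,r')$ by counting all possible neighbors, i.e., all vertices in $B(v,r') \cap C$. Every neighbor $v'$ of $v$ in $G(C,r')$ satisfies $d_G(v,v') \leq r'$ by Definition~\ref{d:shortcut}, and of course lies in $C$. So it suffices to show $|B(v,r') \cap C| = O(1)$, which immediately gives degree $O(1)$ after subtracting $v$ itself.

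To bound $|B(v,r') \cap C|$, I would cover $B(v,r')$ by $O(1)$ smaller balls of radius $2r$, and then apply property (2) of the vertex cover definition (Definition~\ref{d:vc}) to each piece. Concretely, since $r'/r = O(1)$, we have $r'/(2r) = O(1)$ as well, so Corollary~\ref{cor:double} (generalized doubling) produces a set $V' \subseteq V$ of size $O(1)$ with
\[
B(v,r') \;\subseteq\; \bigcup_{v'' \in V'} B(v'',2r).
\]
(If $r' \leq 2r$ this step is trivial: take $V' = \{v\}$.) For each $v'' \in V'$, property (2) of the $(r,k)$ vertex cover gives $|B(v'',2r) \cap C| \leq k$. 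Summing over the $O(1)$ balls in the cover yields $|B(v,r') \cap C| \leq k \cdot |V'| = O(1)$, completing the argument.

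The step I expect to require the most care is simply lining up the hypotheses of Corollary~\ref{cor:double}: that corollary is stated for a ratio of ball radii that is $O(1)$, and here the ratio is $r'/(2r)$. The assumption $r'/r = O(1)$ in the lemma statement is exactly what makes this work, and it is also precisely the reason the lemma needs this hypothesis at all --- without it the doubling step would blow up. Everything else is bookkeeping: the constants hidden in the $O(1)$ depend polynomially on the highway dimension through Lemmas~\ref{l:hddegree} and~\ref{l:doubling}, on $k$, and on the ratio $r'/r$, but none of these dependencies affect the asymptotic claim.
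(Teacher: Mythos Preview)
Your proposal is correct and follows essentially the same approach as the paper: the paper's two-sentence proof simply observes that neighbors of $v$ in $G(C,r')$ lie in $B(v,r')\cap C$, and then appeals to the definition of vertex cover together with Corollary~\ref{cor:double} to conclude there are $O(1)$ such vertices. You have spelled out exactly this argument in more detail, including the case split $r' \le 2r$ versus $r' > 2r$ that the paper leaves implicit.
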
 

\begin{proof}
In $G(C,r')$, each vertex $v \in C$ is only connected to vertices $v'\in C$ which are in $B(v,r')$. By the definition of vertex cover and Corollary~\ref{cor:double} there are only $O(1)$ such vertices.
\end{proof}

Additionally, we show that the complexity of all items in a ball is small; this is important as our data structures in next section will have a size proportional to degrees and vertices of shortcut graphs :

\begin{lemma}
	\jlabel{l:shorcutsize}
 {\bf Shortcut size.}
	Given a $(r,k)$ vertex cover $C$ of $G$ and a positive $r'$ such that $\frac{r'}{r}=O(1)$, for any $v$ the complexity (the sum of vertex degrees) of $B(v,r') \cap C$ in $G(C,r')$ is $O(1)$.
\end{lemma}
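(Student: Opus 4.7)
The plan is to combine a bound on $|B(v,r') \cap C|$ with the per-vertex degree bound from Lemma~\ref{l:degree}. Since each vertex of $C$ inside $B(v,r')$ has shortcut-graph degree $O(1)$, it suffices to show that $B(v,r')$ contains only $O(1)$ vertices of $C$, and then multiply.

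First I would establish $|B(v,r') \cap C| = O(1)$. The vertex cover definition (Definition~\ref{d:vc}, part 2) gives $|B(v'',2r) \cap C| \le k = O(1)$ for any $v'' \in V$. If $r' \le 2r$ we are immediately done since $B(v,r') \subseteq B(v,2r)$. Otherwise, because $r'/r = O(1)$, we may apply Corollary~\ref{cor:double} to cover $B(v,r')$ by $O(1)$ balls of radius $2r$ centered at vertices of $V$. Each such ball contributes at most $k$ vertices of $C$, so $|B(v,r') \cap C| = O(1)$.

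Second I would bound each vertex's degree. By Lemma~\ref{l:degree}, every vertex of $C$ has degree $O(1)$ in $G(C,r')$, using that $r'/r = O(1)$ and $C$ is an $(r,k)$ vertex cover. Multiplying the $O(1)$ count of vertices of $C$ inside $B(v,r')$ by their $O(1)$ individual degrees yields the desired $O(1)$ bound on the total complexity.

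This proof is essentially a two-line composition of previous lemmas; I do not anticipate a genuine obstacle. The only mild care needed is the case split on whether $r' \le 2r$ or $r' > 2r$ when invoking the generalized doubling corollary, to make sure the ratio of radii used in Corollary~\ref{cor:double} is indeed $O(1)$.
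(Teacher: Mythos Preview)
Your proposal is correct and matches the paper's own proof essentially line for line: the paper also bounds $|B(v,r')\cap C|$ by $O(1)$ via the vertex-cover density condition together with Corollary~\ref{cor:double} (this is the content of the remark ``as noted in the proof of Lemma~\ref{l:degree}''), and then multiplies by the $O(1)$ degree from Lemma~\ref{l:degree}. Your explicit case split on $r'\le 2r$ versus $r'>2r$ is a harmless extra precaution; the paper simply invokes generalized doubling directly.
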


\begin{proof}
	As noted in the proof of Lemma~\ref{l:degree}, there are only $O(1)$ vertices in $B(v,r') \cap C$, and by Lemma~\ref{l:degree} they have degree $O(1)$.
\end{proof}

\pagebreakk

\subsection{Facts about a path cover}


Path covers are a completely new idea of this work.

\begin{lemma} \jlabel{l:hdpc}
{\bf Highway dimension implies path cover.}
A graph with constant highway dimension has a $(r,O(1))$ path cover for any $r>0$ (the constant in the $O(1)$ depends on the highway dimension but it is independent of $r$).
\end{lemma}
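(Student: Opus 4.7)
The plan is to build the path cover $P$ from shortest paths between endpoints drawn from a carefully augmented pool $C^\dagger \subseteq V$ of locally constant density. Specifically, I would let $N \subseteq V$ be a maximal set of vertices pairwise at distance $\geq r/32$ (a net at scale $r/32$); for each $n \in N$ let $C^*(n)$ be a discrete highway dimension cover of size $\leq h$ at $n$ with scale $5r/128$, which by Definition \ref{d:dhd} hits every shortest path of length $> 5r/128$ contained in $B(n, 5r/32)$; and let $B$ be the set of endpoints of edges of length exceeding $11r/128$. Corollary \ref{cor:double} (applied to cover $B(v,5r)$ by $O(1)$ balls of small enough radius to contain at most one net point each) gives $|N \cap B(v,5r)| = O(1)$, and Lemma \ref{lem:bigedge} combined with doubling similarly gives $|B \cap B(v,5r)| = O(1)$; hence $C^\dagger := N \cup \bigcup_{n\in N} C^*(n) \cup B$ satisfies $|C^\dagger \cap B(v,5r)| = O(1)$. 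I then take
$$
P := \{p_G(v_3, v_4) : v_3, v_4 \in C^\dagger,\; d(v_3, v_4) \in [r/4, r],\; \maxedge(p_G(v_3, v_4)) \leq r/8\}.
$$

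To verify the covering property, fix a test path $p(v_1, v_2)$ of length $L \in [r/2, r]$ with $\maxedge \leq r/8$. Pick $n_1 \in N$ within $r/32$ of $v_1$, and let $v_3^\star$ be the last vertex on $p(v_1, v_2)$ with $d(v_1, v_3^\star) \leq r/8$; the prefix $p(v_1, v_3^\star)$ sits inside $B(v_1, r/8) \subseteq B(n_1, 5r/32)$. If this prefix has length greater than $5r/128$, it contains a vertex of $C^*(n_1) \subseteq C^\dagger$, which I take as $v_3$. Otherwise the prefix has length at most $5r/128$, and because the vertex just past $v_3^\star$ on the path lies outside $B(v_1, r/8)$, the edge from $v_3^\star$ toward $v_2$ has length greater than $r/8 - 5r/128 = 11r/128$; so $v_3^\star \in B \subseteq C^\dagger$, and I take $v_3 = v_3^\star$. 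Either way $d(v_1, v_3) \leq r/8$. A symmetric argument produces $v_4 \in C^\dagger$ on $p(v_1, v_2)$ within $r/8$ of $v_2$. By uniqueness of shortest paths, $p(v_3, v_4)$ is a subpath of $p(v_1, v_2)$ with $\maxedge \leq r/8$ and length $L - d(v_1, v_3) - d(v_4, v_2) \in [r/4, r]$, so $p(v_3, v_4) \in P$.

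For sparseness, any $p(v_3, v_4) \in P$ with at least one vertex in $B(v, 4r)$ has length $\leq r$, placing both $v_3, v_4$ in $C^\dagger \cap B(v, 5r)$, a set of size $O(1)$. Hence at most $O(1)^2 = O(1)$ paths of $P$ can meet $B(v, 4r)$, giving the required sparseness bound. The main obstacle is the scale bookkeeping: the net granularity, the discrete highway dimension scale, and the big-edge threshold must all be chosen so that the dichotomy in the covering argument always resolves — either a long prefix yields a highway-dimension vertex, or a short prefix forces a big-edge endpoint.
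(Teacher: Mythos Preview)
Your argument is correct. The paper follows the same high-level architecture --- choose a locally sparse endpoint set, let $P$ be all shortest paths of length in $[\tfrac14 r, r]$ between pairs in that set, then verify covering and sparseness --- but it obtains the endpoint set differently: it invokes the already-established $(\tfrac18 r,h)$ vertex cover of Lemma~\ref{l:hdpvc} directly as $C$, rather than assembling $C^\dagger$ from a net, per-net-point discrete-highway-dimension covers, and big-edge endpoints. This lets the paper's covering step be essentially a one-liner (a cover vertex sits in each short prefix/suffix of the test path), at the price of depending on the minimality/exchange argument of Lemma~\ref{l:hdpvc}; your construction is more hands-on but self-contained, and your prefix dichotomy is actually more explicit than the paper about why a suitable endpoint exists within $\tfrac18 r$ of $v_1$ even when the prefix is mostly occupied by a single long edge. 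The sparseness counts (both endpoints land in $B(v,5r)$; only $O(1)$ candidates there) are essentially identical in the two proofs.
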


\begin{proof}
Let $h$ be the discrete highway dimension of $G$; $h=O(1)$ by Corollary~\ref{cor:hd}.
Let $C$ be a $(\frac{1}{8}r,h)$ vertex cover, which exists by Lemma~\ref{l:hdpvc}. 
Let $P$ be the set of the shortest paths between elements of $C$ that are of length between $\frac{1}{4} r$ and $r$. Formally:
$$ P \coloneqq \left\{ p(x_1,x_2) \middle| 
v_1,v_2 \in C  \text{ and } \frac{1}{4} r \leq  d(v_1,v_2)\leq  r \right\} $$

We argue that $P$ is a $r,O(1)$ path cover. This requires arguing each requirement to be a path cover holds:

\begin{enumerate}
        \item \emph{$P$ is a set of shortest paths between vertices in $V$, each of length from $\frac{1}{4} r$ to $r$.}

This is by construction.


\begin{figure}
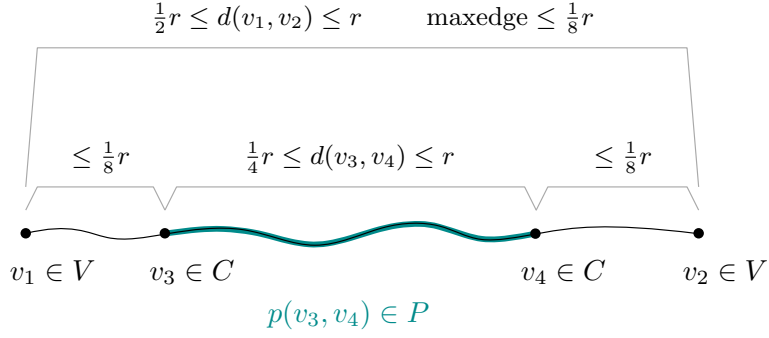

    \centering
    \finalfigfrompage{6}{0}
    \caption{Illustration of point (\ref{pp2}) of Lemma~\ref{l:hdpc}.}
    \label{fig:6}
\end{figure}

\item \label{pp2} \emph{For any $v_1,v_2\in V$ where the distance $\frac{1}{2}r \leq d(v_1,v_2)\leq r$ and $\maxedge(p(v_1,v_2)) \leq \frac{1}{8}r$, $p(v_1,v_2)$ contains as a subpath some path $p(v_3,v_4) \in P$, where 
$d(v_1,v_3)\leq \frac{1}{8}r$ and
$d(v_4,v_2)\leq \frac{1}{8}r$.}

See Figure~\ref{fig:6}. Let $p$ be some path $p(v_1,v_2)$ where the distance $\frac{1}{2}r \leq d(v_1,v_2)\leq r$ and $\maxedge(p(v_1,v_2)) \leq \frac{1}{8}r$. Let $v_3$ be a vertex of $C$ on $p$ such that $d(v_1,v_3)\in [0,\frac{1}{8}r]$ and let $v_4$ be a vertex of $C$ on $p$ such that $d(v_4,v_2)\in [0,\frac{1}{8}r]$.

These exist by the definition of a vertex cover, are distinct, and meet the distance requirements. By construction, the distance between $v_3$ and $v_4$ is between $\frac{1}{4}r$ and $r$ and thus $p(v_3,v_4)$ is an element of $P$.

    \item \emph{Any ball of radius $4 r$ contains at least one vertex of a constant number paths in $P$.}

By Corollary~\ref{cor:double}, each ball $B(v,4 r)$ can be covered by 
$O(1)$ balls of radius $\frac{1}{8}r$. Each of these balls intersects at most $h$ elements of $C$, which bounds $|B(v,5 r)\cap C|$ by $O(1)$. A path in $P$ with any vertex in $B(v,4 r)$ must have both endpoints in $B(v,5 r)$, and thus there are at most $|B(v,5 r)\cap C|^2 =O(1)$ such paths.
\end{enumerate}

\end{proof}

\subsection{Facts about a SPHS}
Here we link our vertex covers and show that together with long edges one obtains a SPHS of \cite{HD}.

\begin{lemma}
\jlabel{l:vclesphs}
{\bf A vertex cover and long edges give a SPHS}
	Let $E'$ be the set of edges in $G$ of length at least $r/3$.
	Let $C$ be a $(r/3,k)$ vertex cover. 
	Then $C \cup V(E')$ is a $(r,O(1))$-SPHS, given $G$ has constant highway dimension.
\end{lemma}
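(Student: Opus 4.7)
I would verify the two defining properties of an SPHS (Definition~\ref{d:sphs}) separately, handling the hitting condition by case analysis on the length and maximum edge weight of the path being hit, and the sparsity condition by a doubling-dimension covering argument.

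\textbf{Hitting condition.} Fix any $p = p(v_1,v_2) \in \mathcal{P}(r)$ with an $r$-witness $p' = p(u_1,u_2)$ of length at least $r$, where $u_i \in \{v_i\} \cup N(v_i)$ and $p'$ consists of $p$ together with at most one ``extension'' edge on each end. I would split into two cases.

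If $|p| > r/3$, then either $\maxedge(p) \le r/3$, in which case condition (1) of the vertex cover definition (Definition~\ref{d:vc}) applied to $C$ at scale $r/3$ gives a vertex of $C$ on $p$, or else $\maxedge(p) > r/3$, in which case an edge of $p$ lies in $E'$ and both its endpoints are in $V(p)\cap V(E')$.

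If $|p| \le r/3$, then the extension edges must contribute the remaining length: their lengths sum to $|p'|-|p| \ge r - r/3 = 2r/3$. At least one such extension edge exists (otherwise $p'=p$ has length $\ge r > r/3$, a contradiction), and by pigeonhole at least one of them has length $\ge r/3$, hence lies in $E'$. Since that edge shares an endpoint with $p$ (namely $v_1$ or $v_2$), we again obtain $V(p)\cap V(E')\neq\emptyset$.

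\textbf{Sparsity condition.} I need to bound $|(C\cup V(E'))\cap B(v,2r)|$ by $O(1)$ for every $v\in V$. Using Corollary~\ref{cor:double}, cover $B(v,2r)$ by $O(1)$ balls $B(v_i, 2r/3)$. The vertex cover property of $C$ at scale $r/3$ bounds $|C\cap B(v_i,2r/3)|\le k=O(1)$ per small ball, summing to $O(1)$ in total. For $V(E')$, note an edge of $E'$ with an endpoint in $B(v,2r)$ has an endpoint in one of the $B(v_i,2r/3)$; applying Lemma~\ref{lem:bigedge} at scale $r/3$ in each such small ball yields $O(1)$ long edges per ball, hence $O(1)$ endpoints in total, and so $|V(E')\cap B(v,2r)| = O(1)$. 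Adding the two bounds proves sparsity.

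\textbf{Main obstacle.} The one genuinely delicate point is hitting paths $p\in\mathcal{P}(r)$ that are themselves shorter than $r/3$, since the vertex cover $C$ is oblivious to such short paths. The saving idea is that $r$-significance forces a long witness $p'$ whose length must then come from the at-most-two extension edges attached to $p$; the $r/3$ threshold on $E'$ is chosen precisely so that, by pigeonhole against $2r/3$, at least one of these edges is captured by $V(E')$ and is incident to $V(p)$. Everything else reduces to already-proved facts (Definition~\ref{d:vc}, Lemma~\ref{lem:bigedge}, Corollary~\ref{cor:double}).
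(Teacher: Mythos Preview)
Your proof is correct and follows essentially the same approach as the paper. The only cosmetic difference is the case split in the hitting argument: the paper branches directly on $\maxedge(p')$ (if the witness has a long edge, that edge's endpoint lies in $V(p)$; otherwise both extension edges are short, forcing $|p|>r/3$ with $\maxedge(p)\le r/3$, so the vertex cover hits $p$), whereas you branch first on $|p|$ and then on $\maxedge(p)$. Both decompositions use the same ingredients and the sparsity argument via Corollary~\ref{cor:double} and Lemma~\ref{lem:bigedge} is identical.
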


\begin{proof}
We argue each requirement of a SPHS separately:
\begin{enumerate}
	\item \emph{For all $p$ in $\mathcal{P}(r)$, $C\cap V(p) \not = \emptyset$.}
	
		Consider some path $p$ in $\mathcal{P}_r$ and let $p'$ be its $r$-witness. If $\maxedge(p') \geq \frac{r}{3}$, then $p$ will contain a vertex of $E'$. Otherwise, we know that the length of $p'$ is at least $r$, and the up to two edges in $p'$ but not in $p$ are at most $\frac{r}{3}$, thus $p$ is has length least $\frac{r}{3}$; since $\maxedge(p') \leq \frac{r}{3}$, there must be an element of $C$ on $p$.

	\item \emph{For all $v '\in V$, $|C \cap B(v,2r)| = O(1)$.}

	The number of elements in $|B(v,r/3)\cap C|$ is at most $k$ by Definition~\ref{d:vc}. By  Corollary~\ref{cor:double}, there are at most $O(1)$ elements in $|B(v,r)\cap C|$.
	
	The number of elements in $|E' \cap B(v,2r)|$ has at most $h^2$ edges inside by Lemma~\ref{lem:bigedge}. Applying Corollary~\ref{cor:double}, $|E' \cap B(v,4r)|$ is at most $O(1)$.
	
\end{enumerate}

\end{proof}

\pagebreakk\section{Creating a SPHS hierarchy}

In this section we describe how to create a SPHS hierarchy, which will use a hierarchy of vertex covers, endpoints of long edges, and their associated shortcut graphs. We focus on a combinatorial description here, and will address algorithmic issues in Section~\ref{s:computing}.

\subsection{Definitions}

We describe the edge sets $E[i]$, the graphs $G[i]$, and the vertex sets $C[i]$ and $C'[i]$, where $i$ an integer at least $-1$. 

$E[i]$ is the set of all edges of length in $(8^{i-1},8^{i}]$; recall that all edges have at least unit length, thus the $E[i]$ partition the edges and $E[-1]=\emptyset$. We use $E[\geq i]$ to denote $\bigcup_{j=i}^\infty E[j]$.

The graph $G[-1]$ is empty as is the vertex set $C'[-1]$. The set $C[i]$ is defined to be $C[i] \coloneqq C'[i] \cup V(E[\geq i])$, with the goal that 
 $C[i]$ is a $(8^i,O(1))$ vertex cover assuming constant highway dimension $h$, which we will prove
in Lemma~\ref{l:ciscover}. $G[i]$ is the shortcut graph $G(C[i],8^i)$.

The set $C'[i]$ is defined based on $C[i-1]$; note as there is some flexibility as to what can be in $C'[i]$ we speak of a \emph{valid} $C'[i]$ as one that meets the requirements below:

\begin{defn}
\jlabel{d:construct}
$C'[-1] \coloneqq \emptyset$. For integer $i\geq 0$
a valid $C'[i]$ is any subset of $C[i-1]$ that can be produced by the following method:
\begin{itemize}
    \item Initialize $C'[i]$ to be $\emptyset$
    \item For each $v,v' \in V(G[i-1])$  such that $d_{G[i-1]}(v,v')\in [\frac{3}{4}\cdot8^i,8^{i}]$ and $\maxedge_G(p(v,v'))\leq 8^{i-1}$, if the path $p_{G[i-1]}(v,v')$ does not yet contain any element of $C'[i]$, add the vertex of $G[i-1]$ closest to the midpoint of $p_{G[i-1]}(v,v')$ to $C'[i]$. This can be done using $G[i-1]$ by Observation~\ref{o:shortcutfacts}.
\end{itemize}
Note that the order in which all possible $(v,v')$ pairs are considered is arbitrary, and in general different orderings will yield different $C'[i]$'s, all of which are valid.
\end{defn}

\pagebreakk

\subsection[{The C[i]'s are vertex covers}]{The $C[i]$'s are vertex covers}

\begin{lemma} {\bf{$C[i]$ is a vertex cover.}}
\jlabel{l:ciscover}
    Each $C[i]$ is a $(8^i,O(1))$ vertex cover of $G$.
\end{lemma}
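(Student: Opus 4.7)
The plan is to prove the lemma by induction on $i$, establishing the two clauses of Definition~\ref{d:vc} separately. For the base case $i=0$, observe that $V(E[\geq 0])=V$ because the graph is connected and every edge has length at least $1$, so $C[0]=V$; clause~(1) is immediate, and $|B(v,2)\cap V|=O(1)$ follows from the degree bound of Lemma~\ref{l:hddegree} together with the unit lower bound on edge length. For the inductive step I assume that $C[i-1]$ is an $(8^{i-1},O(1))$ vertex cover and prove the claim for $i$.

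For coverage (clause~(1)), let $p=p(v_1,v_2)$ be any shortest path with $d(v_1,v_2)>8^i$ and $\maxedge(p)\leq 8^i$. If $\maxedge(p)>8^{i-1}$, some edge on $p$ lies in $E[i]\subseteq E[\geq i]$, so its endpoints belong to $V(E[\geq i])\subseteq C[i]$ and we are done. Otherwise $\maxedge(p)\leq 8^{i-1}$, and the goal is to exhibit two vertices $v,v'\in C[i-1]$ on $p$ with $d(v,v')\in[\tfrac{3}{4}\cdot 8^i,8^i]$; then $p(v,v')$ meets the hypothesis of Definition~\ref{d:construct}, forcing $C'[i]\subseteq C[i]$ to contain a vertex on $p(v,v')\subseteq p$. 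To find this pair, I enumerate the $C[i-1]$ vertices $q_1,\ldots,q_k$ appearing along $p$ in order and show that $d(v_1,q_1)$, $d(q_k,v_2)$, and each consecutive gap $d(q_j,q_{j+1})$ is strictly smaller than $\tfrac{1}{4}\cdot 8^i$. The argument is a case split: if a boundary edge adjacent to the gap has length exceeding $8^{i-2}$, then one of its endpoints lies in $V(E[\geq i-1])\subseteq C[i-1]$ and pins the gap to at most a single edge of length $\leq 8^{i-1}$; otherwise every edge on the gap has length $\leq 8^{i-2}$, and applying Definition~\ref{d:construct} at level $i-1$ to pairs of $C[i-2]$ vertices inside the gap would produce a $C'[i-1]\subseteq C[i-1]$ vertex in the interior, contradicting the minimality or consecutivity of the chosen $q_j$'s. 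With these bounds in hand, a sliding-window walk over $f(b)=d(q_1,q_b)$ necessarily visits a value in $[\tfrac{3}{4}\cdot 8^i,8^i]$.

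For sparseness (clause~(2)), decompose $C[i]=C'[i]\cup V(E[\geq i])$. The portion in $V(E[\geq i])\cap B(v^*,2\cdot 8^i)$ is $O(1)$: cover $B(v^*,2\cdot 8^i)$ with $O(1)$ balls of radius $2\cdot 8^{i-1}$ via Corollary~\ref{cor:double}, and bound edges of length $>8^{i-1}$ in each via Lemma~\ref{lem:bigedge}. For $C'[i]$, I invoke the $(8^i,O(1))$ path cover $P$ from Lemma~\ref{l:hdpc}. Each $c\in C'[i]$ was added while processing a pair $(v_c,v'_c)\in C[i-1]^2$; since $d(v_c,v'_c)\in[\tfrac{3}{4}\cdot 8^i,8^i]\subseteq[\tfrac{1}{2}\cdot 8^i,8^i]$ and $\maxedge(p(v_c,v'_c))\leq 8^{i-1}=\tfrac{1}{8}\cdot 8^i$, the path cover yields a subpath $\pi'_c\in P$ near the middle of $p(v_c,v'_c)$, and because $c$ itself sits near that midpoint it lies on $\pi'_c$. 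Thus $c\in B(v^*,2\cdot 8^i)$ forces $\pi'_c$ to meet $B(v^*,4\cdot 8^i)$, which contains only $O(1)$ elements of $P$. Finally, each $\pi'\in P$ is charged by only $O(1)$ different $c$'s, because the triggering pair extends $\pi'$ by at most $\tfrac{1}{8}\cdot 8^i$ on each side and Lemma~\ref{l:shorcutsize} caps the number of $C[i-1]$ vertices within that radius of each endpoint of $\pi'$.

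The main obstacle is the density calculation behind coverage. The window $[\tfrac{3}{4}\cdot 8^i,8^i]$ has width only $\tfrac{1}{4}\cdot 8^i$, so the three distance estimates have to be pushed tight enough that the walk does not leap past the window. That is why the case split above needs to exploit both ingredients of $C[i-1]$ simultaneously: the containment $V(E[\geq i-1])\subseteq C[i-1]$ closes off all long boundary edges, while recursive application of the level-$(i-1)$ construction disposes of the remaining gaps, all of whose edges are at most $8^{i-2}$.
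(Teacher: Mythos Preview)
Your proof follows the same induction-on-$i$ skeleton as the paper, with the same two-case split on $\maxedge$ for coverage and the same use of the path cover of Lemma~\ref{l:hdpc} for sparseness. It is essentially correct, but differs from the paper in two places worth noting.

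\textbf{Coverage.} Your gap analysis detours through level $i-2$ and Definition~\ref{d:construct}: you argue that if a gap between consecutive $C[i-1]$ vertices were too long, the level-$(i-1)$ construction, run over $C[i-2]$ pairs, would have dropped a $C'[i-1]$ vertex inside it. This works but is needlessly heavy. The paper instead invokes only the inductive hypothesis that $C[i-1]$ is an $(8^{i-1},O(1))$ vertex cover, and directly asserts the existence of two $C[i-1]$ vertices $v_6,v_7$ on $p$ at the right distances. The point you are implicitly proving (that any vertex on $p$ not in $C[i-1]$ lies outside $V(E[\ge i-1])$, so all nearby edges have length $\le 8^{i-2}$, so the vertex-cover property of $C[i-1]$ alone caps the gap) does not require descending to $C[i-2]$ or re-running the construction.

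\textbf{Sparseness.} Here you miss the paper's central trick. You bound the number of $c\in C'[i]$ that land on a given $\pi'\in P$ by counting how many triggering pairs $(v_c,v'_c)$ could have $\pi'$ as their covering subpath, then appeal to Lemma~\ref{l:shorcutsize} to cap that count at $O(1)$. The paper instead observes that this multiplicity is exactly~$1$: when $c$ is added, its triggering path $p(v_5,v_6)$ contains no $C'[i]$ vertex, hence neither does $\pi'\subseteq p(v_5,v_6)$; once $c$ (which lies on $\pi'$) is inserted, any later triggering path that would map to $\pi'$ already contains $c$ and so adds nothing. This injectivity is what makes the charging clean and is really the payoff of the ``only add if uncovered'' rule in Definition~\ref{d:construct}. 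Your argument is valid but sidesteps this, and as a result leans on an extra lemma you do not need.

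A minor point: the paper takes the base case at $i=-1$, where $C[-1]=\emptyset$ and both clauses are vacuous; your base case at $i=0$ with $C[0]=V$ also works but requires the extra observation that $|B(v,2)|=O(1)$.
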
 

\begin{proof}
We proceed by induction on $i$. First, the base case of $C[-1]$, where we argue each requirement of a vertex cover separately:
\begin{enumerate}
    \item \emph{All shortest paths $p(v_1,v_2)$, $v_1,v_2 \in V$, of length $d(v_1,v_2) > 8^{-1} $ and $\maxedge(p(v_1,v_2))
\leq 8^{-1}$ have  some $v_c\in C$ where $v_c \in V(p(v_1,v_2))$.}

This holds trivially as we assume all edges have at least unit length.

\item \emph{For all $v \in V$ $|B(v,2) \cap C[-1]|=O(1)$.}

As $C'[-1]$ is defined to be $\emptyset$ and $E[-1]$ is also $\emptyset$ due to the minimum unit edge length requirement, this holds trivially as $C[-1]=C'[-1] \cup E[-1] = \emptyset
$.

\end{enumerate}

We now prove $C[i]$ is a $(8^i,O(1))$ vertex cover of $G$, given $C[i-1]$ is a $(8^{i-1},O(1))$ vertex cover of $G$ (we note the $O(1)$ is the same for all values of $i$ and depends only on the highway dimension). We argue each requirement of a path cover separately.

\begin{enumerate}
\item  \label{ppp1}
\emph{All shortest paths $p(v_1,v_2)$, $v_1,v_2 \in V$, of length $d(v_1,v_2) > 8^i $ and $\maxedge(p(v_1,v_2))
\leq 8^i$ have  some $v_c\in C$ where $v_c \in V(p(v_1,v_2))$.}

Fix some path $p(v_1,v_2)$ of length $d(v_1,v_2) > 8^i $ where $\maxedge(p(v_1,v_2)) \leq 8^i$. 

There are two cases, depending on the relationship of $\maxedge(p(v_1,v_2))$ with $8^{i-1}$. If $\maxedge(p(v_1,v_2)) > 8^{i-1}$, then 
$ 8^{i-1} < \maxedge(p(v_1,v_2)) \leq   8^{i} $. Thus there is some edge in $E[i]$ on $p(v_1,v_2)$, and since $C[i] \coloneqq C'[i] \cup V(E[i])$ this completes the proof of this case.

\begin{figure}
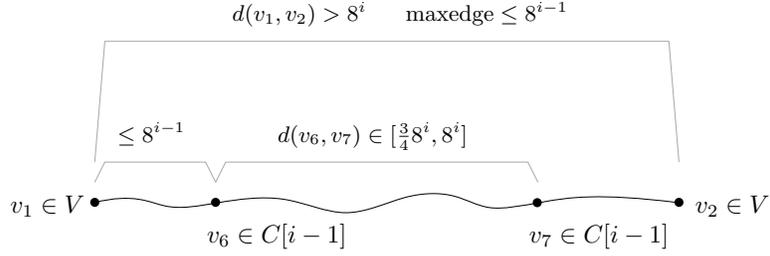

    \centering
    \finalfigfrompage{7}{0}
    \caption{Case where $\maxedge(p_G(v_1,v_2)) \leq 8^{i-1}$ in (\ref{ppp1}) of Lemma~\ref{l:ciscover}.}
    \label{fig:7}.
\end{figure}

Otherwise, $\maxedge(p_G(v_1,v_2)) \leq 8^{i-1}$. See Figure~\ref{fig:7}. Thus, as $C[i-1]$ is an $(8^{i-1},O(1))$ vertex cover, there must be vertices $v_6,v_7 \in C[i-1]$ on $p_G(v_1,v_2)$ such that $d_G(v_1,v_6) \leq 8^{i-1}$ and $d_G(v_1,v_7) \in [7\cdot8^{i-1},8\cdot 8^{i-1}]$.
This implies that $d_G(v_6,v_7) \in [\frac{3}{4}8^i, 8^i]$; thus the path $p_{G[i-1]}(v_6,v_7)$ will be considered by the algorithm that constructs $C'[i]$ and a vertex on the path will be added if it was not there already.

\item \label{ppp2}
\emph{For all $v \in V$ $|B(v,2\cdot 8^i) \cap C[i]|=O(1)$.}

We first bound $|B(v,2\cdot 8^i) \cap C'[i]|$.

    Let $P$ be a $(8^i,O(1))$ path cover of $G$, which exists by Lemma~\ref{l:hdpc}.
    Consider a function $\rho$ that maps every path $p(v_1,v_2)$, $d(v_1,v_2)\in [\frac{3}{4}8^i,8^i]$ to some $p(v_3,v_4) \in P$, $v_3,v_4 \in V$, that is a subpath of $p(v_1,v_2)$ and where $d(v_1,v_3)\leq \frac{1}{8}8^i$ and $d(v_4,v_2)\leq \frac{1}{8}8^i$; this exists by Lemma~\ref{l:hdpc} and Definition~\ref{d:pc}.


\begin{figure}
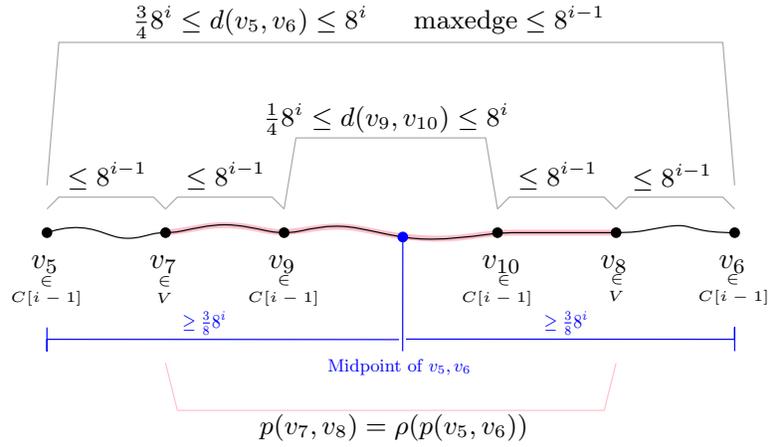

    \centering
    \finalfigfrompage{8}{0}
    \caption{Illustration of when the algorithm for constructing $C'[i]$ adds an element $c$ to $C'[i]$ inside $B(v,2\cdot 8^i)$ after looking at some $v_5,v_6 \in C[i-1]$ and finding no elements of $C'[i]$ on $p(v_5,v_6)$ in (\ref{ppp2}) of the proof of Lemma~\ref{l:ciscover}.}
    \label{fig:8}
\end{figure}

    Consider when the algorithm for constructing $C'[i]$ adds an element $c$ to $C'[i]$ inside $B(v,2\cdot 8^i)$ after looking at some $v_5,v_6 \in C[i-1]$ and finding no elements of $C'[i]$ on $p(v_5,v_6)$. See Figure~\ref{fig:8}. Recall the algorithm only considers $p(v_5,v_6)$ if $\maxedge(v_5,v_6)\leq 8^{i-1}$. 
    We know $d(v_5,v_6)\in [\frac{3}{4}8^i,8^i]$ by the definition of the algorithm. Thus path $\rho(p(v_5,v_6))$ is well defined, which we will denote as $p(v_7,v_8)$, and $d(v_5,v_7)\leq \frac{1}{8}8^i$ and $d(v_6,v_8)\leq \frac{1}{8}8^i$. Let $v_9$ and $v_{10}$ be vertices of $C[i-1]$ on $p(v_7,v_8)$ such that $d(v_7,v_9) \leq \frac{1}{8}8^i$ and $d(v_8,v_{10}) \leq \frac{1}{8}8^i$; these exist as by assumption $C[i-1]$ is an $(8^{i-1},O(1))$ vertex cover. Thus since $d(v_5,v_9) \leq \frac{1}{4}8^i$, and $d(v_5,v_6)\in [\frac{3}{4}8^i,8^i]$, the vertex $v_9$ is on the first half of $p(v_5,v_6)$; symmetrically, $v_{10}$ is on the second half of $p(v_5,v_6)$. The algorithm will choose $c$ such that it is the element of $C[i-1]$ on $p(v_5,v_6)$ closest to the midpoint of $p(v_5,v_6)$, since $v_{9}$ and $v_{10}$ are elements of $C[i-1]$ and are on $p(v_7,v_8)$ which is a subpath of $p(v_5,v_6)$, we know $c$ is on 
    $p(v_7,v_8) = \rho(p(v_5,v_6))$ and $\rho(p(v_5,v_6))$ previously had no elements of $C'[i]$. Furthermore $\rho(p(v_5,v_6))$ is inside $B(v,3\cdot 8^i)$ as $c$ is inside $B(v,2\cdot 8^i)$ and $c$ is on $\rho(p(v_5,v_6))$ which has maximum length $8^i$.

Thus the size of $C'[i] \cap B(v,2 \cdot 8^i)$ is at most the number of paths in $P$ that intersect $B(v,3 \cdot 8^i)$. As $P$ is a $(8^i,O(1))$ path cover, we know that there are at most $O(1)$ paths in $P$ that intersect any ball $B(v,4 \cdot 8^i)$.

We have thus bounded the size of $C'[i] \cap B(v,2 \cdot 8^i)$ to be $O(1)$, and Lemma~\ref{lem:bigedge} and Corollary~\ref{cor:double} bound the size of $E[i] \cap B(v,2 \cdot 8^i)$ to be $O(1)$.

Recalling that $C[i]=C'[i] \cup E[i]$ completes the lemma.
\end{enumerate}

\end{proof}
%
%
%
%

\subsection[{The C[i]'s are SPHS's}]{The $C[i]$'s are SPHS's}

\begin{lemma}
\jlabel{l:cisphs}
	The $C[i]$'s are SPHS's.
\end{lemma}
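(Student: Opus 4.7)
The plan is to invoke Lemma~\ref{l:vclesphs} directly, since all the work has already been done in building the $C[i]$'s: they are vertex covers at the appropriate scale, and the ``long edge endpoints'' that Lemma~\ref{l:vclesphs} would adjoin to a vertex cover are already present inside $C[i]$.

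First I would apply Lemma~\ref{l:ciscover} to get that $C[i]$ is a $(8^i, O(1))$ vertex cover of $G$. To cast this into Lemma~\ref{l:vclesphs}, I would set $r = 3\cdot 8^i$, so that the lemma's hypothesis requires an $(r/3, k) = (8^i, k)$ vertex cover—exactly what Lemma~\ref{l:ciscover} supplies—and an auxiliary edge set $E'$ consisting of all edges of length at least $r/3 = 8^i$.

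Next I would verify that $V(E') \subseteq C[i]$, so that the union $C[i] \cup V(E')$ in Lemma~\ref{l:vclesphs} collapses to $C[i]$ itself. This is immediate from the construction: $C[i] = C'[i] \cup V(E[\geq i])$, and $V(E[\geq i])$ contains the endpoints of every edge whose length exceeds $8^{i-1}$. In particular, it contains the endpoints of every edge of length at least $8^i$, i.e., $V(E[\geq i]) \supseteq V(E')$. Lemma~\ref{l:vclesphs} then yields that $C[i]$ is a $(3\cdot 8^i, O(1))$-SPHS, which is what it means for the $C[i]$'s to be SPHS's at the scale $\Theta(8^i)$.

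There is no real obstacle here; this is a bookkeeping step whose whole purpose is to plug our cleaner vertex-cover notion into the SPHS framework of~\cite{HD} so that downstream lemmas about SPHS's from the HD paper (as promised in the discussion preceding Lemma~\ref{l:vclesphs}) become available for the hierarchy $\{C[i]\}$. The only thing to be careful about is that the factor of $3$ between the vertex-cover scale $8^i$ and the SPHS scale $3\cdot 8^i$ is harmless for all later applications, since both are $\Theta(8^i)$ and the hidden constants are absorbed into the $O(1)$ local density.
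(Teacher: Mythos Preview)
Your proposal is correct and matches the paper's own proof essentially line for line: invoke Lemma~\ref{l:ciscover} to get that $C[i]$ is a $(8^i,O(1))$ vertex cover, observe that $C[i]$ already contains the endpoints of all edges of length at least $8^{i-1}$ (hence in particular of length at least $8^i$), and apply Lemma~\ref{l:vclesphs} with $r=3\cdot 8^i$ to conclude that $C[i]$ is a $(3\cdot 8^i,O(1))$-SPHS.
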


\begin{proof}
We know that $C[i]$ is a $(8^i,O(1))$ vertex cover.
We also know that $C[i]$ contains the endpoints of all edges of length at least $8^{i-1}$. Thus by Lemma~\ref{l:vclesphs}, $C[i]$ is a $(3\cdot 8^i,O(1))$-SPHS.
\end{proof}

\pagebreakk

\section{Bounding the size and height of the structure}

Given $C \subseteq V$ and $r>0$, let  $\ballcover_G(C,r)$ be a minimum cardinality subset $C'$ of $C$ such that 
$$ C \subseteq \bigcup_{v\in C'}B_G(v,r).$$

Let $BC[i] \coloneqq \ballcover(C[i],8^i)$, and let $IG[i]$ be the graph with $BC[i]$ as vertices and edges between elements of $BC[i]$ if their distance is at most $3 \cdot 8^i$:

    $$IG[i]\coloneqq (BC[i],\{\{v_1,v_2\}|v_1,v_2 \in  BC[i]\text{ and } d(v_1,v_2) \leq 3 \cdot 8^i ).$$

\begin{lemma}\jlabel{l:connected}{\bf High levels of $IG[i]$ are connected.}
 For all $i$ such that $\log_8 U < i $ the graph $IG[i]$ is connected.
\end{lemma}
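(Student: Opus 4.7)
The plan is to take any $u,w\in BC[i]$ and exhibit a walk in $IG[i]$ from $u$ to $w$ obtained by following the $G$-shortest path $p_G(u,w)$ as a scaffold. The hypothesis $\log_8 U<i$ yields $U<8^i$, so every edge of $G$ has weight strictly less than $8^i$; in particular the max-edge precondition in the vertex-cover property of $C[i]$ (Lemma~\ref{l:ciscover}) is vacuous on every subpath of $p_G(u,w)$.

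Enumerate the $C[i]$-vertices encountered in order along $p_G(u,w)$ as $u=c_0,c_1,\dots,c_k=w$. A standard interior-subpath argument, applying Lemma~\ref{l:ciscover} to the subpath of $p_G(c_j,c_{j+1})$ with its first and last edges dropped, yields $d_G(c_j,c_{j+1})\leq 8^i+2U$: any larger distance would force a $C[i]$-vertex strictly between $c_j$ and $c_{j+1}$ and contradict consecutivity. For each $c_j$ pick $v_j\in BC[i]$ with $d_G(c_j,v_j)\leq 8^i$ (which exists because $BC[i]$ is a radius-$8^i$ ball cover of $C[i]$), taking $v_0=u$ and $v_k=w$. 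The triangle inequality gives $d_G(v_j,v_{j+1})\leq 3\cdot 8^i+2U$; once this is tightened to $\leq 3\cdot 8^i$, each consecutive pair is either identical or joined by an edge of $IG[i]$, and concatenating the resulting edges produces the desired walk.

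The main obstacle is closing the residual $2U$ slack, and the plan is to case-analyze $d_G(c_j,c_{j+1})$. Definition~\ref{d:construct} excludes the interval $[\tfrac{3}{4}\cdot 8^i,8^i]$, since in that regime the midpoint construction would insert a $C'[i]$-vertex strictly between $c_j$ and $c_{j+1}$, contradicting consecutivity. The range $d_G(c_j,c_{j+1})\leq \tfrac{3}{4}\cdot 8^i$ gives $d_G(v_j,v_{j+1})\leq \tfrac{11}{4}\cdot 8^i<3\cdot 8^i$ directly from the triangle inequality. The residual range $(8^i,8^i+2U]$ is handled by combining $V(E[\geq i])\subseteq C[i]$ with the $C[i-1]$-density along the subpath (Lemma~\ref{l:ciscover} applied at level $i-1$): the presence of any edge of weight exceeding $8^{i-1}$ on $p_G(c_j,c_{j+1})$ places both its endpoints in $C[i]$, and careful tracking forces these endpoints to coincide with $c_j,c_{j+1}$, yielding $d_G(c_j,c_{j+1})\leq U<8^i$; otherwise the pair is processed by the construction of $C'[i]$ and generates a midpoint in $C[i]$ strictly inside. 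I expect this residual case to be the most delicate part of the argument.
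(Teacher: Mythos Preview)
The paper's proof takes a much simpler route that sidesteps your $2U$ slack entirely. Rather than enumerate $C[i]$-vertices along $p_G(u,w)$, it walks the path \emph{edge by edge in $G$}: for each vertex $v$ on the path it assigns a witness $\beta(v)\in BC[i]$ with $d(v,\beta(v))\le 8^i$, and then for $G$-adjacent $v_1,v_2$ one has $d(\beta(v_1),\beta(v_2))\le 8^i + d(v_1,v_2) + 8^i \le 2\cdot 8^i + U \le 3\cdot 8^i$ directly from $U<8^i$. No case analysis on the structure of $C[i]$ or the construction in Definition~\ref{d:construct} is needed at all. Your detour through consecutive $C[i]$-vertices is precisely what manufactures the slack you then struggle to close.

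Your residual case has a genuine gap. When $d_G(c_j,c_{j+1})\in(8^i,8^i+2U]$ with $\maxedge\le 8^{i-1}$, you write that ``the pair is processed by the construction of $C'[i]$ and generates a midpoint in $C[i]$ strictly inside''---but Definition~\ref{d:construct} only processes pairs at distance in $[\tfrac{3}{4}\cdot 8^i,\,8^i]$, so the pair $(c_j,c_{j+1})$ itself is \emph{not} processed. To repair this you would have to exhibit a sub-pair $a,b\in C[i-1]$ on $p_G(c_j,c_{j+1})$ at the correct distance and then argue that whatever $C'[i]$-vertex covers $p(a,b)$ lands strictly in the interior rather than at $c_j$ or $c_{j+1}$; that last step is not automatic, since if $c_j\in C'[i]$ already the construction may simply skip without inserting anything new. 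Your middle case is also mis-argued---the interval $[\tfrac{3}{4}\cdot 8^i,8^i]$ is not in fact excluded, for the same reason (the check in Definition~\ref{d:construct} may find $c_j$ or $c_{j+1}$ already in $C'[i]$ and add nothing)---but that one is harmless, since there the triangle inequality already gives $d(v_j,v_{j+1})\le 3\cdot 8^i$.
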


\begin{proof}
Fix $i$.
Let the witness ball of $v$, $\beta(v)$, be some element $v'$ of $BC[i]$ such that $v \in B(v',8^i)$, with the specific condition that if $v\in BC[i]$, $\beta(v)=v$.
Consider any two $v_1,v_2$ that are adjacent in $G$. We argue that $\beta(v_1)$ and $\beta(v_2)$ are adjacent in $IG[i]$. Note that $d(\beta(v_1),v_1)$ and $d(\beta(v_2),v_2)$ are both at most $8^i$ by construction. We also know that $d(v_1,v_2)\leq U \leq 8^k$, and thus $d(\beta(v_1),\beta(v_2)) \leq 3 \cdot 8^i$, which means that $\beta(v_1)$ and $\beta(v_2)$ are connected in $IG[i]$.

This shows that any $v_3,v_4 \in BC[i]$ are connected in $IG[i]$ as for each adjacent $v_1,v_2$ on $p(v_3,v_4)$, their witness balls $\beta(v_1)$ and $\beta(v_2)$ are connected in $IG[i]$ and $\beta(v_3)=v_3$ and $\beta(v_4)=v_4$.
\end{proof}

\begin{lemma} \jlabel{l:mdig} {\bf Maximum degree of intersection graph.}
The maximum degree of $IG[i]$ is $O(1)$.
\end{lemma}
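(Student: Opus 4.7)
The plan is to bound the degree of $IG[i]$ by observing that every neighbor of a fixed $v \in BC[i]$ lies in $BC[i] \cap B(v, 3 \cdot 8^i)$, and then to show that this set has size $O(1)$ by combining two facts already established: generalized doubling (Corollary~\ref{cor:double}) to split a radius $3 \cdot 8^i$ ball into a constant number of radius $8^i$ pieces, and the local sparsity of $C[i]$ (Lemma~\ref{l:ciscover}) to bound how many $BC[i]$ vertices each piece can contain.

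More concretely, fix $v \in BC[i]$. Its neighbors in $IG[i]$ are, by definition, the other elements of $BC[i] \cap B(v, 3 \cdot 8^i)$. Since $BC[i] \subseteq C[i]$, it suffices to bound $|C[i] \cap B(v, 3 \cdot 8^i)|$. Applying Corollary~\ref{cor:double} with $r = 3 \cdot 8^i$ and $r' = 8^i$ (so $r/r' = 3 = O(1)$) yields a set $V' \subseteq V$ of size $O(1)$ such that
\[
B(v, 3 \cdot 8^i) \subseteq \bigcup_{w \in V'} B(w, 8^i).
\]
For each $w \in V'$ we have $B(w, 8^i) \subseteq B(w, 2 \cdot 8^i)$. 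By Lemma~\ref{l:ciscover}, $C[i]$ is an $(8^i, O(1))$ vertex cover of $G$, so by Definition~\ref{d:vc} we get $|C[i] \cap B(w, 2 \cdot 8^i)| = O(1)$ for each such $w$.

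Summing over the $O(1)$ balls in the cover gives
\[
|BC[i] \cap B(v, 3 \cdot 8^i)| \leq |C[i] \cap B(v, 3 \cdot 8^i)| \leq \sum_{w \in V'} |C[i] \cap B(w, 2 \cdot 8^i)| = O(1),
\]
which bounds the degree of $v$ in $IG[i]$ by $O(1)$ as required. There is really no obstacle here: the only thing to be careful about is making sure the ratio $3 \cdot 8^i / 8^i = 3$ fits into the hypotheses of the generalized doubling corollary, and that the final overhead $2 \cdot 8^i$ (coming from the definition of a vertex cover) stays a constant factor larger than the ball radius $8^i$ used in the cover, which it trivially does.
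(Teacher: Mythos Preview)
Your proof is correct and follows essentially the same approach as the paper: both use Corollary~\ref{cor:double} together with Lemma~\ref{l:ciscover} to bound $|C[i]\cap B(v,3\cdot 8^i)|$ by $O(1)$, and hence the degree in $IG[i]$. Your version simply spells out the doubling-plus-vertex-cover argument in more detail than the paper's terse one-line invocation.
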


\begin{proof}
If $v_1,v_2$ is an edge in $IG[i]$, thus $v_1,v_2 \in C[i]$ and $d(v_1,v_2) \leq 3 \cdot 8^i$.
By Corollary~\ref{cor:double}, and Lemma~\ref{l:ciscover} which shows that $C[i]$ is a $(8^i,O(1))$ vertex cover, there are only $O(1)$ elements of $C[i] \cap B(v_1,3 \cdot 8^i)$.
\end{proof}

\begin{lemma} \jlabel{l:tur} {\bf Relating covers of different size balls.}
There is a positive constant $c_{vc}<1$ such that for all $i> \log_8 U$, there is a set $VC[i]$ such that
$C[i] \subseteq \bigcup_{v \in VC[i]}B(v,8^{i+1})$ and 
$|VC[i]| \leq c_{vc}|BC[i]|$. That is, $\ballcover(C[i],8^{i+1}) \leq 
c_{vc} \cdot |\ballcover(C[i],8^{i})|$.
\end{lemma}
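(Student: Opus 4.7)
The plan is to exploit the bounded-degree, connected structure of the intersection graph $IG[i]$ (Lemmas~\ref{l:connected} and~\ref{l:mdig}) to find a large matching, and to replace each matched pair of centers of $BC[i]$ by a single slightly larger ball. Since $i > \log_8 U$, $IG[i]$ is connected by Lemma~\ref{l:connected}, and by Lemma~\ref{l:mdig} it has maximum degree at most some constant $\Delta = O(1)$.

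First I would compute a maximal matching $M$ in $IG[i]$ and let $U := BC[i] \setminus V(M)$ denote the unmatched vertices. Maximality of $M$ implies that $U$ is an independent set in $IG[i]$, so every neighbor of every $u \in U$ lies in $V(M)$. Counting edges out of $V(M)$ gives
$$|U| \leq \Delta \cdot |V(M)| = 2\Delta|M|,$$
and hence $|BC[i]| = |U| + 2|M| \leq 2(\Delta+1)|M|$, i.e.\ $|M| \geq |BC[i]|/(2(\Delta+1))$.

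Next, define $VC[i]$ by including all of $U$ together with one endpoint from each edge of $M$. Then $|VC[i]| = |U| + |M| = |BC[i]| - |M| \leq (1 - 1/(2(\Delta+1)))|BC[i]|$, so setting $c_{vc} := 1 - 1/(2(\Delta+1))$ gives the required constant strictly less than $1$. To verify the covering property, note that for an unmatched $u \in U$ we trivially have $B(u,8^i) \subseteq B(u,8^{i+1})$, and for a matched pair $\{v_1,v_2\} \in M$ with chosen representative $v_1$, the definition of $IG[i]$ gives $d(v_1,v_2) \leq 3 \cdot 8^i$, so $B(v_2,8^i) \subseteq B(v_1, 4 \cdot 8^i) \subseteq B(v_1, 8^{i+1})$. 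Combined with $C[i] \subseteq \bigcup_{v \in BC[i]} B(v,8^i)$, this shows $C[i] \subseteq \bigcup_{v \in VC[i]} B(v,8^{i+1})$.

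The main (mild) obstacle is identifying the correct ``savings'' mechanism: a naive application of doubling (Corollary~\ref{cor:double}) bounds $|VC[i]|$ by $O(|BC[i]|)$ but does not give a strictly shrinking factor. The matching argument on $IG[i]$ is what turns the constant-degree, connected structure into a guaranteed constant-factor reduction. A minor edge case arises when $|BC[i]| = 1$, where the matching is empty and $|VC[i]| = |BC[i]|$; this case is easy to sidestep in the height-counting argument that presumably uses this lemma, since once $|BC[i]|$ has dropped to a constant the recursion can terminate.
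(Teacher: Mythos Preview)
Your proof is correct and follows the same overall architecture as the paper: both arguments work in the intersection graph $IG[i]$, use Lemma~\ref{l:connected} to rule out isolated vertices, and use Lemma~\ref{l:mdig} (bounded degree) to extract a dominating-type subset of $BC[i]$ of size at most a fixed constant fraction of $|BC[i]|$, then absorb each discarded center into a neighboring ball of radius $4\cdot 8^i \le 8^{i+1}$.

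The only difference is the combinatorial gadget used to obtain the shrinking factor. The paper invokes Tur\'an's theorem on $IG[i]$ to get a vertex cover of size at most $\tfrac{d}{d+1}|BC[i]|$ (with $d$ the average degree), and then uses connectivity to observe that this vertex cover is automatically a dominating set. You instead take a maximal matching $M$, bound $|M| \ge |BC[i]|/(2(\Delta+1))$ via the independence of the unmatched set, and keep one endpoint per matched edge together with all unmatched vertices. Both constructions yield a dominating set of $IG[i]$ with the same qualitative guarantee $|VC[i]| \le c_{vc}|BC[i]|$ for some $c_{vc}<1$ depending only on the highway dimension; Tur\'an is slightly slicker as a citation, while your matching argument is more explicit and self-contained. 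The edge case $|BC[i]|=1$ that you flag is present in the paper's argument as well (an empty vertex cover of a single isolated vertex is not dominating), and is handled, as you say, by stopping once $|BC[i]|$ is a constant.
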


\begin{proof}
    Turan's theorem states that any graph $G$ of average degree $d_G$ has an independent set of size at least $\frac{1}{1+d_G}|V|$ and a vertex cover of size at most $\frac{d_G}{1+d_G}|V|$. Thus, as $IG[i]$ has average degree at most $O(1)$, $IG[i]$ has a vertex cover of size at most $c_{vc}|BC[i]|$ for some positive $c_{vc}<1$. Call such a cover $VC[i]$. 

    Observe that 
  $C[i] \subseteq \bigcup_{v \in VC[i]} B(v,3\cdot 8^i)$ since every element of $C[i]$ is either in $B(v,8^i)$ for some $v \in VC[i]$ or in some $B(v',8^i)$ that intersects $B(v,8^i)$. This requires that $VC[i]$ is connected as proved in Lemma~\ref{l:connected}, as a vertex cover simply ensures that there is one vertex in the cover adjacent to every edge and does not require that degree-0 vertices be in the cover.
\end{proof}




\begin{lemma}
\jlabel{l:bcs} {\bf Ball cover sizes are geometric.}
When $i\geq \log_8 U +1$,
    $|BC[i]| \leq c_{vc} \cdot BC[i-1]$
\end{lemma}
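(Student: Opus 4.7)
The plan is to construct an explicit ballcover of $C[i]$ by balls of radius $8^i$ centered at points of $C[i]$ whose size is at most $c_{vc}|BC[i-1]|$; the claim then follows immediately from the minimality in the definition of $BC[i]$. The raw material is the set $VC[i-1]$ delivered by Lemma~\ref{l:tur} applied at index $i-1$, which is legitimate by the hypothesis on $i$. This yields $VC[i-1] \subseteq C[i-1]$ with $|VC[i-1]| \leq c_{vc}|BC[i-1]|$, and --- reading the proof of Lemma~\ref{l:tur} rather than just its looser statement --- every vertex of $C[i-1]$ lies within distance $3 \cdot 8^{i-1}$ of some element of $VC[i-1]$.

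Next I would observe that $C[i] \subseteq C[i-1]$: Definition~\ref{d:construct} forces $C'[i] \subseteq C[i-1]$, while $V(E[\geq i]) \subseteq V(E[\geq i-1]) \subseteq C[i-1]$. Hence every vertex of $C[i]$ is also within $3 \cdot 8^{i-1}$ of $VC[i-1]$. The remaining issue --- and the only real one --- is that the centers in $VC[i-1]$ live in $C[i-1]$ but $\ballcover(C[i], 8^i)$ demands centers drawn from $C[i]$. To fix this, for each $v \in VC[i-1]$ with $B(v, 3 \cdot 8^{i-1}) \cap C[i] \not = \emptyset$, I would pick an arbitrary $\phi(v)$ in that intersection and discard those $v$'s whose ball misses $C[i]$ altogether (they cover nothing of interest). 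For any $c \in C[i]$ there is a $v \in VC[i-1]$ with $c \in B(v, 3 \cdot 8^{i-1})$, and then by the triangle inequality $d(\phi(v), c) \leq 6 \cdot 8^{i-1} < 8^i$, so $c \in B(\phi(v), 8^i)$. The resulting set $\{\phi(v)\}$ is thus a valid cover of $C[i]$ by balls of radius $8^i$ centered at points of $C[i]$, of size at most $|VC[i-1]| \leq c_{vc}|BC[i-1]|$, yielding the claim.

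The hardest part is recognizing that one must invoke the sharper radius bound from the proof of Lemma~\ref{l:tur} rather than the $8^i$ bound in its statement: using the stated bound, the center-swap via triangle inequality would blow the radius up to $2 \cdot 8^i$ and the argument would fail. The factor of $8$ between $8^{i-1}$ and $8^i$ is precisely what provides the slack to absorb the swap cost while staying within $8^i$.
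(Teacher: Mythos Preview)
Your argument is correct. The paper's own proof is the terse three-line chain
\[
|BC[i]| \;=\; |\ballcover(C[i],8^i)| \;\leq\; |\ballcover(C[i-1],8^i)| \;\leq\; c_{vc}\,|\ballcover(C[i-1],8^{i-1})|,
\]
invoking $C[i]\subseteq C[i-1]$ for the first inequality and Lemma~\ref{l:tur} (at index $i-1$) for the second. Your proof follows the same outline but is more careful on one point that the paper glosses over: the first inequality is \emph{not} justified by the inclusion $C[i]\subseteq C[i-1]$ alone, because by definition $\ballcover(S,r)$ must draw its centers from $S$, and ball-cover size is not monotone under passing to a subset (on a unit-weight path with vertices at $0,1,2$ and $r=1$, the full set has a cover of size~$1$ but the subset $\{0,2\}$ needs~$2$). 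Your explicit center-swap---replacing each $v\in VC[i-1]$ by some $\phi(v)\in C[i]\cap B(v,3\cdot 8^{i-1})$ and invoking the triangle inequality---is exactly what is needed to make that step rigorous, and your observation that one must use the $3\cdot 8^{i-1}$ radius extracted from the \emph{proof} of Lemma~\ref{l:tur} (rather than the $8^{i}$ in its statement) so that the swapped radius $6\cdot 8^{i-1}$ stays below $8^i$ is spot on. In short, you have reconstructed the paper's intended argument and patched the one place where it was written too loosely.
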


\begin{proof}
We use the fact that by construction $C[i] \subseteq C[i-1]$:
\begin{align*}
    |BC[i]| &= |\ballcover(C[i],8^i)|
    & \text{Definition}
\\
&\leq \ballcover(C[i-1],8^i) 
& C[i] \subseteq C[i-1]
\\
&\leq c_{vc} \ballcover(C[i-1],8^{i-1})  
& \text{Lemma~\ref{l:tur}}
\end{align*}
\end{proof}

\begin{lemma}\jlabel{l:size} {\bf Size of the structure}
The sum of the sizes of the nonempty elements of $C[i]$, $C'[i]$, $E[i]$, and $G[i]$ are all $O\left( |V| \log U \right)$ for constant $h$.    
\end{lemma}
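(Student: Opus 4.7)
The plan is to bound the four families separately, using the $(8^i, O(1))$ vertex-cover property of $C[i]$ from Lemma~\ref{l:ciscover}, the constant degree of $G[i]$ from Lemma~\ref{l:degree}, and the geometric decay of $|BC[i]|$ beyond $\log_8 U$ from Lemma~\ref{l:bcs}. The number of nonempty levels is automatically finite: once $|BC[i]|<1$ we have $BC[i]=\emptyset$ and hence $C[i]=\emptyset$, and the geometric contraction will force this to happen.

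The edge sets $E[i]$ partition $E$, and a graph of constant highway dimension has $|E|=O(|V|)$ by Lemma~\ref{l:hddegree}, so $\sum_i |E[i]| = O(|V|)$.

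For $\sum_i |C[i]|$, I would first establish a local bound $|C[i]| = O(|BC[i]|)$. Since $BC[i]$ is a cover of $C[i]$ by balls of radius $8^i$, every element of $C[i]$ lies in some $B(v, 8^i)$ with $v\in BC[i]$; by condition (2) of Definition~\ref{d:vc} applied to the $(8^i,O(1))$ vertex cover $C[i]$, each such ball meets $C[i]$ in only $O(1)$ vertices, so $|C[i]|\leq \sum_{v\in BC[i]} O(1) = O(|BC[i]|)$. Then I would split $\sum_i |BC[i]|$ at the threshold $\log_8 U$. For $0\leq i\leq \log_8 U$, use the trivial bound $|BC[i]|\leq |C[i]|\leq |V|$; these $O(\log U)$ levels contribute $O(|V|\log U)$. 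For $i>\log_8 U$, iterating Lemma~\ref{l:bcs} yields $|BC[i]|\leq c_{vc}^{\,i-\lceil\log_8 U\rceil}|V|$, and summing this geometric tail gives $O(|V|)$. Combining, $\sum_i |C[i]| = O(|V|\log U)$.

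The remaining families follow immediately: $C'[i]\subseteq C[i]$ gives $\sum_i |C'[i]|=O(|V|\log U)$, and because $G[i]$ has vertex set $C[i]$ with maximum degree $O(1)$ by Lemma~\ref{l:degree}, its total size is $O(|C[i]|)$, so $\sum_i |G[i]|=O(|V|\log U)$. The main obstacle is really the dichotomy at $i=\log_8 U$: below this threshold we have no tool stronger than the trivial per-level bound $|V|$, which is exactly what forces the $\log U$ factor, while above it the geometric contraction from Lemma~\ref{l:bcs} makes the high-level tail cost only $O(|V|)$. The only subtlety in the local step is ensuring we do not overcount: this is handled by the vertex-cover sparseness $|B(v, 2\cdot 8^i)\cap C[i]|=O(1)$, which absorbs any constant amount of ball overlap.
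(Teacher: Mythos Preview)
Your proof is correct and follows essentially the same approach as the paper: reduce everything to bounding $\sum_i |C[i]|$, split at the threshold $k\approx\log_8 U$, use the trivial bound $|C[i]|\le |V|$ for the $O(\log U)$ low levels, and for the high levels combine $|C[i]|=O(|BC[i]|)$ (from the vertex-cover sparseness) with the geometric decay of $|BC[i]|$ from Lemma~\ref{l:bcs}. The paper's proof is the same in structure and in the lemmas invoked; if anything, you are slightly more explicit in justifying the step $|C[i]|=O(|BC[i]|)$.
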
 
\begin{proof}
It suffices to bound $C[i]$ as $C'[i]$ is a subset of $C[i]$, and $G[i]$ has elements of $C[i]$ as vertices and its vertices have degree at most $O(1)$ by Lemma~\ref{l:degree}. The $E[i]$ partition the edges, and $|E|$ is at most $O(|V|)$ by Lemma~\ref{l:hddegree}.

Let $k=\lceil \log_8 U \rceil +1 $. Every $C[i]$ has size at most $|V|$, thus $\sum_{i=0}^{k-1} |C[i]|=O(|V| \log U)$.


We know $|BC[k]|\leq |V|$ and when $i>k$, there is a positive $c_{vc}<1$ such that $|BC[i]| \leq c_{vc} \cdot |BC[i-1]|$ by Lemma~\ref{l:bcs}. Thus $\sum_{i=k}^\infty |BC[i-1]| = O(|V|)$.
Additionally, $ |BC[i]| =\Theta(C[i])$, since each ball in $BC[i]$ must intersect at least 1 and at most a constant number of elements of $C[i]$.  Thus   $\sum_{i=k}^\infty C[i] =O(|V|)$.
\end{proof}

\begin{lemma}\jlabel{l:height}{\bf Height of the structure.}
The maximum nonempty $C[i]$ is $h=O(\log (|V|+U))$.
\end{lemma}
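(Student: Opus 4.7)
The plan is to bound separately the range of levels below $\log_8 U$ and those above it, and then combine the two.

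First, I would observe that $C[i]$ is empty exactly when $BC[i]$ is empty, since $BC[i]\subseteq C[i]$ (so $BC[i]\neq\emptyset$ implies $C[i]\neq\emptyset$), and conversely if $C[i]\neq\emptyset$ any single point of $C[i]$ forces $|BC[i]|\geq 1$. So it suffices to find the smallest $i$ for which $BC[i]=\emptyset$.

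For the ``low'' range $0\leq i\leq \lceil\log_8 U\rceil$, there are trivially only $O(\log U)$ levels, and each $C[i]$ is a subset of $V$ so it is nonempty only for at most $O(\log U)$ indices in this range.

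The main content is the ``high'' range $i>\lceil\log_8 U\rceil$. Set $k\coloneqq \lceil\log_8 U\rceil+1$. At level $k$ we have $|BC[k]|\leq |C[k]|\leq |V|$. By Lemma~\ref{l:bcs}, for every $i\geq k$ we have the geometric contraction $|BC[i]|\leq c_{vc}\cdot|BC[i-1]|$ with a fixed constant $c_{vc}<1$. Iterating this bound gives
\[
|BC[k+j]|\leq c_{vc}^{\,j}\cdot |V|,
\]
so $|BC[k+j]|<1$ as soon as $j>\log_{1/c_{vc}}|V|=O(\log|V|)$, and since $|BC[k+j]|$ is a nonnegative integer this means $BC[k+j]=\emptyset$, hence $C[k+j]=\emptyset$. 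Adding the two ranges gives a maximum nonempty index of $O(\log U)+O(\log|V|)=O(\log(|V|+U))$, as claimed.

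I do not expect any real obstacle here: all the work has been done in Lemma~\ref{l:bcs}, which already packages the geometric decrease of ball cover sizes; the only care needed is the boundary case where $|BC[i]|$ may be less than $1$ but is integer-valued, and the trivial treatment of the $O(\log U)$ levels below $k$.
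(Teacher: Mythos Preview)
Your proposal is correct and follows essentially the same approach as the paper: set $k=\lceil\log_8 U\rceil+1$, use Lemma~\ref{l:bcs} to get the geometric decay $|BC[k+j]|\leq c_{vc}^{\,j}|V|$, and conclude that $BC$ (and hence $C$) becomes empty after $O(\log|V|)$ further levels. The only difference is that you spell out the equivalence ``$C[i]=\emptyset$ iff $BC[i]=\emptyset$'' and the integrality step explicitly, which the paper leaves implicit.
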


\begin{proof}
Let $k=\lceil \log_8 U \rceil +1 $.
    We know $|BC[k]|\leq |V|$ and when $i>k$, $|BC[i]| \leq c_{vc} \cdot |BC[i-1]|$ for some positive $c_{vc}<1$ by Lemma~\ref{l:bcs}. Thus $|BC[k+m]|$ will reach zero for some $m=O(\log |V|)$, and when $|BC[k+m]|=0$, $C[i]$ is empty. 
\end{proof}

\pagebreakk\section{Computing the structure}
\jlabel{s:computing}

The structure is designed to be computed in time linear in its size. 

The set $C'[i]$ can be computed from $C[i-1]$ by looking from every $v' \in C[i]$ at those vertices in $G[i]$ that are at most $8^i$ from $v$. By Lemma~\ref{l:ciscover}, $C[i-1]$ is a $(8^{i-1},O(1))$ vertex cover, and since $G[i-1]$ has maximum degree $O(1)$ by Lemma~\ref{l:shorcutsize}, it only takes constant time to examine all $O(1)$ vertices in $G[i-1]$ that are at most $8^i$ from $v$. 

Note that when looking at an edge $v_1,v_2$ in a shortcut graph, we need to know the value of $\maxedge(p(v_1,v_2))$ in $G$; this information can easily be augmented on each shortcut edge and computed in constant time at the time of its construction.

\pagebreakk\section{Computing shortest paths and supporting dynamic changes}
\jlabel{s:dynamic}

One familiar with the work done in \cite{HD} will realize that, as in previous sections we described how to compute efficiently a hierarchy of SPHS's, we can directly apply the result of \cite{HD} and conclude that it is possible to construct a data-structure supporting shortest distance, shortest path queries and others.

However, one main achievement of this work is to be able to update our structure to allow queries on dynamically changing graphs. The queries we support are where vertices and edges can be added, deleted, or re-weighted. The issue with the proposed approach in \cite{HD} is that once the SPHS's are defined, these are used as input to construct a new data-structure globally. We would want to find a strategy where local changes are only perturbing the data structure locally. Therefore we will start by showing how to maintain the $C[i]$'s, and how to execute queries directly on the $G[i]$'s. 

\subsection[{Maintaining C[i]'s}.]{Maintaining $C[i]$'s}
\label{s:mci}
We assume that we have a graph $G=(V,E)$ for which we already computed our structure, that is we already have a set of $C[i]$, $C'[i]$, $E[i]$ and $G[i]$ for all $i$. Suppose we have an update (insertion, deletion and weight change); we always consider updates to be edge-related, and because we work on connected graphs, removing the last edge adjacent to a vertex implies the deletion of the vertex; while if an edge is added it must always have one of its endpoints already present in $G$. 

We denote the updated edge by $e=(v_1,v_2)$ and by $G^*=(V^*,E^*)$ the final graph after the update.

Remember that our algorithm to construct $C[i]$ proceeded, for each $i$, by considering all pairs of vertices from $G[i-1]$ whose shortest path in G has length in $[\frac{3}{4}\cdot8^i,8^{i}]$ and $\maxedge_G(p(v,v'))\leq 8^{i-1}$. We will assign pairs of vertices to a color to distinguish them, by extension we say that we color shortest paths, while we actually mean assign a color to its defining pair of vertices.

We color this collection of shortest paths as follows (See Figure~\ref{fig:10}): 
\begin{itemize}
    \item In red all the shortest paths that resulted in adding a vertex to $C'[i]$, and which contain at least one vertex outside of $B(v_1,2\cdot8^i) \cap B(v_2,2\cdot8^i)$.
    \item In black all the shortest paths that did not result in adding a vertex to $C'[i]$, and which contain at least one vertex outside of $B(v_1,2\cdot8^i) \cap B(v_2,2\cdot8^i)$.
    \item In blue all the remaining shortest paths of the collection.
\end{itemize}

\begin{figure}
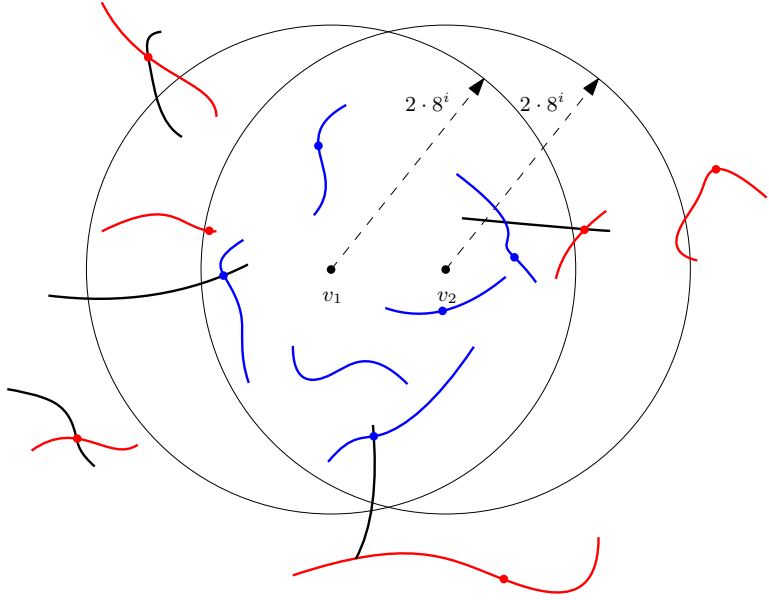

    \centering
    \finalfigfrompage{10}{0}
    \caption{Illustration of coloring of the paths and vertices in Section~\ref{s:mci}.}
    \label{fig:10}
\end{figure}

This coloring scheme depends on the updated edge and the actual order used by the algorithm. By extension, we say that a vertex of a $C'[i]$ is red or blue if it has been added by a corresponding shortest path. In case where either $v_1$ or $v_2$ was not in $G$, we only consider the ball centered at the endpoint that exists in both graphs; and all balls are $B_G$, i.e. defined based on the weight in $G$.

To process the update of edge $e$, we do as follows: we start from $C'[i]$, we remove all the blue vertices to obtain the set $C_{init}$, then we apply our incremental algorithm from Definition~\ref{d:construct} initializing $C'^*[i]$ to $C_{init}$, to obtain $C'^*[i]$ on all the shortest paths of $G^*$ that do contain a vertex of $B(v_1,2\cdot8^i)$ and $B(v_2,2\cdot8^i)$. $C^*[i]$ is then computed by (a) performing the same changes as in $C'^*[i]$ and (b) correcting the  edges $E[>i]$ if the updated edge $e$ has changed weight, was added or removed.

Our claim is as follows: we will show that if we construct $C^*[i]$'s and $C'^*[i]$'s directly using Definition~\ref{d:construct} on $G^*$, there exists an ordering of the pairs of vertices from $G^*[i-1]$ that will result in the exact same sets as the ones obtained by the update. This proves that our updating scheme results in \emph{valid} $C[i]$'s. The proof is inductive and it trivially holds for $i=-1$.

\begin{lemma}\jlabel{l:sameshortest}
All shortest paths with length smaller than $8^{i}$ and $\maxedge_G(p(v,v'))\leq 8^{i-1}$ of $G[i-1]$ that contain at least one vertex outside of $B(v_1,2\cdot8^i) \cup B(v_2,2\cdot8^i)$ are also shortest paths in $G^*[i]$.
\end{lemma}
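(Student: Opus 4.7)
The plan is to show (i) $p$ does not contain $v_1$ or $v_2$ as a vertex, so $p$ is a path of the same length in both $G$ and $G^*$, and (ii) no strictly shorter path can emerge in $G^*$ between the endpoints of $p$.

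For (i), write $p = p_G(v,v')$ and let $v_3 \in V(p)$ be the witness vertex lying outside $B(v_1,2\cdot 8^i)\cup B(v_2,2\cdot 8^i)$. Since $|p|<8^i$ and $v_3\in V(p)$, every vertex $w$ on $p$ satisfies $d_G(w,v_3)<8^i$. The triangle inequality, applied for $j\in\{1,2\}$, then gives
\[ d_G(w,v_j)\ \geq\ d_G(v_3,v_j)-d_G(w,v_3)\ >\ 2\cdot 8^i - 8^i\ =\ 8^i, \]
so $w\notin\{v_1,v_2\}$. In particular, $p$ does not use the updated edge $e=(v_1,v_2)$, and because every other edge is unchanged, $p$ is a valid path of the same length in $G^*$, giving $d_{G^*}(v,v') \leq |p|$.

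For (ii), suppose for contradiction that there exists a path $p^*$ in $G^*$ from $v$ to $v'$ with $|p^*|<|p|$. If $p^*$ avoids $e$, then $p^*$ is also a path in $G$ of the same length, contradicting the fact that $p$ is the shortest path in $G$. Hence $p^*$ must traverse $e$; without loss of generality, let $v_1$ be the endpoint of $e$ that $p^*$ visits first. The prefix of $p^*$ from $v$ to $v_1$ avoids $e$ (since $p^*$, being a shortest path in $G^*$, uses no vertex twice), and is therefore a path in $G$ of length strictly less than $|p^*|<|p|<8^i$, yielding $d_G(v,v_1)<8^i$. But step (i) established $d_G(v,v_1)>8^i$, a contradiction. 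Therefore $p$ remains a shortest path in $G^*$.

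Beyond this core distance argument, the bookkeeping needed to transfer the claim from the underlying graph $G$ to the shortcut graph at level $i-1$ is where I expect the main difficulty to lie. Concretely, one must invoke Observation~\ref{o:shortcutfacts} to argue that $p$'s decomposition into shortcut edges in $G[i-1]$ survives unchanged in the updated shortcut graph: this reduces, by induction on $i$, to showing that every intermediate $C[j]$-vertex on $p$ is itself far from $v_1$ and $v_2$ (by the same triangle-inequality computation as in step (i)) and hence remains in $C^*[j]$ after the update, so that the sequence of shortcut edges realizing $p$ is preserved. The core inequality computation is routine; the careful tracking of which parts of the hierarchy are touched by the update is the substantive obstacle.
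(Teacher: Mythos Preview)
Your proof is correct and follows essentially the same approach as the paper: both argue via the triangle inequality that every vertex of the path lies at distance greater than $8^i$ from $v_1$ and $v_2$, hence the path avoids the updated edge $e$ and no alternate route through $e$ can be shorter. The paper's proof is a terse two-sentence version of your parts (i) and (ii); your write-up is more explicit (spelling out the triangle-inequality bound and the contradiction for a hypothetical shorter $p^*$), and you additionally flag the shortcut-graph bookkeeping that the paper's proof leaves implicit.
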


\begin{proof}
Let $s$ be one of these shortest paths. Because $s$ has length smaller than $8^i$ and contains a vertex outside of $B(v_1,2\cdot8^i) \cup B(v_2,2\cdot8^i)$, we know that every vertex of $s$ is at distance at least $8^i$ from $v_1$ and $v_2$ or that $d_{G^*}(v_1,v_2) > 8^i$. So $s$ does not contain the edge $e$ and no path including $e$ from $v_1$ to $v_2$ has length smaller than $8^i$. 
\end{proof}
In particular, this implies that black and red sets of shortest paths from the construction of $C[i]$ will be considered when building $C^*[i]$. We will see below that reds will stay red (i.e. will still need to be covered by a vertex in $C^*[i]$), but that some black could need to be covered in $C^*[i]$ while they did not generate a vertex in $C[i]$.

A shortest path with length in range $[\frac{3}{4}\cdot8^i,8^{i}]$ and $\maxedge_G(p(v,v'))\leq 8^{i-1}$ is called a {\bf considered} shortest path.

\begin{lemma}\jlabel{l:outsidecovered}
All considered shortest paths of $G^*[i-1]$ that do not contain a vertex of $B(v_1,2\cdot8^i) \cup B(v_2,2\cdot8^i)$ are properly covered by $C_{init}$.
\end{lemma}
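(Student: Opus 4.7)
The plan is to show that every considered shortest path $s$ of $G^*[i-1]$ that avoids $B(v_1,2\cdot 8^i)\cup B(v_2,2\cdot 8^i)$ already contains a vertex of $C_{init}$, using the coloring scheme from the original construction of $C'[i]$. The argument has four steps.

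First, I would argue that $s$ is also a considered shortest path in $G[i-1]$. Since every vertex of $s$ is at distance strictly more than $2\cdot 8^i$ from both $v_1$ and $v_2$, the edge $e$ does not appear on $s$, nor on any alternative path between the endpoints of $s$ of length at most $8^i$ (by the same reasoning as Lemma~\ref{l:sameshortest}, applied in the reverse direction $G^*\to G$). Combined with the inductive hypothesis that $C[i-1]$ and $C^*[i-1]$ agree outside a small neighborhood of $e$ which contains $V(s)$ and its endpoints, this makes $s$ a considered pair during the original construction of $C'[i]$.

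Second, I would invoke the coloring. Because $s$ is disjoint from $B(v_1,2\cdot 8^i)\cup B(v_2,2\cdot 8^i)$, every vertex of $s$ is outside $B(v_1,2\cdot 8^i)\cap B(v_2,2\cdot 8^i)$ as well, so $s$ is not classified as blue; it must be either red or black. In either case the incremental construction of $C'[i]$ guarantees a vertex $c\in C'[i]\cap V(s)$: if $s$ is red, $c$ is the vertex inserted while processing $s$; if $s$ is black, $c$ is a vertex placed earlier whose presence on $s$ is exactly the reason no new vertex was added.

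Third, I would show that this witness $c$ survives the pruning $C_{init}=C'[i]\setminus\{\text{blue vertices}\}$. Every blue vertex was inserted while processing a blue path, which by definition is contained entirely in $B(v_1,2\cdot 8^i)\cap B(v_2,2\cdot 8^i)$; hence every blue vertex lies inside $B(v_1,2\cdot 8^i)\cup B(v_2,2\cdot 8^i)$. Since $s$ avoids this union, no vertex of $s$ is blue, so in particular $c\in C_{init}\cap V(s)$, which establishes that $s$ is properly covered.

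The main obstacle is a clean statement and invocation of the inductive hypothesis on $C[i-1]$ versus $C^*[i-1]$: the coloring and shortest-path structure used in the proof are well-defined only if these two sets (and their shortcut graphs) coincide on the relevant part of the graph containing $s$. Apart from this bookkeeping, the argument is a direct consequence of the fact that blue vertices are geographically confined to the affected region while $s$ is, by hypothesis, outside it.
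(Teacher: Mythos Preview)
Your proposal is correct and follows essentially the same approach as the paper: invoke Lemma~\ref{l:sameshortest} to identify $s$ as a path considered in the original construction, observe from the coloring that $s$ is red or black (hence covered by some $c\in C'[i]$), and then argue that $c$ cannot be blue because blue vertices live inside $B(v_1,2\cdot 8^i)\cap B(v_2,2\cdot 8^i)$ while $s$ avoids even the larger union. Your write-up is in fact more careful than the paper's on two points the paper glosses over: the need for the $G^*\to G$ direction of Lemma~\ref{l:sameshortest}, and the reliance on the inductive hypothesis that $C[i-1]$ and $C^*[i-1]$ agree away from the updated edge.
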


\begin{proof}
This is by construction: by Lemma~\ref{l:sameshortest}, the set of considered shortest path outside of the balls exists in both graphs, so if $C_{init}$ covers those paths in $C'[i]$, it also covers them in $C'^*[i]$. Note that all these shortest paths are red or black paths, and $C_{init}$ contains all red vertices of $C'[i]$. So we already know that all red paths strictly outside of the balls are covered. It remains to see that black shortest path strictly outside of the balls could not be covered by a blue vertex, as no blue vertex is outside of the balls by definition of blue.
\end{proof}

\begin{lemma}
There exists an ordering of the red shortest paths such that the set $C_{init}$ corresponds to the intermediate $C'^*[i]$ obtained by the method of Definition~\ref{d:construct} after we only considered red shortest paths.
\end{lemma}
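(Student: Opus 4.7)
My plan is to take the original construction ordering $\sigma$ used to produce $C'[i]$ from $C[i-1]$, and restrict it to red paths; call this restriction $\sigma_R$. I will argue that running the method of Definition~\ref{d:construct} on $G^*$, presenting the red paths in order $\sigma_R$ and starting from $\emptyset$, accumulates precisely the red vertices of $C'[i]$, which by construction is $C_{init}$.

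First I verify that every red path $p$ is still a considered pair in $G^*$. By definition a red path has at least one vertex outside the local update region around $e$, so Lemma~\ref{l:sameshortest} guarantees that $p$ is still a shortest path in $G^*$ with the same vertex sequence and the same length. Hence the length window $[\tfrac{3}{4}\cdot 8^i,\, 8^i]$ and the constraint $\maxedge(p)\le 8^{i-1}$ are preserved in $G^*$. By the outer induction over $i$, which asserts that the level $i-1$ data structure $G^*[i-1]$ is valid and agrees with $G[i-1]$ outside a bounded neighborhood of $e$, the vertex of $G^*[i-1]$ closest to the midpoint of $p$ is the very same vertex $c_p\in C[i-1]$ that the original construction added when it processed $p$.

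I then walk through $\sigma_R$ and prove by induction on position that, when the $j$-th red path $p$ is processed, the currently accumulated set equals $\{c_q : q \text{ red and } q\text{ precedes }p\text{ in }\sigma\}$ and contains no vertex of $V(p)$. The ``no vertex of $V(p)$'' claim is the heart of the argument: in the original construction, after processing every path of $\sigma$ strictly before $p$ (red \emph{and} blue), the current $C'[i]$ contained no vertex on $V(p)$ --- otherwise $p$ would not have added $c_p$. Throwing away the blue additions can only shrink this intermediate set, so the restricted set also avoids $V(p)$. Consequently the method on $G^*$ finds $p$ uncovered and adds $c_p$, extending the inductive invariant to position $j+1$.

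After all red paths have been processed, the accumulated set is $\{c_p : p \text{ red}\}$, which is exactly the red subset of $C'[i]$, i.e.\ $C_{init}$. The main technical obstacle is the locality claim used to equate the $G^*[i-1]$-midpoint of $p$ with its $G[i-1]$-midpoint; this relies on the outer induction that level-$(i-1)$ updates perturb $C^*[i-1]$ only in a small neighborhood of $e$, and that the portion of each red path used to locate its midpoint lies outside that neighborhood. Everything else is bookkeeping: the two runs make the same local decision from the same local state.
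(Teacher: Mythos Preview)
Your approach is the same as the paper's: take the original construction order, restrict it to the red paths, and argue that each red path still triggers an addition of exactly its original vertex. The paper's own proof is two sentences and simply asserts that ``each red path will require to add a new vertex to $C'^*[i]$, and $C_{init}$ is exactly the set of all those red vertices''; you supply the missing justifications---the subset argument (dropping blue additions only shrinks the intermediate set, so a red path that was uncovered in the full run is still uncovered in the restricted run) and the midpoint-preservation claim (the vertex of $G^*[i-1]$ nearest the midpoint of a red path coincides with the one in $G[i-1]$ because red paths lie entirely outside the level-$(i-1)$ update neighborhood). These are exactly the points the paper glosses over, so your write-up is a strict elaboration of the paper's argument rather than a different route.
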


\begin{proof}
If we apply the algorithm by considering red shortest paths in the same order as in the original $C[i]$, we know that each red path will require to add a new vertex to $C'^*[i]$, and $C_{init}$ is exactly the set of all those red vertices.
\end{proof}

\begin{lemma}
Applying the algorithm of Definition~\ref{d:construct}, starting with the set $C_{init}$ and applying the process to all considered shortest paths that have at least one vertex inside $B(v_1,2\cdot8^i) \cup B(v_2,2\cdot8^i)$ results in a valid $C^*[i]$.
\end{lemma}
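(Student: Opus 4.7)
The plan is to exhibit an ordering of the considered shortest-path pairs of $G^*[i-1]$ under which the algorithm of Definition~\ref{d:construct}, run from scratch on $G^*$, produces exactly the set $C'^*[i]$ generated by the update procedure. If such an ordering exists, $C'^*[i]$ is \emph{valid} by definition, and $C^*[i] = C'^*[i] \cup V(E^*[\geq i])$ then satisfies the definition of $C[i]$ for $G^*$ once the edge-set correction in step (b) of the update is applied. The proof proceeds by induction on $i$; the $i=-1$ case is trivial, and the inductive hypothesis supplies valid $C^*[i-1]$ and hence valid $G^*[i-1]$.

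First I would partition the considered shortest paths of $G^*[i-1]$ into two groups: (A) those lying strictly outside $B(v_1,2\cdot 8^i)\cup B(v_2,2\cdot 8^i)$, and (B) those touching at least one of the two balls. By Lemma~\ref{l:sameshortest}, group (A) coincides with the set of considered paths of $G[i-1]$ strictly outside the balls, which in the original construction were colored red or black. The ordering I would use first processes the red paths in the same order the original construction of $C[i]$ used, then processes the group-(B) paths in the same order the update procedure uses, and finally processes the black paths of group (A) in any order.

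By the previous lemma, after the red-paths phase the accumulated set equals $C_{init}$. Running Definition~\ref{d:construct} on the group-(B) paths starting from $C_{init}$ is by construction identical to the update procedure and therefore yields $C'^*[i]$. It then remains to verify that when Definition~\ref{d:construct} subsequently examines each black path of group (A) it adds no new vertex, i.e., each such path already contains a vertex of the accumulated set. This is exactly Lemma~\ref{l:outsidecovered}: black paths strictly outside the balls were originally covered by a red vertex (never by a blue vertex, since blue vertices by definition lie inside the balls), and every red vertex lies in $C_{init} \subseteq C'^*[i]$.

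The main obstacle will be confirming that group (B) really is the correct set of paths to process in $G^*[i-1]$ rather than those merely inherited from $G[i-1]$: edge reweightings, insertions, or deletions may create or destroy considered pairs inside the balls. The resolution is that the update procedure is defined to re-examine all considered pairs of $G^*[i-1]$ touching the balls, so group (B) is by definition precisely this set; conversely, any path of $G[i-1]$ whose length or $\maxedge$ shifts out of $[\tfrac{3}{4}\cdot 8^i,\,8^i]$, or which ceases to be a shortest path in $G^*$, simply drops out and is considered by neither side. This, together with the inductive hypothesis that ensures $G^*[i-1]$ itself is valid, shows the update's output is producible by Definition~\ref{d:construct} on $G^*$, completing the inductive step.
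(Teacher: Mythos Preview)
Your proposal is correct and follows essentially the same route as the paper: exhibit an ordering of the considered pairs of $G^*[i-1]$ under which Definition~\ref{d:construct} reproduces the update's output, beginning with the red paths to reach $C_{init}$ (previous lemma) and invoking Lemma~\ref{l:outsidecovered} to show the outside-the-balls paths add nothing. The only cosmetic difference is that the paper orders the phases as red~$\to$~black~$\to$~remaining, whereas you do red~$\to$~group~(B)~$\to$~black-of-(A); since the black group-(A) paths are already covered by $C_{init}$ either way, the two orderings are interchangeable.
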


\begin{proof}
It suffices to notice that the proposed algorithm is equivalent to starting with an empty $C'^*[i]$, then starting with all red shortest paths (whose corresponding vertices are those in $C_{init}$), then with all black shortest paths, and then the remaining ones. Note that pairs that did not exist in the original graph will need to be considered if $v_1$ or $v_2$ is a new vertex or if the weight change results in new candidates inside the ball. 

Notice that processing considered shortest paths outside of the balls is not necessary, as we already know that they are covered in $C_{init}$, and that all path that will add something to $C_{init}$ are path with at least one vertex inside the ball; which is exactly the set of path we consider.
\end{proof}

The astute reader will note that by visiting all considered shortest paths including a vertex of one of the balls, we will re-visit paths that are already covered in $C_{init}$, and that some path that were previously black could now generate a vertex. 

\begin{lemma}\jlabel{l:levelisohofone}
For each $i$, updating $G[i]$, $C[i]$, $C'[i]$, $E[i]$ to get $G^*[i]$, $C^*[i]$, $C'^*[i]$, $E^*[i]$ after insertion, deletion or re-weighting of an edge $e=(v_1,v_2)$ takes $O(1)$ time.
\end{lemma}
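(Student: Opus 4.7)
The proof will combine the preceding coloring lemmas with the local-degree bounds to show that only $O(1)$ considered shortest paths need to be inspected at level $i$, and that each inspection costs $O(1)$.

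First, I would appeal to Lemmas~\ref{l:sameshortest} and~\ref{l:outsidecovered}: after removing the blue vertices from $C'[i]$ to obtain $C_{init}$, every considered shortest path of $G^*[i-1]$ whose vertices lie entirely outside $B(v_1,2\cdot 8^i)\cup B(v_2,2\cdot 8^i)$ is already a considered shortest path of $G[i-1]$ and is already properly covered. Hence the incremental algorithm of Definition~\ref{d:construct}, when restarted from $C_{init}$, only needs to be executed on the considered shortest paths of $G^*[i-1]$ that contain at least one vertex in the union of those two balls.

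Next, I would bound the number of such paths. Each considered path has both endpoints in $C^*[i-1]$, length at most $8^i$, and $\maxedge \le 8^{i-1}$, so any such path having a vertex inside $B(v_j,2\cdot 8^i)$ must have both its endpoints inside $B(v_j,3\cdot 8^i)$. By the inductive hypothesis that $C^*[i-1]$ is a $(8^{i-1},O(1))$ vertex cover (Lemma~\ref{l:ciscover}) together with Corollary~\ref{cor:double}, each ball $B(v_j,3\cdot 8^i)$ contains only $O(1)$ vertices of $C^*[i-1]$. Lemmas~\ref{l:degree} and~\ref{l:shorcutsize} then bound the complexity of $G^*[i-1]$ restricted to these balls by $O(1)$, so there are only $O(1)$ candidate endpoint pairs and thus $O(1)$ considered paths to examine. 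Identifying and removing the blue vertices is likewise $O(1)$, since by definition every blue vertex lies inside $B(v_1,2\cdot 8^i)\cap B(v_2,2\cdot 8^i)$ and is counted in the same bound.

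For each of these $O(1)$ paths, checking whether $C'^*[i]$ already contains a vertex of the path and, if not, inserting the shortcut-vertex closest to its midpoint takes $O(1)$ time, using the $\maxedge$ and midpoint augmentation discussed in Section~\ref{s:computing}. Updating $E^*[i]$ is $O(1)$ because only the single modified edge $e$ can enter or leave $E[i]$. Finally, $G^*[i]$ differs from $G[i]$ only in the $O(1)$ vertices of $C^*[i]\triangle C[i]$ and the edges of $G^*[i]$ incident to them or to $v_1,v_2$; each such vertex has $O(1)$ incident shortcut edges by Lemma~\ref{l:degree}, so recomputing the local portion of $G^*[i]$ is $O(1)$.

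\textbf{Expected obstacle.} The combinatorial count is essentially handed to us by Lemmas~\ref{l:outsidecovered}, \ref{l:ciscover}, \ref{l:degree}, and~\ref{l:shorcutsize}; the real subtlety is algorithmic, namely arguing that every primitive used---listing the $O(1)$ vertices of $C^*[i-1]$ in the two balls, enumerating their $G^*[i-1]$-neighbors, retrieving $\maxedge$ and midpoint data, locating and recoloring blue vertices, and installing new shortcut edges at level $i$---is actually supported in $O(1)$ by the persistent augmentations and cross-level pointers sketched in Section~\ref{s:computing}. Once these primitives are in place, the counting argument above closes out the lemma.
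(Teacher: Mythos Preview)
Your proposal is correct and follows essentially the same line as the paper: both arguments observe that the relevant considered paths have their endpoints in $B(v_1,3\cdot 8^i)\cup B(v_2,3\cdot 8^i)$, invoke the vertex-cover property of $C^*[i-1]$ to bound these to $O(1)$ vertices and hence $O(1)$ pairs, and then argue that rebuilding $G[i]$ is likewise a constant-size local problem via Lemma~\ref{l:sameshortest}. The paper's proof is terser and adds only one concrete detail you left to your ``expected obstacle'': the $O(1)$ vertices are enumerated by growing balls from $v_1$ and $v_2$ in $G[i-1]$ (e.g.\ by a bounded Dijkstra).
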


\begin{proof}
Our updating process consists in performing computations on all pairs of vertices of $G^*[i-1]$ whose shortest paths of length at most $8^i$ contains a vertex of $B(v_1,2\cdot8^i)$ or $B(v_2,2\cdot8^i)$. This implies that the vertices to consider are all in $B(v_1,3\cdot8^i)$ or $B(v_2,3\cdot8^i)$ or the extra vertex $v_1$ or $v_2$ in case the update added a new vertex.

Listing these vertices can be done by growing a ball from $v_1$ and $v_2$ in $G[i-1]$ (e.g. using Dijkstra's algorithm).

As $C[i]$'s are vertex covers, we know that there are $O(1)$ items to consider in these balls, $O(1)$ pairs of vertices to consider, implying that we are working on a problem of constant size. 

Updating $G[i]$ is a similar problem: edges in $G[i]$ are shortest paths in $G[i-1]$ (and therefore by induction in $G$). By Lemma~\ref{l:sameshortest}, all shortest paths with at least one vertex outside the balls are identical in both $G[i]$ and $G^*[i]$, so we only need to re-compute the shortcut graph for pairs of vertices $(v_4,v_4)$ for all $v_4,v_4$ in $C^*[i]$ inside the ball. This is again is a constant size problem. 
\end{proof}

\begin{lemma}
Processing an update takes $O(\log(U+|V|))$ time.
\end{lemma}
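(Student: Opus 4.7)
The plan is to bootstrap Lemma~\ref{l:levelisohofone}, which already does the hard work of showing that each individual level can be repaired in $O(1)$ time after an edge update, and combine it with the height bound of Lemma~\ref{l:height}. The natural approach is a level-by-level induction: process $i = 0, 1, 2, \ldots$ in order, at each step applying the per-level update procedure of Section~\ref{s:mci} to transform $(G[i], C[i], C'[i], E[i])$ into $(G^*[i], C^*[i], C'^*[i], E^*[i])$. The invariant we maintain is that before processing level $i$, level $i-1$ has already been correctly updated to its starred version; this is needed because the construction of $C'[i]$ in Definition~\ref{d:construct} reads from $G[i-1]$, and so the per-level update procedure at level $i$ needs $G^*[i-1]$ in hand before it can run.

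The first step is to identify the levels that can actually change. A single edge update at $e = (v_1, v_2)$ only affects the balls $B(v_1, 2 \cdot 8^i) \cup B(v_2, 2 \cdot 8^i)$ at each level $i$, and no change is possible at levels $i$ where both balls are empty of newly-created considered paths; but for the time bound it suffices to bound the number of levels we ever touch by the maximum nonempty level. By Lemma~\ref{l:height}, the maximum nonempty $C[i]$ satisfies $i \leq h = O(\log(|V| + U))$, so there are only $O(\log(|V| + U))$ levels to process in total. At each such level, Lemma~\ref{l:levelisohofone} gives an $O(1)$ update cost, including the local Dijkstra growth from $v_1$ and $v_2$ in $G^*[i-1]$ used to enumerate the $O(1)$ candidate pairs inside the affected balls.

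Multiplying the two bounds yields total time
\[
\sum_{i=0}^{h} O(1) \; = \; O(h) \; = \; O(\log(|V| + U)),
\]
as claimed. The one subtlety to verify is that the per-level cost really is $O(1)$ and not $O(1)$ amortized with hidden setup: the ball enumeration at level $i$ starts from the already-updated $G^*[i-1]$, which by Lemma~\ref{l:shorcutsize} has only $O(1)$ vertices and edges inside the relevant ball, so Dijkstra's algorithm runs in constant time there. The main obstacle, were there one, would be arguing that updates do not cascade in an unbounded way across levels; but this is already handled by the coloring argument and the locality statements (Lemmas~\ref{l:sameshortest} and~\ref{l:outsidecovered}) in Section~\ref{s:mci}, which confine the recomputation at each level strictly to the two balls around $v_1$ and $v_2$. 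Thus the induction goes through cleanly and the total update cost is $O(\log(|V| + U))$.
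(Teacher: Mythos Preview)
Your proposal is correct and takes essentially the same approach as the paper: combine the per-level $O(1)$ update cost of Lemma~\ref{l:levelisohofone} with the $O(\log(|V|+U))$ height bound of Lemma~\ref{l:height}. The paper's own proof is literally two lines invoking these same two lemmas; your additional discussion of the inductive dependency on $G^*[i-1]$ and the locality arguments is sound but more detailed than what the authors felt necessary.
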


\begin{proof}
By Lemma~\ref{l:height}, we know that the height of the structure is at most $O(\log(U+|V|))$, and for each $i$ we process the elements in $O(1)$ time by Lemma~\ref{l:levelisohofone}.
\end{proof}

\begin{defn}
 Given a graph $G=(V,E)$, and vertices $v_1$ and $v_2$ of $V$, the \emph{funnel graph} $F_G(v_1,v_2)$ is $\bigcup_{i=0}^\infty G[i] \cap ( B(v_1,8^{i+1}) \cup B(v_2,8^{i+1}) )$. 
\end{defn}

Note that in general funnel graphs are not unique, they derive from any valid set of $C[i]$'s.

\begin{lemma} \jlabel{l:funnelisgood}{\bf Distance is preserved in funnel graphs. } For all $G=(V,E), v_1, v_2 \in V$ we have $d_G(v_1,v_2) = d_{F_G}(v_1,v_2)$.
\end{lemma}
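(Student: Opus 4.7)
The plan is to prove both inequalities.

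The lower bound $d_{F_G}(v_1, v_2) \geq d_G(v_1, v_2)$ is immediate from Observation~\ref{o:shortcutfacts}: every edge of $F_G$ belongs to some $G[i]$ and carries weight equal to the shortest-path distance in $G$ between its endpoints, so any walk in $F_G$ of total weight $W$ expands, by replacing each shortcut edge with its underlying shortest $G$-subpath, into a $G$-walk of weight $W$; therefore $d_G(v_1, v_2) \leq W$.

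For the reverse bound I will construct a bitonic walk along $p := p_G(v_1, v_2)$ of total weight $d := d_G(v_1, v_2)$ whose every edge lies in $F_G$. Let $V_i := V(p) \cap C[i]$; since $C[i+1] \subseteq C[i]$ we have $V_0 \supseteq V_1 \supseteq \cdots$. Let $u_i$ and $w_i$ denote the first and last vertex of $V_i$ along $p$, and set $i^* := \max\{i : V_i \neq \emptyset\}$. Two structural facts drive the construction, both proved by the same case analysis combining the vertex-cover property of $C[i]$ (Lemma~\ref{l:ciscover}) with the identity $V(E[\geq i+1]) \subseteq C[i]$ (from $C[i] = C'[i] \cup V(E[\geq i])$): (a) any two consecutive $y, y' \in V_i$ on $p$ are joined by a single shortcut edge in some $G[j]$ with $j \geq i$ whose weight equals the $G$-length of the corresponding subpath of $p$ --- either the subpath has $\maxedge \leq 8^i$ and vertex cover caps its length by $8^i$ (giving a $G[i]$-edge), or it consists of a single long edge of length $\ell > 8^i$ whose endpoints in $V(E[\geq i+1]) \subseteq C[i]$ must coincide with $\{y, y'\}$ themselves (giving a $G[\lceil \log_8 \ell \rceil]$-edge); and (b) every vertex strictly preceding $u_i$ on $p$ lies in $B(v_1, 8^i)$, and the edge entering $u_i$ has length at most $8^{i-1}$, so $d_G(v_1, u_i) \leq 8^i + 8^{i-1}$, with symmetric bounds for $w_i$ relative to $v_2$.

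The bitonic walk concatenates, for $i = 0, \ldots, i^* - 1$, the ascending level-$i$ leg of consecutive-$V_i$ shortcut edges along $p$ from $u_i$ to $u_{i+1}$; the apex leg of consecutive-$V_{i^*}$ edges from $u_{i^*}$ to $w_{i^*}$; and for $i = i^* - 1, \ldots, 0$, the descending level-$i$ leg from $w_{i+1}$ to $w_i$. By (a) the walk traces $p$ end-to-end with total shortcut weight $|p| = d$. For funnel containment, by (b) every vertex of the ascending level-$i$ leg other than its terminal vertex $u_{i+1}$ sits in $B(v_1, 8^{i+1})$, and $u_{i+1}$ itself sits in $B(v_1, 8^{i+2})$; each leg-edge, belonging to some $G[j]$ with $j \geq i$, is accordingly placed in the appropriate funnel ball of $F_G$. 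Symmetric analysis handles the descending side. For the apex, the emptiness of $V_{i^*+1}$ together with the case analysis of (a) applied directly to $p$ forces $d \leq 8^{i^*+1}$ (a longer $p$ would admit either an interior $C[i^*+1]$ vertex via vertex cover or a long-edge endpoint in $C[i^*+1]$), so every vertex of $p$ lies in $B(v_1, 8^{i^*+1})$ and all apex edges belong to $F_G$.

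The hard part will be the careful verification that the closing edge $(y, u_{i+1})$ of each ascending leg lies in $F_G$: the long-edge case of (a) automatically promotes this edge to level $j \geq i+1$, placing both endpoints inside $B(v_1, 8^{i+2}) \subseteq B(v_1, 8^{j+1})$; the short-edge case (where the edge remains at level $i$) requires a more subtle argument exploiting the symmetric companion ball $B(v_2, 8^{i+1})$, together with the bitonic structure of the walk which ensures that whenever the transition vertex $u_{i+1}$ sits at distance just exceeding $8^{i+1}$ from $v_1$, it is simultaneously comfortably inside the $v_2$-side ball at the relevant levels.
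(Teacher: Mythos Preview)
Your overall plan mirrors the paper's: both exhibit a bitonic witness in the funnel that traces $p_G(v_1,v_2)$. The paper does it by top-down recursion (pick the apex level $j$ with $d>8^j$ maximal, find $w\in C[j]$ on $p$, then recurse on $p(v_1,w)$ and $p(w,v_2)$), whereas you assemble it bottom-up from the chains $V_i=V(p)\cap C[i]$. Your structural facts (a) and (b) are correct and cleanly stated.

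The difficulty you flag is real, and the fix you sketch does not work. In the short-edge case the closing edge $(y,u_{i+1})$ lives only in $G[i]$ (you cannot promote it, since $y\notin C[i+1]$), so funnel membership requires $u_{i+1}\in B(v_1,8^{i+1})\cup B(v_2,8^{i+1})$. From (b) you get only $d(v_1,u_{i+1})\le 8^{i+1}+8^i$, and this bound can be attained, putting $u_{i+1}$ just outside $B(v_1,8^{i+1})$. Falling back on the $v_2$-ball would need $d(v_2,u_{i+1})=d-d(v_1,u_{i+1})\le 8^{i+1}$, hence $d\le 2\cdot 8^{i+1}+8^i$; but for any level $i\le i^\ast-2$ you have $d$ of order $8^{i^\ast+1}\gg 8^{i+1}$, so this fails by an arbitrarily large factor. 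The ``bitonic structure'' gives no leverage here: on the ascending side at low levels you are nowhere near $v_2$. (The paper's own argument has an analogous seam---its segment $w'\to w$ at level $j'$ is asserted to lie in $B(v_1,8^{j'+1})$ while only $d(v_1,w)\le 2\cdot 8^{j'+1}$ is guaranteed---so the repair you need is essentially the one the paper also needs: either enlarge the funnel radius by a constant factor, which the downstream size bounds absorb, or restructure the legs so that each transition vertex is provably inside the current-level ball before you commit to the closing edge.)
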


\begin{proof}
By construction, each edge $e=(v_3,v_4)$ in any $G[i]$ has weight $d_{G[i]}(e) = d_G(v_3,v_4)$.

By contradiction, suppose the shortest path $d_{F_G}(v_1,v_2)$ in the funnel is not equal to $d_G(v_1,v_2)$ in $G$. 

Let $j$ be the maximum value such that $d_G(v_1,v_2) > 8^j$. We know that $G[j]$ contains at least one vertex $w \in V(p_G(v_1,v_2))$, as $C[j]$ is a vertex cover: either the shortest path is such that $\maxedge_G(p(v_1,v_2)) \leq 8^j$, or if there is a long edge in the shortest path, its vertices were also included in $G[j]$. Note that $d_G(v_1,w)\leq 8^{j+1}$ and $d_G(v_2,w)\leq 8^{j+1}$, therefore vertex $w$ is in the funnel graph. 

We now proceed by induction, from $v_1$ to $w$; let $j'$ be the maximum value such that $d_G(v_1,w) > 2\cdot8^{j'}$. There exists a vertex $w' \neq w$ on the shortest path from $v_1$ to $w$, that is in $G[j']$. In case where there are multiple candidates, we take the vertex closest to $v_1$, and we are certain that $d_G(v_1,w') \leq 8^{j'}$, and is therefore part of $G[j']$ and of $G[j'-1]$. That is because any subpath of length at least $8^{j'}$ is covered by a vertex, while the path we consider is twice as long. The shortest path from $v_1$ to $w'$ is the same in $G$ as in $F_G(v_1,v_2)$ by induction, while the shortest path from $w'$ to $w$ corresponds to edges in $G[j']$, as $w'$ and $w$ are both vertices of $C[j']$. This can be seen because $w$ is a vertex of $C[j]$, which is a subset of $C[j']$, as $j'\leq j$. The path from $w'$ to $w$ is therefore in the funnel $F_G(v_1,v_2)$, as the funnel contains the shortcut graph between all vertices in the ball $B(v_1,8^{j'+1})$. The base case is when $C[j"]$ contains $v_1$, which will occur at the latest when $8^{j"}$ is smaller than the first edge of the shortest path, implying the shortest path from $v_1$ to $w"$ is also in $C[j"]$. 

The symmetrical case from $w$ to $v_2$ is solved accordingly, concluding that we found in $F_G(v_1,v_2)$ a path with length exactly the same as in $G$.
\end{proof}

\begin{lemma}
Given $G=(V,E), v_1, v_2 \in V$, the size of the funnel graph $F_G(v_1,v_2)$ is $O(\log(U+|V|))$.
\end{lemma}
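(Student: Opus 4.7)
The plan is to bound the contribution of each level $i$ separately and then multiply by the number of nonempty levels. By Lemma~\ref{l:height}, the maximum nonempty level is $O(\log(U+|V|))$, so it suffices to show that the portion $G[i] \cap (B(v_1,8^{i+1}) \cup B(v_2,8^{i+1}))$ has constant complexity (number of vertices plus edges) for every $i$.

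First, I would bound the number of vertices at level $i$ inside $B(v_1,8^{i+1})$ (symmetrically for $v_2$). The vertices of $G[i]$ are exactly the elements of $C[i]$, and by Lemma~\ref{l:ciscover} we know $|C[i] \cap B(v,2\cdot 8^i)| = O(1)$ for every $v \in V$. Since $8^{i+1}/(2\cdot 8^i) = 4$ is a constant, Corollary~\ref{cor:double} lets me cover $B(v_1,8^{i+1})$ with $O(1)$ balls of radius $2\cdot 8^i$, each of which contains $O(1)$ vertices of $C[i]$. Hence $|C[i] \cap B(v_1,8^{i+1})| = O(1)$.

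Next I would bound the edges. By Lemma~\ref{l:degree} applied with $r = 8^i$ and $r' = 8^i$, every vertex of the shortcut graph $G[i] = G(C[i],8^i)$ has degree $O(1)$. Combining with the previous paragraph, the subgraph $G[i] \cap B(v_1,8^{i+1})$ has $O(1)$ vertices each of degree $O(1)$, hence $O(1)$ total complexity; the same holds for $v_2$. Summing over the $O(\log(U+|V|))$ levels that are nonempty (by Lemma~\ref{l:height}) gives the claimed bound of $O(\log(U+|V|))$.

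The main (mild) subtlety is ensuring that the two sources $v_1, v_2$ themselves, which may fail to belong to $C[i]$ for large $i$, do not cause trouble: but the funnel is defined in terms of balls centered at $v_1,v_2$ and intersected with $G[i]$, so only the vertices and edges of $G[i]$ matter, and the arguments above go through verbatim. No other obstacle arises, since everything reduces to the constant-size local structure of $G[i]$ around any vertex of $V$, which is exactly what the vertex-cover sparsity and bounded-degree lemmas already supply.
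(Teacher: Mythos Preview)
Your proposal is correct and follows essentially the same approach as the paper: bound each level's contribution to $O(1)$ via the local sparsity of $C[i]$ and the bounded degree of $G[i]$, then multiply by the $O(\log(U+|V|))$ levels from Lemma~\ref{l:height}. The paper's proof is terser and does not name the individual lemmas you invoke (it could equally have cited Lemma~\ref{l:shorcutsize} directly), but the underlying argument is identical.
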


\begin{proof}
In every $G[i]$ the funnel contains the sub-graph within a constant number of balls of radius $8^{i+1}$. We know that the number of edges and vertices of $G[i]$ inside these balls is a constant in a graph of constant highway dimension. As we consider $O(\log(U+V))$ such graphs, the overall complexity is $O(\log(U+V))$.
\end{proof}

\begin{lemma}\jlabel{l:expandpath}
Given a shortest path in $F_G(v_1,v_2)$, we can find a shortest path in $G=(V,E)$ in $O(\log(U+|V|)+p)$ time, where $p$ is the number of edges in shortest path $p_G(v_1,v_2)$.
\end{lemma}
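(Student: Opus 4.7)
The plan is to augment each shortcut edge at construction time with constant-size pointers for recursive expansion, and then argue that the expansion runs in time linear in its output size plus the funnel-path length. Following the incremental construction of $G[i]$ from $G[i-1]$ outlined in the introduction and Section~\ref{s:computing}, every shortcut edge $(v_1,v_2) \in G[i] \setminus G[i-1]$ is born from the contraction of a single vertex $v \in C[i-1] \setminus C[i]$ lying on $p_G(v_1,v_2)$, and it combines exactly two edges $(v_1,v)$ and $(v,v_2)$ of $G[i-1]$. I would store with each such edge two pointers to these children; edges that simply persist from $G[i-1]$ to $G[i]$ carry a pointer to their lower-level copy. By Lemma~\ref{l:levelisohofone}, the dynamic maintenance only rebuilds constant-sized neighborhoods per level, so these child pointers can be maintained with $O(1)$ overhead per level per update.

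Given the shortest path $\pi$ in $F_G(v_1,v_2)$, which has $O(\log(U+|V|))$ edges by the preceding lemma, I would walk through $\pi$ and recursively replace each edge $e$ by the concatenation of its two children, terminating at edges of $G[0] = G$. By Observation~\ref{o:shortcutfacts}, every shortcut edge of $G[i]$ between $u_1$ and $u_2$ represents precisely the subpath $p_G(u_1,u_2)$, so concatenating the fully expanded segments yields $p_G(v_1,v_2)$ itself, which has $p$ edges by assumption.

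For the running time, each $e \in \pi$ spawns a binary expansion tree (each internal node has two children, each persistence step has one), whose leaves are a contiguous block of $p_G(v_1,v_2)$. Summed over $\pi$, we obtain a binary forest with $O(\log(U+|V|))$ roots and exactly $p$ leaves, hence $O(\log(U+|V|) + p)$ nodes and thus $O(\log(U+|V|) + p)$ total work for the expansion, including writing out the output.

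The only genuinely delicate step is confirming that the child-pointer augmentation is well defined and consistent with the (possibly dynamic) construction: when a shortcut is destroyed and regenerated during an update, its children must be re-established using the freshly valid $G[i-1]$. Because the edges involved all sit inside $B(v_1,3\cdot 8^i) \cup B(v_2,3\cdot 8^i)$ and Lemma~\ref{l:shorcutsize} bounds the complexity of each such neighborhood by $O(1)$, this re-initialization absorbs into the $O(1)$-per-level cost already accounted for, so no extra asymptotic cost is incurred and the claimed $O(\log(U+|V|) + p)$ bound holds.
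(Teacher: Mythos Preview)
Your overall plan---augment each shortcut edge with pointers to what it was built from, then recursively unfold the funnel path---is exactly the paper's level-by-level expansion, just made explicit. The correctness part (that concatenating the expanded pieces yields $p_G(v_1,v_2)$) is fine.

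The gap is in the running-time step. You write that the expansion produces ``a binary forest with $O(\log(U+|V|))$ roots and exactly $p$ leaves, hence $O(\log(U+|V|)+p)$ nodes.'' That inference is valid only for a \emph{full} binary forest, yet your own parenthetical admits unary internal nodes (``each persistence step has one''). Once unary nodes are present, the node count is $2p-|\pi|+U$ where $U$ is the number of persistence steps, and nothing in your argument bounds $U$. Concretely, a shortcut $(u,w)$ with $d(u,w)\le 8^{i-1}$ and no $C[i-1]$-vertex strictly between $u$ and $w$ sits unchanged in both $G[i]$ and $G[i-1]$; if $u,w$ survive in $C[i],C[i+1],\ldots,C[i+k]$ (which the vertex-cover axioms do not preclude), you follow a length-$k$ unary chain before anything splits. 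Summed over the path this could give $\Theta(p\log(U+|V|))$ nodes rather than $O(p+\log(U+|V|))$.

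The fix is cheap and fits your scheme: collapse unary chains at construction time. Instead of pointing a persisting edge to its copy one level down, point it directly to the highest level at which it genuinely splits (or flag it as a $G$-edge if it never does). Then every internal node has two children, the $2p-|\pi|$ bound applies, and $O(\log(U+|V|)+p)$ follows. Maintaining these collapsed pointers costs $O(1)$ per edge inside the constant-size neighbourhoods of Lemma~\ref{l:shorcutsize}, so it is absorbed into the per-level update cost you already cite. The paper's own proof is terse (``either add $\ell$ edges to the path, or go down one level'') and does not spell this out either; your more explicit pointer model simply makes the issue visible.
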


\begin{proof}
We expand to a shortest path in $G$ inductively: for each edge of some $G[i]$ in our structure, we expand it in $O(l)$ time to the corresponding edges in $G[i-1]$, where $l$ is the size of the shortest path corresponding to that edge in $G[i-1]$. Total expansion time is $O(\log(U+|V|) + p)$ where $p$ is the number of edges of the shortest path: at each level $i$ we either add $l$ edges to the path, or we go down one level to $i-1$. The induction ends when each edge corresponds to an edge of $G$.
\end{proof}

\begin{lemma}
Let $G$ be a connected graph with $|V|$ vertices, constant highway dimension, and ratio of smallest-to-largest edge $U$. Our data structure supports
shortest distance queries in time $O(\log (|V|+U))$ and shortest path queries whose result is a path with $p$ edges in time $O(p+\log (|V|+U))$. 
\end{lemma}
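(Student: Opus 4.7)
The plan is to reduce the query to running a standard shortest-path computation on the much smaller funnel graph $F_G(v_1,v_2)$, then (for path queries) uncompress the resulting path level by level. Since Lemma~\ref{l:funnelisgood} guarantees $d_G(v_1,v_2)=d_{F_G}(v_1,v_2)$, any shortest-path algorithm executed on the funnel produces the correct distance, and the funnel has only $O(\log(|V|+U))$ vertices and edges.

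First I would construct $F_G(v_1,v_2)$ on the fly. By Lemma~\ref{l:height} there are $O(\log(|V|+U))$ nonempty levels, and at each level $i$ the relevant portion is the subgraph of $G[i]$ lying inside $B(v_1,8^{i+1}) \cup B(v_2,8^{i+1})$. By Lemma~\ref{l:shorcutsize} (together with the constant-degree bound of Lemma~\ref{l:degree}) this subgraph has constant complexity at each level, so it can be enumerated in constant time per level provided we maintain, for each $v \in V$, pointers to its copies in the various $G[i]$. Thus the entire funnel can be built in $O(\log(|V|+U))$ time. Here one subtlety is that $v_1$ and $v_2$ themselves live in $G[0]$ only; we follow the standard trick of using the nearest covering vertex at each higher level (as in Lemma~\ref{l:connected}), or equivalently maintain during construction of the $C[i]$'s a pointer from each vertex to its ``sponsor'' at the next higher level.

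Next I would run Dijkstra's algorithm on $F_G(v_1,v_2)$ from $v_1$ to $v_2$. Since $|V(F_G)|+|E(F_G)| = O(\log(|V|+U))$, Dijkstra runs in $O(\log(|V|+U) \cdot \log\log(|V|+U))$ time with a standard heap; using the degree bound this is easily $O(\log(|V|+U))$ with a simpler implementation, and in any case dominated by the funnel-construction bound. By Lemma~\ref{l:funnelisgood} the returned length is exactly $d_G(v_1,v_2)$, which settles the distance-query bound.

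For the shortest-path query I invoke Lemma~\ref{l:expandpath}: taking the path returned by Dijkstra on $F_G$ and recursively expanding each shortcut edge through the $G[i]$ hierarchy down to $G[0]=G$ gives the actual shortest path in $O(\log(|V|+U)+p)$ time, where $p$ is the number of edges in the expanded path. The main obstacle is the bookkeeping in building the funnel, namely ensuring that the constant-size ``ball around $v_1$ and $v_2$'' at each level is reachable in constant time; once the preprocessing stores the appropriate inter-level pointers (which cost $O(1)$ additional space per $C[i]$ entry beyond what Lemma~\ref{l:size} already accounts for), the whole query decomposes cleanly into the three steps above.
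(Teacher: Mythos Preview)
Your approach is essentially the same as the paper's: build the funnel, compute a shortest path in it, invoke Lemma~\ref{l:funnelisgood} for correctness, and invoke Lemma~\ref{l:expandpath} to uncompress. The only substantive difference is in \emph{how} the shortest path inside the funnel is computed. You propose running a single Dijkstra on all of $F_G(v_1,v_2)$; the paper instead performs a level-by-level bidirectional search: for each $i$ it extends, in $O(1)$ time, the known distances from $v_1$ (resp.\ $v_2$) to all vertices of $G[i]$ inside $B(v_1,8^{i+1})$ (resp.\ $B(v_2,8^{i+1})$), and maintains the best candidate once the two sides share a vertex.

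This difference matters for the stated bound. A plain Dijkstra on a graph with $n=\Theta(\log(|V|+U))$ vertices costs $\Theta(n\log n)=\Theta(\log(|V|+U)\log\log(|V|+U))$, and your remark that ``using the degree bound this is easily $O(\log(|V|+U))$'' is not justified: bounded degree does not by itself eliminate the priority-queue overhead in Dijkstra. The paper's level-by-level scheme sidesteps the priority queue entirely, because each level contributes a constant-size subproblem and the levels are processed in a fixed order; that is precisely what buys the clean $O(\log(|V|+U))$ bound. Replacing your Dijkstra step with this level-wise propagation closes the gap, and the rest of your argument (funnel construction, inter-level pointers, expansion via Lemma~\ref{l:expandpath}) is fine and matches the paper.
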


\begin{proof}
Let $v_1$ to $v_2$ be the vertices for which we want to compute the shortest path or the distance. We first compute a shortest path between $v_1$ and $v_2$ in $F_G(v_1,v_2)$. Note that in $F_G(v_1,v_2)$, shortest paths are not unique. But by Lemma~\ref{l:funnelisgood}, finding a shortest path in $F_G(v_1,v_2)$ will give a path of the same length as the unique shortest paths in $G$, and by Lemma~\ref{l:expandpath}, we can expand this shortest path in $O(\log(U+V) + p$ time to the unique shortest path $p_G(v_1,v_2)$.

To compute the shortest paths in the funnel $F_G(v_1,v_2)$, we proceed again by iterating on $i$ from $0$ to $O(\log(U+V))$. By induction, we already know the shortest path from $v_1$ to all vertices in $G[i-1]$, and we determine the shortest path from $v_1$ to all vertices in $G[i]$, limited to the ball $B(v_1, 8^{i+1})$. This is a constant-size problem. We do the same for $v_1$ and $v_2$ and as soon as a level contains a common vertex we maintain a best shortest path candidate.  

So in $O(\log(U+V))$ time we find a shortest path in $F_G(v_1,v_2)$ and can determine the weight of the shortest path, while $O(\log(U+V)) + p$ is required if we need to return the path by expanding it.
\end{proof}

\begin{theorem}
Our data structure can be computed in $O(|V| \log U)$ time and takes $O(|V| \log U)$ space. The data structure supports dynamic changes to the graph, edge insertions/deletions/relabeling in time $O(\log (|V|+U))$.
\end{theorem}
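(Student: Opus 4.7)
The plan is to assemble the three claims of the theorem from results already established earlier in the paper, treating each as a one-line consequence of an existing lemma. The space bound is immediate from Lemma~\ref{l:size}, which proves that the total number of items across all $C[i]$, $C'[i]$, $E[i]$, and $G[i]$ is $O(|V| \log U)$; since the data structure consists precisely of these sets together with an $O(1)$-overhead bookkeeping per item (for example the $\maxedge$ annotation on each shortcut edge noted in Section~\ref{s:computing}), the overall space is $O(|V|\log U)$.

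For the construction time, I would appeal to Section~\ref{s:computing}: building $C'[i]$ from $G[i-1]$ is performed by, for each $v \in C[i-1]$, examining the $O(1)$ vertices of $G[i-1]$ lying within distance $8^i$ of $v$, which is $O(1)$ time per vertex by Lemma~\ref{l:shorcutsize} combined with the vertex cover property (Lemma~\ref{l:ciscover}) and the doubling-type bound of Corollary~\ref{cor:double}. The shortcut graph $G[i]$ and its $\maxedge$ augmentation are similarly local. Hence construction takes time proportional to the total size of the structure, which by Lemma~\ref{l:size} is $O(|V|\log U)$.

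For the update time, the structure already given in Section~\ref{s:dynamic} does essentially all of the work. By Lemma~\ref{l:height} there are $O(\log (|V|+U))$ levels, and by Lemma~\ref{l:levelisohofone} the rebuild of $G[i], C[i], C'[i], E[i]$ inside the affected pair of balls $B(v_1,3\cdot 8^i) \cup B(v_2,3\cdot 8^i)$ takes $O(1)$ time per level, since by Lemmas~\ref{l:ciscover} and~\ref{l:shorcutsize} only a constant number of vertices and edges of $G[i-1]$ are involved. Multiplying gives the desired $O(\log (|V|+U))$ update bound.

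No step here is technically hard, since all the real work, namely that updates remain \emph{valid} in the sense of Definition~\ref{d:construct} and that locality of rebuilds suffices, has already been discharged by Lemmas~\ref{l:sameshortest}--\ref{l:levelisohofone}. The only mildly delicate point to emphasize is that the $O(\log(|V|+U))$ level count from Lemma~\ref{l:height} is the correct height to multiply against in the dynamic bound: above the highest nonempty level there is nothing to update, and inside each level the ball-growing procedure described in Lemma~\ref{l:levelisohofone} touches only $O(1)$ vertices because $C[i-1]$ is an $(8^{i-1},O(1))$ vertex cover. With these three observations the theorem follows by direct quotation of the preceding lemmas.
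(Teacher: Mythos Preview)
Your proposal is correct and takes essentially the same approach as the paper: the theorem in the paper is stated without a dedicated proof and is meant to follow directly from the preceding results, namely Lemma~\ref{l:size} for size, Section~\ref{s:computing} for construction time, and the lemma ``Processing an update takes $O(\log(U+|V|))$ time'' (which itself combines Lemmas~\ref{l:height} and~\ref{l:levelisohofone}) for the dynamic bound. You have accurately identified and cited exactly these ingredients.
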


\bibliographystyle{abbrv}
\bibliography{bib}

\end{document}